\def\BibTeX{{\rm B\kern-.05em{\sc i\kern-.025em b}\kern-.08em
    T\kern-.1667em\lower.7ex\hbox{E}\kern-.125emX}}
\theoremstyle{plain}
	\newtheorem{theorem}				 {Theorem}  
	\newtheorem{lemma}			[theorem]{Lemma}
	\newtheorem{proposition}	[theorem]{Proposition}
\theoremstyle{definition}
	\newtheorem{definition}		[theorem]{Definition}
    \newtheorem{remark}			[theorem]{Remark}
    \newtheorem{construction}	[theorem]{Construction}
    \newtheorem{convention}	[theorem]{Convention}
    \newtheorem{example}		[theorem]{Example}
    \newtheorem{notation}		[theorem]{Notation}
\def\ps@IEEEtitlepagestyle{%
    \def\@oddhead{}%
    \def\@evenhead{}%
}
\begin{document}

\title{Proof Compression via Subatomic Logic and Guarded Substitutions \\
\thanks{}
}

\author{\IEEEauthorblockN{Victoria Barrett}
\IEEEauthorblockA{\textit{Partout} \\
\textit{Inria Saclay}\\
Ile-de-France, France \\
}
\and
\IEEEauthorblockN{Alessio Guglielmi}
\IEEEauthorblockA{\textit{Department of Computer Science} \\
\textit{University of Bath}\\
Bath, United Kingdom \\
}
\and
\IEEEauthorblockN{Benjamin Ralph}
\IEEEauthorblockA{\textit{Department of Computer Science} \\
\textit{University of Bath}\\
Bath, United Kingdom \\
}
\and
\IEEEauthorblockN{Lutz Stra\ss burger}
\IEEEauthorblockA{\textit{Partout} \\
\textit{Inria Saclay}\\
Ile-de-France, France \\
}
}

\maketitle

\begin{abstract}
Subatomic logic is a recent innovation in structural proof theory where atoms are no longer the smallest entity in a logical formula, but are instead treated as binary connectives. As a consequence, we can give a subatomic proof system for propositional classical logic such that all derivations are strictly linear: no inference step deletes or adds information, even units.
\\
In this paper, we introduce a powerful new proof compression mechanism that we call \emph{guarded substitutions}, a variant of explicit substitutions, which substitute only guarded occurrences of a free variable, instead of all free occurrences. This allows us to construct ``superpositions'' of derivations, which simultaneously represent multiple subderivations. We show that a subatomic proof system with guarded substitution can p-simulate a Frege system with substitution, and moreover, the cut-rule is not required to do so.
\end{abstract}

This work was supported by the Inria Exploratory Action IMPROOF and PHC Sophie Germain ``Formal Verification for Large Language Models"



\vlupdate{\Tor}
\vlupdate{\Ben}
\vlupdate{\ben}

\renewcommand{\KSASG}{\TBLS}

\section{Introduction}
\label{sec:intro}

The affinity between structural proof theory and the mathematical foundations of computation establishes mechanisms of proof compression as a natural object of study. The most prominent proof compression mechanism is the \emph{cut rule}~\cite{gentzen:35:I,gentzen:35:II}, which allows lemmas to be reused in a proof. And indeed, eliminating the cuts from a proof can lead to a non-elementary blow-up~\cite{boolos:dont}. In propositional logic, this blow-up is still exponential~\cite{troelstra:schwichtenberg:00}.

In the area of proof complexity, a distinct subfield of proof theory, other mechanisms of proof compression have been studied, the most notable ones being substitution~\cite{CookReck:79:The-Rela:mf} and Tseitin extension~\cite{tseitin:68}. In both cases, proof compression is achieved by permitting propositional variables to replace arbitrary subformulae in a proof. The cost of eliminating either rule from a proof is an exponential blow-up (and it is not known whether this can be done more efficiently).

Given their conceptual similarity, it is perhaps not so surprising that, in the presence of cut, extension and substitution are \emph{p-equivalent}, i.e., a system with cut and substitution can polynomially simulate one with cut and Tseitin extension~\cite{CookReck:79:The-Rela:mf}, and vice versa~\cite{krajicek:pudlak:89}. This has been shown in the setting of Frege systems, which always contain the cut because of the presence of the \emph{modus ponens} rule. Moreover, even in the absence of cut, it has also been demonstrated that the two proof compression mechanisms of substitution and Tseitin extension are p-equivalent~\cite{str:ext-cut,nov:str:subst}. However, it is not known if cut-free systems with extension or substitution can p-simulate systems with cut and extension or substitution.

This leaves us with two powerful proof compression mechanisms: (i) the cut and (ii) extension/substitution. It is an open question whether one of them subsumes the other, or whether they are truly independent.

In this paper we give a surprising answer to this question. We observe that both proof compression mechanisms are subsumed by a more general one, namely, \emph{guarded substitution}, which is a variant of explicit substitutions~\cite{curien:explicit}.

To see how this is possible, let us first observe that the proof compression of cut and substitution comes from the ability of reusing information. And the most basic inference rules that deal with duplication of information are the rules of \emph{contraction} and \emph{cocontraction} shown below:\footnote{In fact, the combination of contraction and cocontraction form another mechanism of proof compression, when cut is absent. This has been investigated in~\cite{jerabek:cut-free-cos,BGGP:qp-lmcs,Das:13:The-Pige:fk}. However, we will not go into further details of this, as we present in this paper a cut-free system that can p-simulate cut.}
\begin{equation}
  \label{eq:con-cocon}
  \odn{A\vlor A}{\cD}A{}
  \qquand
  \odn{A}\cU {A\vlan A}{}
\end{equation}

A move to a deep inference proof system~\cite{gug:str:01,brunnler:tiu:01}, which allow for finer granularity in the design of the inference rules, enables the restriction of these rules to their atomic form, shown below, provided there is an additional purely linear inference rule in the system~\cite{brunnler:tiu:01}.
This rule, called \emph{medial}, is shown in the middle below:
\begin{equation}
  \label{eq:acon-acocon}
  \hskip-1em
  \odn{a\vlor a}{\acD}a{}
  \hskip2em
  \odn{\vls[(A.B).(C.D)]}{\med}{\vls([A.C].[B.D])}{}
  \hskip2em
  \odn{a}\acU {a\vlan a}{}
  \hskip1em
\end{equation}
This leads to the proof system $\SKS$~\cite{brunnler:tiu:01}, consisting only of atomic rules (like $\acD$ and $\acU$ above) that change the size of a formula, and purely linear rules (like $\med$ above) that only rearrange subformulae without changing the size.

\begin{figure*}[!t]
  $$
  \vliiinf{\cut}{}{\sqn{\Gamma,\Delta}}{\sqn{\Gamma,A}}{}{\sqn{\cneg A,\Delta}}
  \qualto
  \odn{\,A\vlan \cneg A\,}{\iU}{\zer}{}
  \qualto
  \odn{\,a\vlan \cneg a\,}{\aiU}{\zer}{}
  \qualto
  \odn{((\zer\ba\one). (\one\ba\zer))}{}{((\zer.\one)\ba(\one.\zer))}{}
  \qualto
  \odn{((A\ba B). (C\ba D))}{\sUp\ba\vlan}{((A.C)\ba(B.D))}{}
  $$
  \caption{Evolution of $\cut$ from the sequent calculus via deep inference to subatomic proof theory}
  \label{fig:cut}
\end{figure*}

The next insight comes from the concept of \emph{subatomic proof theory}~\cite{Aler:16:A-Study-:hc,BarrGugl:22:A-Subato:yr,BarrGuglRalph:25:strlin} which splits the atoms into binary connectives. A formula ``$A\ba B$'' is then interpreted as ``if $a$ is false then $A$, and if $a$ is true then $B$''. In this setting, we can write $a$ as $\zer\ba\one$ and its dual $\cneg a$ as $\one\ba\zer$. The two rules of $\acD$ and $\acU$  from above now become:
\begin{equation}
  \label{eq:acon-acocon-unit}
  \!
    \odn{\vls[(\zer\ba\one).(\zer\ba\one)]}{}{\vls([\zer.\zer]\ba[\one.\one])}{}
    \qquand
    \odn{\vls[(\zer.\zer)\ba(\one.\one)]}{}{\vls((\zer\ba\one).(\zer\ba\one))}{}{}
    \quad
\end{equation}
which are just instances of the general rules
\begin{equation}
  \label{eq:acon-acocon-gen}
  \!
    \odn{\vls[(A \ba B).(C \ba D)]}{\sDn\vlor\ba}{\vls([A . C]\ba[B. D])}{}
    \quand
    \odn{\vls[(A. B)\ba(C.D)]}{\sUp\vlan\ba}{\vls((A\ba C).(B\ba D))}{}
    \;
\end{equation}
which have the same shape as the medial rule in~\eqref{eq:acon-acocon} above.
The same principle also applies to the cut rule: in Figure~\ref{fig:cut} we see the evolution of the cut, starting from the sequent calculus, first becoming an atomic rule, and finally a subatomic rule. In this way we can obtain a proof system for propositional logic in which \emph{all} inference rules are linear rewriting steps~\cite{Aler:16:A-Study-:hc}, except for the rules dealing with the units, for example
\begin{equation}
  \label{eq:units}
  \hskip-1em
  \odn A{}{A\vlan \one}{}
  \qqquad
  \odn A{}{A\vlor \zer}{}
  \qqquad
  \odn {A\vlan \one}{}A{}
  \qqquad
  \odn {A\vlor \zer}{}A{}
  \quad
\end{equation}
Even though these are rather trivial inference steps in a standard proof system, they are the only ones that break a rigidly defined notion of linearity in a subatomic proof system. To achieve what is called a strictly linear system, these can be eliminated, but the naive way of doing so leads to an exponential blow-up of the size of the proof~\cite{Barr:24:thesis,BarrGuglRalph:25:strlin}. However, by allowing explicit substitutions as constructors in formulae and derivations, the size of the proof expansion can be reduced to a polynomial \cite{BarrGuglRalph:25:strlin}. The resulting system (called $\KDTS$ in~\cite{BarrGuglRalph:25:strlin}) is p-equivalent to $\SKS$ (and therefore also to standard Frege systems without extension or substitution) and still contains the cut (in its linear form, as shown on the right in Figure~\ref{fig:cut}). And, unsurprisingly, eliminating the cut from this system leads to an exponential blow-up~\cite{BarrGuglRalph:25:strlin}. Furthermore, it is unknown whether these explicit substitutions can in any way polynomially simulate Tseitin extension or substitution in Frege systems.

In other words, in terms of proof complexity, nothing has been gained so far with respect to what we said at the beginning of this introduction; even in the unfamiliar climes of a subatomic proof system with explicit substitutions, it appears that both cut and extension/substitution operate as independent means of compressing proofs.

This motivates our paper, in which we introduce \emph{guarded substitutions} that offer us a distinct new means of proof compression.

Guarded substitutions are a variant of explicit substitution that, instead of representing the replacement of every occurrence of a free variable, only select a certain subset of the free occurrences of the given variable. To make this formal, we assign to every variable occurrence a \emph{range}, and to every guarded substitution a \emph{guard}, and the substitution can apply to the variable occurrence, if the guard is in the range. For example, the formula $\esbs A x ((x\vlan x)\vlor(y\vlan x))$ with an ordinary explicit substitution becomes $(A\vlan A)\vlor(y\vlan A)$ when the substitution is carried out, whereas the formula $\gsbs A x p ((x\rng{q,p}\vlan x\rng r)\vlor(y\rng p\vlan x\rng {p,r}))$ becomes $(A\vlan x\rng r)\vlor(y\rng p\vlan A)$ when the substitution is carried out, because only the first and the last $x$ have the guard $p$ in its range.

With this additional construct, the new system, that we call $\TBLS$, can polynomially simulate Frege systems with substitution. Furthermore, we can do this with the cut-free fragment $\TBLScf$. In other words, guarded substitution can polynomially simulate the cut, as well as Tseitin extension and substitution in Frege systems. And surprisingly, this can even be done when we only allow the units $\zer$ and $\one$ in the place of the formula to be substituted (the $A$ in the example above).

Since Frege systems with substitution are known to be the most potent propositional proof systems, in the sense that they can polynomially simulate every other proof system for classical propositional logic, our system $\TBLS$ has now the same property, with the additional feature that every inference step is linear. There is never an inference that adds or deletes information.\looseness=-1

This becomes possible because a subatomic derivation can be interpreted as a \emph{superposition} of standard derivations. Even though this idea has been around since the beginning of subatomic proof theory~\cite{guglielmi:AG8}, only our guarded substitutions allow to make use of this. We can see such a  superposition as executing several similar shaped derivations in parallel, and the guarded substitutions can be used to read out the correct results. This allows for the factorisation of lemmas with different inputs, essentially performing the work of both modus ponens and the Frege substitution rule.


This paper is organized as follows.
\begin{itemize}
\item In the next section we present our system~$\TBLS$. We introduce the principles of subatomic proof theory, as presented in~\cite{Aler:16:A-Study-:hc,BarrGugl:22:A-Subato:yr}, together with the explicit substitutions of~\cite{BarrGuglRalph:25:strlin}, and our guarded substitutions.
\item Then, in Section~\ref{sec:sub-to-standard}, we show how subatomic proof theory is related to standard proof theory.
\item In Section~\ref{sec:cut}, we discuss the cut and cut elimination in subatomic systems, and we formalize the notion of \emph{superposition}.
\item In Section~\ref{sec:frege}, we formally introduce Frege systems with substitution.
\item And finally, in Section~\ref{sec:psim}, we prove our main result, showing that the cut-free fragment of our system~$\TBLS$ polynomially simulates Frege systems with substitution.
\item We end with a discussion in Section~\ref{sec:guarded} on the possible use of guarded substitutions beyond the restricted setting of this paper.
\end{itemize}

\section{A subatomic system with guarded substitution}
\label{sec:prelims}

\subsection{Subatomic formulae}

We begin our exposition by recalling the basic notions of subatomic proof theory~\cite{AlerGugl:18:Subatomi:ty,BarrGugl:22:A-Subato:yr} with explicit substitution, as introduced in~\cite{BarrGuglRalph:25:strlin}. At the same time, we introduce \emph{guarded substitutions}, for which we need a notion of guards that will be attached to variables.

\begin{definition}
We start from the following countable sets:
\begin{itemize}
\item \defn{Variables}: $\Vbls=\set{x,y,z,w,\dots}$,
\item \defn{Atoms}: $\Atms=\set{a,b,c,\dots}$,
\item \defn{Connectives}: $\Conn=\set{\vlan,\vlor}$,
\item \defn{Units}: $\Unts=\set{\zer,\one}$,
\item \defn{Guards}: $\Grds=\set{p,q,r,\ldots}$.
\end{itemize}
A set of guards is called a \defn{range}, and we denote the set of ranges with $\Rngs=\powerset{\Grds}=\set{R,S,\ldots}$.

The set $\Fmla=\set{A,B,C,\ldots}$ of \defn{(subatomic) formulae} is defined as follows:
\[\begin{array}{rcl}
  \Fmla &\grammareq & \Vbls^\Rngs \mid \Unts \mid \Fmla \mathbin{\Atms} \Fmla \mid \Fmla \mathbin{\Conn} \Fmla \mid \esbs\Fmla \Vbls \Fmla \mid \gsbs\Unts \Vbls \Grds \Fmla 
\end{array}\]
On the set $\Fmla$ we define the \defn{negation function} $\cneg\cdot\colon\Fmla\to \Fmla$ inductively as follows:
\[\begin{array}{r@{\;=\;}l@{\qquad\;}r@{\;=\;}l@{\qquad\;}r@{\;\;}l}
  \ccneg {x\rng R}&x\rng R
  &
  \ccneg {A\ba B} &\ccneg A \ba \ccneg B
  &
  \\[1ex]
  \ccneg \zer &\one
  &
  \ccneg {A\vlan B} &\ccneg A \vlor \ccneg B
  &
  \\[1ex]
  \ccneg \one&\zer
  &
  \ccneg {A\vlor B} &\ccneg A \vlan \ccneg B
\end{array}\]
It follows that $\ccneg{\ccneg B}=B$.
\end{definition}

\begin{notation}
Every variable occurrence in a formula $A$ is equipped with a range
$R$, which is a set of guards. We omit the range when it is
empty, writing $x$ instead of $x^\emptyset$. 
\end{notation}

\begin{notation}
In the subatomic setting, atoms do not
behave as atomic formulae or nullary connectives, but as
binary connectives. We make this distinction by writing atomic connectives in bold: $\ba,\bb,\bc,\dots$. The intended semantics of a formula $A\ba B$ is $A$ whenever $a$ is false and $B$ whenever $a$ is true. This means we can recover the standard (i.e. non-subatomic) atom $a$ with $\zer\ba\one$ and
its negation $\cneg a$ with $\one\ba\zer$.\footnote{We will make this formal in the next section.}
\end{notation}

\begin{definition}
The term $\esbs A y$ is an \defn{explicit substitution} and the term $\gsbs A y p$ is a \defn{guarded (explicit) substitution}. If $A$ and $B$ are formulae then both $\esbs A yB$ and $\gsbs A y pB$ are formulae, and in both cases, every occurrence of $y$ in $B$ is \defn{bound}. Every variable occurrence that is not bound is \defn{free}. We denote the set of free variables of a formula $A$ by $\fv{A}$. We do not consider the substitution variable $y$ in $\esbs A y$ or in $\gsbs A y p$ to be an occurrence of the variable $y$, so it is neither free nor bound. A \defn{flat} formula contains neither explicit nor guarded substitutions.  
\end{definition}

For guarded substitutions, we restrict the formula $A$ that is substituted to be a unit: this restriction still provides us with a system strong enough for all our present purposes, while simplifying many definitions and proofs.

\begin{remark}
We write
$\range A S$ to denote the formula obtained from $A$ by adding $S$ to the range of every free variable occurrence in $A$, i.e., $\range {x^R}S= x^{R\cup S}$. We often omit the braces to improve readability, for example writing $\range A r$ for $\range A {\set r}$ and $x\rng {p,r}$ for $x^{\set {p, r}}$.
\end{remark}

\begin{definition}
    We write $\asbs A y$ to denote an \defn{actual substitution}, that is, $\asbs A y B$ denotes the formula that is obtained from $B$, by replacing every free variable occurrence $y^S$ in $B$ by $\range A S$. We use $\agsbs A y p B$ to denote a \defn{guarded actual substitution}, that is, $\agsbs A y p B$ denotes the formula that is obtained from $B$ where every free occurrence $y^S$ in $B$ is replaced by $A$ whenever $p\in S$, and remains untouched whenever $p\not\in S$.\footnote{As this will only be used when $A$ is a unit, we substitute just with $A$ instead of $\range A S$ as in the general case.} 
\end{definition}

\begin{notation}\label{n:simultaneaous} 
We use lower case Greek letters $\sigma,\nu,\tau,\rho,\eta,\mu$ to denote substitutions. Furthermore, if $B$ is a formula, and $V=\set{v_1,v_2,\ldots,v_n}$ is a finite set of variables, and $A_1,\ldots,A_n$ is a list of formulae, then $\acsnoun {A_i}{v_i} V B$ stands for the \emph{simultaneous} actual substitution of all free occurrences of the variables $v_1,\ldots, v_n$ in $B$ for $A_1,\ldots,A_n$, respectively. In the same way we can define $\exsnoun {A_i}{v_i} V B$ as the simultaneous explicit substitution of $\esbs{A_1}{v_1}\ldots\esbs{A_n}{v_n}$. But note that in the case of explicit substitution we need that none of the $v_i$ occurs freely in any of the $A_j$.

We often write $\acs {A_i}{v_i} B B$ or $\exs {A_i}{v_i} B B$ to indicate the simultaneous substitutions of all the free variables of $B$. We often encounter guarded substitutions in pairs, and we write $\gsbss A x B y p$ as abbreviation for $\gsbs A x p \gsbs B y p$. Finally,  we allow for empty explicit and guarded substitutions, denoting them both by $\emptysub$, and then we have $\emptysub A=A$ for all $A$.
\end{notation}

\begin{definition}
  The \defn{size} of a formula $A$, denoted by $\sizeof{A}$ is defined inductively as follows:
  For $A\in\Unts$ we have $\sizeof{A}=1$, for $A^R\in\Vbls^\Rngs$ we have $\sizeof{A}=1+\sizeof{R}$ , for $\conna\in\Atms\cup\Conn$ we have $\sizeof{B\conna C}=\sizeof{B}+\sizeof{C}$, and we have $\sizeof{\esbs B y C}=\sizeof{\gsbs B y p C}=\sizeof{B}+\sizeof{C}$.
\end{definition}

\begin{example}
Let $A\ideq \gsbs \zer x r \esbs{x \ba \zer}{y}(y\rng{p,q}\vlor y\rng{p,r})$, where $x,y\in\Vbls$ and $p,q,r\in\Grds$, and $\ba\in\Atms$. The variable occurrence~$x$ is bound by the guarded substitution $\gsbs\zer x r$ and the variable occurrences $y\rng{p,q}$ and~$y\rng{p,r}$ are bound by the explicit substitution $\esbs {x \ba \zer}{y}$, and we have $\sizeof{A} = 9$.
\end{example}


\subsection{Subatomic derivations}

In this paper, we use the open deduction framework of deep inference \cite{GuglGundPari::A-Proof-:fk}, in which derivations are built with the same constructors that are used for formulae with additional constructors for vertical composition.

\begin{definition}\label{def:subatomic-pre-derivations}
The set $\SADerv=\set{\dera,\derb,\derc,\dots}$ of \defn{(subatomic) pre-derivations}, is defined by the grammar:
\[
\begin{array}{rcl}         
  \SADerv & \grammareq & \Vbls^\Rngs\mid\Unts  \mid    \SADerv \mathbin{\Atms} \SADerv \mid    \SADerv \mathbin{\Conn} \SADerv \mid
  \esbs{\SADerv}\Vbls\SADerv \mid    \\ \\[-1ex]
  &&  
            \gsbs\Unts \Vbls \Grds \SADerv \;\; \left\vert\ \;\; \vls\odn{\;\SADerv\;}{}{\SADerv}{} 
            \;\;\!
            \right. \left\vert\;\;
            \vls\odq{\;\SADerv\;}{}{\SADerv}{}
            \right.           
\end{array}
\]
The notions of bound and free variables carry over from formulae to derivations. In particular, in $\esbs \dera y \derb$, every occurrence of $y$ in $\derb$ is \defn{bound}, the substitution variable $y$ in $\esbs \dera y$ is not considered an occurrence of the variable $y$, and every variable occurrence that is not bound is \defn{free}. Similarly, a derivation is \defn{flat} if it contains no explicit and no guarded substitution. We say that a derivation is \defn{open} if it contains no units, i.e., every leaf in the syntax tree is a free variable that can be substituted into; and we say that a derivation is \defn{closed} if it contains no free variables, i.e., no leaf in the syntax tree can be substituted into.
\end{definition}

We note that because guarded substitutions always substitute units, no open derivation can contain guarded substitutions.

We can define the two maps \defn{premise} and \defn{conclusion}, ${\pr},{\cn}\colon\SADerv\to\Fmla$, and the two maps \defn{width} and \defn{height}, ${\w},{\h}\colon\SADerv\to\Nat$ in the standard way.\footnote{The detailed definition is in the appendix.}

\begin{example}\label{ex:prederi}
  Below are two pre-derivations:
  \begin{equation}
    \label{eq:deri1}
    \dera\ideq\vls \odn{
      ((x\ba z).
      \odq{\odframefalse\esbs{\odn{(w.x)}{\mix}{[w.x]}{}}{y}(y\rng p\ba y\rng q) }
          {}{([w\rng p.x\rng p]\ba[w\rng q.x\rng q] )}{})}
                   {\sUp{\vlan}\ba}{((x.[w\rng p.x\rng p]) \ba (z.[w\rng q.x\rng q]))}{} 
  \end{equation}
  and\padjust{-2ex}
  \begin{equation}
        \label{eq:deri2}
        \derb\ideq \odn
                   {((\zer\ba\one).
                     (\odn{(w.\one)}{\mix}{[w.\one]}{}\ba
                     \odn{(w.\zer)}{\mix}{[w.\zer]}{} ) )}
                   {\sUp{\vlan}\ba}{((\zer.[w.\one]) \ba (\one.[w.\zer]))}{}
  \end{equation}
  The first, $\dera$, is open, because it contains no units. And the second, $\derb$, is flat because it contains no explicit or guarded substitutions. The reader can ignore for now the labels at the vertical compositions, which will be defined below.

The width of $\dera$ is $10$ and its height is $4$. Its premise is $(x\ba z)\vlan \esbs{w\vlan x}{y}(y\ba y)$ and its conclusion is $(x\vlan (w\vlor x))\ba (z\vlan (w\vlor x)) $. The width of $\derb$ is $6$ and its height is $3$. Its premise is $(\zer\ba\one)\vlan ((w\vlan \one)\ba(w\vlan \zer))$ and its conclusion is $(\zer\vlan(w\vlor\one))\ba(\one\vlan (w\vlor\zer))$.
\end{example}

In order to define derivations, i.e.\ correct pre-derivations, we need to put restrictions on each mode of vertical composition. The first, $\downsmash{\odn{\;\SADerv\;}{}{\SADerv}{}}$ is the familiar \defn{composition by inference}, the application of an inference rule. The second, $\odq{\;\SADerv\;}{}{\SADerv}{}$, is called \defn{composition by expansion}, representing the application of an explicit or guarded substitution.

In the remainder of this section, we define this formally. As said before, in the subatomic setting, atoms are not the smallest entity of a derivation but behave like binary connectives. This is made explicit in the definition of the inference rules.

\begin{definition}
On the set $\Atms\cup\Conn$, we define the two operations \defn{down-saturation} $\sDn\cdot$ and \defn{up-saturation} $\sUp\cdot$ as follows: $\sDn{\vlan}=\sDn{\vlor}=\vlor$ and $\sUp{\vlan}=\sUp{\vlor}=\vlan$ and $\sUp{\ba}=\sDn{\ba}=\ba$ for every $\ba \in \Atms$. These operations can be extended to formulae: $\sDn A$ and $\sUp A$ are obtained from a formula $A$ by saturating all connectives down or up respectively, and leaving all variables and units unchanged. 
\end{definition}

\begin{definition}
A \defn{subatomic rule scheme} is any of the following:
\begin{equation}\label{eq:schemes}
\begin{array}{c@{\qquad}c}
          \odn{((x\mathbin{\sUp\alp}y)\bet(z\alp w))}
{\sUp\alp\bet}{((x\bet              z)\alp(y\bet w))}{}&
          \odn{((x\alp y)\bet(z\mathbin{\sUp\alp}w))}
{\bet\sUp\alp}{((x\bet z)\alp(y\bet              w))}{}\\
\\[-1ex]
          \odn{((x\bet              y)\alp(z\bet w))}
{\sDn\alp\bet}{\vlsmallbrackets
               ((x\mathbin{\sDn\alp}z)\bet(y\alp w))}{}&
          \odn{((x\bet y)\alp(z\bet             w))}
{\bet\sDn\alp}{\vlsmallbrackets
               ((x\alp z)\bet(y\mathbin{\sDn\alp}w))}{}\\
\end{array}
\end{equation}
where $\alp,\bet\in\Atms\cup\Conn$.
We call any inference rule generated by this scheme a \defn{subatomic rule}.

We also use in this paper the following inference rules, which we call the \defn{algebraic rules}:
\begin{equation}\label{eq:extra-rules}
  \begin{array}{c}
    \odn{(x.[y.z])}{\sw}{[(x.y).z]}{}
    \qquad
    \odn{[x.[y.z]]}{\assD }{[[x.y].z]}{}
    \qquad
    \odn{(x.(y.z))}{\assU }{((x.y).z)}{}
    \\\\[-1ex]
    \odn{x \vlan y}{\mix}{x \vlor y}{}
    \qquad
    \odn{x \vlor y}{\sDn\com}{y \vlor x}{}
    \qquad
    \odn{x \vlan y}{\sUp\com}{y \vlan x}{}
  \end{array}
\end{equation}
We define the proof system $\TBLS$ to be the set of all subatomic and algebraic rules.\footnote{The reader familiar with subatomic systems might observe that the algebraic rules are not needed. We will come back to this point at the end of the next section.}
\end{definition}

\begin{remark}
  Note that $\TBLS$ is a \defn{strictly linear system}, which does not have any inference rules that can create or delete information, even units \cite{BarrGuglRalph:25:strlin}.
\end{remark}

\begin{definition}
  An \defn{instance} of an \defn{inference rule} $\downsmash{\odn{A}{\rr}{B}{}}$ is given by ${\odn{\eta A}{\rr}{\eta B}{}}$, where $\eta$ is an arbitrary actual substitution. A \defn{proof system} $\Ssys$ is a set of inference rules. We say that the composition by inference of two pre-derivations $\dera$ and $\derb$, 
  $\vls\odn{\;\dera\;}{}\derb{}\,$ is \defn{correct (in $\Ssys$)} if
  $\odn{\;\cn\dera\;}{\rr}{\pr\derb}{}$ is an instance of an inference rule in $\rr \in \Ssys$, and we write this as $\upsmash{\vls\odn{\;\dera\;}{\rr}\derb{}}$.
\end{definition}

\begin{definition}
  On formulas, we define two binary \defn{expansion relations} $\expeq$ on and $\expto$.
  \begin{equation}
    \label{eq:expes}
    \begin{array}{rcll}
       \sigma(A \alp B)&\expeq&\sigma A \alp \sigma B\\[.5ex]
      \esbs C x \esbs A y B &\expeq& \esbs {\esbs C x A} y \esbs C x B\\[.5ex]
      \esbs A y D&\expeq&\asbs A y D\\[1.3ex]
    \end{array}
  \end{equation}
  where $\sigma$ is a guarded or explicit substitution, and $\alp\in\Atms\cup\Conn$, and $A,B,C$ are arbitrary formulas, and $D$ is a formula that does not contain any guarded substitutions. 
\begin{equation}
    \label{eq:expgs}
    \begin{array}{rcll}
      \gsbs U y p F&\expto&\agsbs U y p F\\[.5ex]
      \gsbs U y p \esbs F x E&\expto&\esbs{\agsbs U y p F}x E \\[.5ex]
      \gsbs U y p E&\expto&E 
    \end{array}
  \end{equation}
where $U\in\Unts$, and $F$ is a flat formula, and $E$ is a formula such that either $y$ does nor occur in $E$ (bound or free), or such that every variable occurrence in $E$ (bound or free) has a range $R$ with $p\notin R$.
\end{definition}

\begin{definition}
  Let $\dera$ and $\derb$ be pre-derivations. Their composition by expansion $\vls\odq{\;\dera\;}{}\derb{}\,$ is \defn{correct} iff
  one of the following
  $$\cn\dera\expto\pr \derb
  \quor
  \cn\dera\expeq\pr \derb
  \quor
  \pr \derb\expeq\cn\dera
  \quadfs
  $$
\end{definition}
\begin{definition}
  Let $\Ssys$ be a set of inference rules. A \defn{derivation in $\Ssys$} (or \defn{$\Ssys$ derivation}) is a pre-derivation in which every composition by inference is correct in $\Ssys$ and every composition by expansion is correct. We write $\odv{A}{\Phi}{B}{\Ssys}$ for a derivation $\Phi$ in $\Ssys$, where $\Ssys$ may be given as a rule scheme such as $\sUp\alp\bet$. If $\Ssys$ is empty then every vertical composition in $\Phi$ is by expansion; in this case we may instead annotate the derivation with its height.
\end{definition}

\begin{example}
  The two pre-derivations $\dera$ and $\derb$ given in Example~\ref{ex:prederi} are derivations in $\TBLS$, with the correct rule names indicated at the compositions by inference.
\end{example}
\begin{remark}
Our main contribution in this paper is to show that a strictly linear system equipped with explicit substitution of derivations and guarded substitutions of units can share and redistribute logical material very efficiently.
 
For this reason, the definition of the expansion relations $\expto$ and $\expeq$ is more restrictive than necessary. For example, guarded substitutions can only be applied downwards in a derivation, and have a weaker distributive relation than explicit substitutions, which is all that we need to prove the main result in this paper. This restrictive definition makes it easy to see that the time complexity of checking correctness of a derivation is quadratic in the size of the derivation. Usually, in deep inference systems this is linear, because all rules are given either linearly or atomically. In $\TBLS$ it is linear for almost all cases, except for the composition by expansion of $\esbs A y D \expeq \asbs A y D$, which is quadratic because we additionally need to check that the ranges of the free occurrences of $y $ in $D$ are correctly inherited by the free variables in $A$. If we allow guarded substitutions to freely apply to formulae containing explicit substitutions, then it is not immediately obvious how to check correctness in polynomial time.
\end{remark}

\subsection{Subatomic Merge}

We end this section by presenting two crucial lemmas that are needed
in many subatomic constructions, and that we call \emph{merge lemmas}.
The first is a generalization of the \emph{merge contraction} of~\cite{Ralp:19:Modular-:yq} and of the generalized \emph{medial map} of~\cite{lamarche:gap}.

\begin{lemma}[restate = MergeMedial, name = ]
  \label{lemma:mergemedial}
Let $A$ be an open formula with $\fv A=\set{x_1,\ldots,x_n}$. Then for all $\alp\in\Atms\cup\Conn$ and $B_1,C_1,\ldots,B_n,C_n\in\Fmla$, there exist derivations
  \[
\odv{\acs{B_i}{x_i}{A}A \mathbin{\sDn\alp} 
     \acs{C_i}{x_i}{A}A }
{}{\acs{B_i\mathbin{\sDn\alp} C_i}{x_i}{A}A}{\sDn\alp\bet,\bet\sDn\alp} \qquad\quad
\odv{\acs{B_i\mathbin{\sUp\alp} C_i}{x_i}{A}A}
{}{\acs{B_i}{x_i}{A}A \mathbin{\sUp\alp} 
     \acs{C_i}{x_i}{A}A}{\sUp\alp\bet,\bet\sUp\alp}
\]
The height of each of these derivations is $O(\size A)$ and their width is $O(\sum (\size {B_i} + \size {C_i}))$.
\end{lemma}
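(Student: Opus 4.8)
The plan is to construct the first (downward) derivation by structural induction on the open formula $A$, and then to obtain the second (upward) one by duality. Indeed, negating all formulas and reversing a derivation sends an instance of the scheme $\sDn\alp\bet$ to an instance of the up-saturated scheme $\sUp\alp\bet$ (and $\bet\sDn\alp$ to $\bet\sUp\alp$); moreover $\ccneg A$ is again open with $\fv{\ccneg A}=\fv A$, negation distributes over substitutions so that $\ccneg{\acs{B_i}{x_i}{A}{A}}=\acs{\ccneg{B_i}}{x_i}{\ccneg{A}}{\ccneg{A}}$, and $\ccneg{B\mathbin{\sDn\alp}C}$ is $\ccneg B$ joined by the De~Morgan dual of $\sDn\alp$, which is up-saturated. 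Hence the second statement for $A$ and $\alp$ is literally the negation of the first for $\ccneg A$ and the dual of $\alp$, and it suffices to build, for every open $A$ and every $\alp\in\Atms\cup\Conn$, a derivation
\[
\odv{\acs{B_i}{x_i}{A}{A}\mathbin{\sDn\alp}\acs{C_i}{x_i}{A}{A}}{}{\acs{B_i\mathbin{\sDn\alp}C_i}{x_i}{A}{A}}{\sDn\alp\bet,\bet\sDn\alp}
\]
of height $O(\size A)$ and width $O(\sum(\size{B_i}+\size{C_i}))$.

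For the base case $A$ is a variable $x_j^R$, and then both $\acs{B_i}{x_i}{A}{A}\mathbin{\sDn\alp}\acs{C_i}{x_i}{A}{A}$ and $\acs{B_i\mathbin{\sDn\alp}C_i}{x_i}{A}{A}$ are the same formula, namely $\range{B_j}{R}\mathbin{\sDn\alp}\range{C_j}{R}$, so the trivial height-$0$ derivation on that formula works. The interesting case is $A=A_1\mathbin\bet A_2$ with $\bet\in\Atms\cup\Conn$; write $\sigma$, $\tau$, $\rho$ for the actual substitutions replacing each $x_i$ by $B_i$, by $C_i$, and by $B_i\mathbin{\sDn\alp}C_i$, respectively, so that the premise to be transformed is $(\sigma A_1\mathbin\bet\sigma A_2)\mathbin{\sDn\alp}(\tau A_1\mathbin\bet\tau A_2)$. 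The key observation is that its root connective is the \emph{already saturated} $\sDn\alp$, which is a fixed point of $\sDn\cdot$; consequently a single instance of the scheme $\sDn\alp\bet$, taken so that its outer connective is $\sDn\alp$ and its inner connective is $\bet$, rewrites this premise to $(\sigma A_1\mathbin{\sDn\alp}\tau A_1)\mathbin\bet(\sigma A_2\mathbin{\sDn\alp}\tau A_2)$ --- in which \emph{both} $\bet$-branches are saturated with $\sDn\alp$, i.e., exactly in the shape the induction hypothesis requires. Applying the induction hypothesis to $A_1$ and to $A_2$, composing the two resulting derivations with $\bet$, and placing the single rule instance above on top, yields a derivation whose conclusion is $\rho A_1\mathbin\bet\rho A_2=\acs{B_i\mathbin{\sDn\alp}C_i}{x_i}{A}{A}$. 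Since the $\bet$-composition runs the two sub-derivations in parallel, the height satisfies $\h(A)=1+\max(\h(A_1),\h(A_2))$, which a routine induction bounds by the depth of $A$, hence $O(\size A)$; and every formula occurring in the derivation is a copy of $A$ with each leaf occurrence of $x_i$ filled by a subformula of size $O(\size{B_i}+\size{C_i})$, giving the claimed width bound.

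The remaining constructor, $A=\esbs C x D$ (recall that open formulas contain no guarded substitutions, so these are the only remaining cases), is the main obstacle. After renaming the bound variable $x$ so that it is fresh for all the $B_i$ and $C_i$, we have $\acs{B_i}{x_i}{A}{A}=\esbs{\sigma C}{x}{\sigma D}$ and the target is $\esbs{\rho C}{x}{\rho D}$. The approach I would take is to recurse into $D$ and into $C$ separately: using the expansion relation $\expeq$ (which includes distribution of a substitution over a connective and composition of explicit substitutions) together with renaming of bound variables, first rewrite $\esbs{\sigma C}{x}{\sigma D}\mathbin{\sDn\alp}\esbs{\tau C}{x}{\tau D}$ into a form in which the two bodies $\sigma D$ and $\tau D$ sit under one shared pair of explicit substitutions, with the two bound copies of $x$ renamed to fresh distinct variables; then apply the induction hypothesis to $D$, treating $x$ as one of its free variables whose two substituends are those fresh variables, so that the two copies of $D$ get merged; then wrap the resulting derivation in the shared explicit substitutions, propagating them through each inference step via the expansion rules; and finally apply the induction hypothesis to $C$, in parallel at every position where $x$ occurs in $D$, to turn the resulting $\sigma C\mathbin{\sDn\alp}\tau C$ at each such position into $\rho C$, so that the conclusion becomes $\esbs{\rho C}{x}{\rho D}$. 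The crucial point --- and the reason to repackage via $\expeq$ rather than simply carrying the explicit substitutions out --- is that the substitutions must be kept unexpanded, since eager expansion can cause an exponential blow-up and would destroy the $O(\size A)$ height bound; and the parallel (deep-inference) composition used when merging $D$ and when re-merging at the occurrences of $x$ is what keeps the heights of the sub-derivations for $C$ and $D$ from accumulating along the structure of $A$. The delicate part is the bookkeeping of ranges and of free variables under the repeated use of the expansion equalities, and verifying that every intermediate vertical composition is indeed a correct composition by inference or by expansion.
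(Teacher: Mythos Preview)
Your overall strategy---structural induction on $A$, with the connective case handled by one instance of $\sDn\alp\bet$ followed by parallel recursion into the two subformulas---matches the paper exactly, and your observation that an open $A$ cannot contain guarded substitutions is correct (the paper's mention of that case is vacuous).

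The gap is in the explicit-substitution case $A=\esbs{C}{x}{D}$. You say you will ``apply the induction hypothesis to $C$, in parallel at every position where $x$ occurs in $D$, to turn the resulting $\sigma C\mathbin{\sDn\alp}\tau C$ at each such position into $\rho C$''. That means you have actualised the substitution: the intermediate formula is $\asbs{\sigma C\mathbin{\sDn\alp}\tau C}{x}{\rho''D}$, whose width is multiplied by the number of occurrences of $x$ in $D$, with no uniform bound on that number. You then also cannot reach the conclusion $\esbs{\rho C}{x}{\rho D}$ that you state without a further re-compression, and the blow-up has already occurred in between. The paper's fix exploits the open-deduction feature that an explicit substitution may wrap a \emph{derivation}: after the IH on $D$ and a constant-height refactoring via $\expeq$ (the paper isolates this as a separate auxiliary lemma) one obtains $\esbs{\sigma C\mathbin{\sDn\alp}\tau C}{x}(\ldots)$, and the IH-derivation for $C$ is then placed \emph{once}, in the substitution slot, yielding $\esbs{\rho C}{x}(\ldots)$ directly with no expansion.

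A minor point: your duality argument for the second derivation is a reasonable shortcut, but the paper never defines negation on formulas containing explicit substitutions, nor checks that the expansion relation $\expeq$ is stable under it; you would need to supply that. The paper avoids the issue by simply rerunning the same induction with the up-rules in place of the down-rules.
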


The second merge lemma is a generalization of the construction that is usually used to show that in a well-designed system (sequent calculus or deep inference) the general identity axiom can be reduced to an atomic form.

\begin{lemma}[restate = MergeSwitch, name = ]
  \label{lemma:mergeswitch}
Let $A$ be an open formula with $\fv A=\set{x_1,\ldots,x_n}$. Then for all $\alp\in\Atms\cup\Conn$ and $B_1,C_1,\ldots,B_n,C_n\in\Fmla$, there exist derivations 
\[
\odv{\acs{B_i \alp C_i }{x_i}{A} \sUp A}
{}{\acs{B_i}{x_i}{A} A \alp
  \acs{C_i }{x_i}{A} \cneg A}{\sDn\bet\alp,\alp\sDn\bet}
\qquad\quad
\odv{\acs{B_i}{x_i}{A} A \alp
   \acs{C_i }{x_i}{A} \cneg A}
{}{\acs{B_i \alp C_i }{x_i}{A} \sDn A}
{\sUp\bet\alp,\alp\sUp\bet}
\]
The height of each of these derivations is $O(\size A)$ and their width is $O(\sum (\size {B_i} + \size {C_i}))$.
\end{lemma}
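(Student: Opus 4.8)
The plan is to prove the first derivation by induction on the structure of $A$ and to obtain the second by the dual induction; in both cases the derivation we build mirrors the syntax tree of $A$, with a single subatomic inference at each internal node and a trivial split at each leaf. Since $A$ is open it contains no units, so the base case is $A=x^R$ for a variable $x=x_i$ and a range $R$. Here $\sUp A=\sDn A=\cneg A=x^R$, and since adding $R$ to all ranges distributes over connectives, the three formulas $\acs{B_i\alp C_i}{x_i}{A}\sUp A$, $\acs{B_i}{x_i}{A}A\alp\acs{C_i}{x_i}{A}\cneg A$ and $\acs{B_i\alp C_i}{x_i}{A}\sDn A$ are all equal to $\range{B_i}{R}\alp\range{C_i}{R}$; so the identity derivation, of height $0$, does the job in both cases.

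For the inductive step of the first derivation, write $A=A_1\mathbin{\kappa}A_2$ with $\kappa\in\Atms\cup\Conn$; then $\sUp A=\sUp{A_1}\mathbin{\sUp\kappa}\sUp{A_2}$ and, writing $\cneg\kappa$ for the De Morgan dual of $\kappa$, $\cneg A=\cneg{A_1}\mathbin{\cneg\kappa}\cneg{A_2}$. The free variables of $A_1$ and $A_2$ may overlap, but this is harmless, because every occurrence of a given $x_i$, wherever it sits, is replaced by the same formula; so the simultaneous substitution decomposes along the top connective of $A$. Applying the induction hypothesis (with the same $\alp$) below the two immediate subformula occurrences and placing the two resulting derivations side by side under $\sUp\kappa$, we reach $(P_1\alp Q_1)\mathbin{\sUp\kappa}(P_2\alp Q_2)$, where $P_j=\acs{B_i}{x_i}{A_j}A_j$ and $Q_j=\acs{C_i}{x_i}{A_j}\cneg{A_j}$. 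It remains to pass from this to $\acs{B_i}{x_i}{A}A\alp\acs{C_i}{x_i}{A}\cneg A=(P_1\mathbin{\kappa}P_2)\alp(Q_1\mathbin{\cneg\kappa}Q_2)$, and the crux of the argument is that this is a single instance of the subatomic rule scheme, with $\bet:=\sUp\kappa$: one uses $\sDn\bet\alp$ when $\kappa$ is an atom or $\kappa=\vlor$, and $\alp\sDn\bet$ when $\kappa=\vlan$. That these are genuine instances is exactly what the identities $\sDn\vlan=\vlor$, $\sDn\vlor=\vlor$, $\sUp\vlan=\sUp\vlor=\vlan$ and $\sDn\ba=\sUp\ba=\ba$, together with the De Morgan clauses for $\cneg\cdot$, deliver: the up/down-saturation built into the scheme already absorbs the De Morgan sign flip, which is why this construction never touches the algebraic rules of $\TBLS$. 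Since the two inductive subderivations are composed in parallel rather than chained, the height of the composite is one more than the larger of the two subheights, so the total height is $O(\size A)$; the only formula material beyond the skeleton of $A$ is introduced at the leaves, which gives the stated width bound $O(\sum(\size{B_i}+\size{C_i}))$.

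The second derivation is handled by the dual induction. The base case is as above, and at $A=A_1\mathbin{\kappa}A_2$ one applies the induction hypothesis below the two subformula occurrences on the conclusion side and closes with a single up-subatomic rule, namely $\sUp\bet\alp$ when $\kappa$ is an atom or $\kappa=\vlan$ and $\alp\sUp\bet$ when $\kappa=\vlor$, now with $\bet:=\sDn\kappa$; the verification is the mirror image of the one above, and the complexity bounds are obtained in the same way.

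I do not anticipate a genuine obstacle here: this is a careful structural induction rather than a deep argument. The two points that have to be written out with care are, first, the bookkeeping of the simultaneous actual substitutions and the ranges --- checking in particular that repeated and shared variable occurrences are inherited consistently in the base case and through the connective constructor --- and, second, the case analysis confirming that, for each shape of $\kappa$, the closing step really is an instance of one of the four subatomic rule shapes, with the correct assignment of $\alp$, $\bet$ and the four metavariables. The latter is the only place where anything nontrivial happens, and there it is the interplay between $\sUp\cdot$, $\sDn\cdot$ and De Morgan duality that does all the real work.
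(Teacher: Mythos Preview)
Your induction on the connective structure is correct and matches the paper's argument for that case, including the case split on which of the two down-rules fires. However, your proof has a genuine gap: an \emph{open} formula in this paper is one containing no units, but the formula grammar still allows the constructor $\esbs{D}{y}E$, so an open $A$ may contain explicit substitutions. (It cannot contain guarded substitutions, since those always substitute a unit; but explicit substitutions are perfectly legal.) Your structural induction only covers the base case $A=x^R$ and the step $A=A_1\mathbin{\kappa}A_2$ with $\kappa\in\Atms\cup\Conn$, so it is incomplete.

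The paper handles $A=\esbs{D}{y}E$ explicitly, and this is where the auxiliary Lemma~\ref{lemma:mergeconstructor} is needed: one first pushes the induction hypothesis through on $D$, then uses the expansion relation to refactor the two explicit substitutions $\esbs{\acs{B_i}{x_i}{D}D}{y}$ and $\esbs{\acs{C_i}{x_i}{D}\cneg D}{y}$ into a single pair with fresh bound names $y_1,y_2$, applies the induction hypothesis on $E$ (now with the substitution variable among its free variables), and refactors back. This costs a bounded number of expansion steps at each such node, which is why the height remains $O(\size A)$. Without this case your argument does not cover the statement as written; you would either need to add it, or to restrict the lemma to \emph{flat} open formulae and check that this suffices for every later use.
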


\section{From subatomic proofs to standard proofs}
\label{sec:sub-to-standard}

\subsection{Interpretation map}

For translating subatomic formulae and derivations into standard formulae and derivations, we use \defn{interpretation maps}. These are partial functions from subatomic derivations to standard derivations. They are partial, as not every subatomic formula has a standard correspondence. 

While interpretation functions also exist for linear logics \cite{Aler:16:A-Study-:hc}, in this paper we are solely concerned with classical propositional logic and therefore provide an interpretation function from the subatomic system with guarded substitutions $\TBLS$, defined in the previous section, to the standard deep-inference proof system $\SKS$~\cite{brunnler:tiu:01,Brun:04:Deep-Inf:rq,GuglGundPari::A-Proof-:fk}. 
 
\begin{definition}
  We first define the set $\cneg\Atms=\set{\cneg a,\cneg b,\cneg c,\ldots}$ of \defn{negated atoms},
  and an in operation $\cneg\cdot\colon\Atms\to\cneg\Atms$, mapping each atom $a\in\Atms$ to $\cneg a\in\cneg\Atms$.

  We can now define the set $\StFmla=\set{A,B,C,\ldots}$ of \defn{standard formulae},  and the set
  $\StDerv$ of \defn{standard pre-derivations} as follows:
\[\begin{array}{rcl}
  \StFmla &\grammareq &  \Unts \mid \Atms \mid \cneg\Atms \mid \StFmla \mathbin{\Conn} \StFmla 
\\[1ex]
  \StDerv &\grammareq &  \Unts \mid \Atms \mid \cneg\Atms \mid \StDerv \mathbin{\Conn} \StDerv\; \left\vert\;\; \odn{\;\StDerv\;}{}{\StDerv}{}\right.
\end{array}\]

The mapping $\cneg\cdot$ is extended to all standard formulae by defining $\cneg\cdot\colon\StFmla\to\StFmla$ inductively as
$$
\cneg{\cneg a}=a
\quad
\cneg\zer=\one
\quad
\cneg\one=\zer
\quad
\ccneg{A\vlan B}=\cneg A\vlor \cneg B
\quad
\ccneg{A\vlor B}=\cneg A\vlan \cneg B
$$
\end{definition}

We inherit all the applicable definitions from Section \ref{sec:prelims}, noting that in standard systems atoms behave like units, and not like connectives, and we have neither explicit nor guarded substitutions and therefore no composition by expansion.

We define \defn{standard derivations} to be correct standard pre-derivations according to the inference rules of $\SKS$~\cite{brunnler:tiu:01,Brun:04:Deep-Inf:rq,GuglGundPari::A-Proof-:fk},\footnote{For convenience, we give the inference rules of $\SKS$ in the appendix.}  which is sound and complete for classical propositional logic.\looseness=-1

\begin{definition}\label{def:synch}
  Let $\Phi$ and $\Psi$ be two standard derivations such that $\cn\Phi\ideq\pr\Psi$. A \defn{synchronal composition} of $\Phi$ and $\Psi$ is a standard derivation, denoted $\downsmash{\vls\odt\Phi{}\Psi{}}$, that can inductively obtained as follows: %

  \noindent
  \begin{tabular}{@{}l@{\;}l@{\quad}l@{\;}l}
    if $\Phi \in \StFmla$,&
    then 
    $\vls\odt\Phi{}\Psi{}\ideq\Psi$; &
    if $\Psi \in \StFmla$, &
    then 
    $\vls\odt\Phi{}\Psi{}\ideq\Phi$;\\
    if $\Phi\ideq\odn{\Phi_1}{}{\Phi_2}{}$, &
    then 
    $\od{\odt\Phi{}\Psi{}};
    \ideq
    \od{\odi{\odh{\Phi_1}}{}{\odt{\Phi_2}{}\Psi{}}{}}$; &
    if $\Psi\ideq\odn{\Psi_1}{}{\Psi_2}{}$, &then 
    $\od{\odt\Phi{}\Psi{}}
    \ideq
    \od{\odi{\odh{\odt\Phi{}{\Psi_1}{}}}{}{\Psi_2}{}}$;
  \end{tabular}

if $\begin{cases}
  \vls\Phi\ideq(\Phi_1\alp\Phi_2),\\
  \text{and }
\vls\Psi\ideq(\Psi_1\alp\Psi_2), \\
\text{with } \alp \in \Conn
\end{cases}$ then
$\vls
\odt{\Phi}{}{\Psi}{}
\ideq\vls{\odt{\Phi_1}{}{\Psi_1}{}\alp\odt{\Phi_2}{}{\Psi_2}{}}$\;.
\vadjust{\vskip6pt}
\end{definition}
Note that this operation is not necessarily uniquely defined. But whenever we use this notation, any synchronal composition of $\dera$ and $\derb$ can be used.

\begin{definition}\label{defn:intof}
  We define the (partial) \defn{interpretation map} $\intof{\,\cdot\,}\colon \SADerv \to \StDerv$ from subatomic derivations to standard derivations as indicated in Figure~\ref{fig:interpretation}. On the top of that figure we show the case analysis for the base case $\intof{\Phi\ba\Psi}$.
\end{definition}
\begin{figure}
  \def\pif{\phantom{if~}}
  \begin{align*}
  \intof {\Phi \ba \Psi} :~&
\mbox{\small%
    \begin{tabular}{|c|c|c|c|}   
    \hline 
    & $\intof\Psi = \one$ & $\intof\Psi = \zer$ & \makecell{$\intof{\pr\Psi} = \zer$ \\ $\intof{\cn\Psi} = \one$} \\
         \hline
       $\intof\Phi = \one$  & $\one$ & $\cneg{a}$ & \odn{\cneg{a}}{\awU}{\one}{} \\  \hline
       $\intof\Phi = \zer$  & $a$ & $\zer$ & \odn{\zer}{\awD}{a}{} \\  \hline
        \makecell{$\intof{\pr\Phi} = \zer$ \\ $\intof{\cn\Phi} = \one$} & \odn{a}{\awU}{\one}{} & \odn{\zer}{\awD}{\cneg a}{} & \odn{\zer}{\mix}{\one}{} 
        \\  \hline
    \end{tabular}
} \\ \\
    \intof{\one} &= \one, \\
    \intof{\zer} &= \zer, 
    \\
  \intof{\Phi \vlan \Psi} &=
  \left\{ \begin{array}{@{\,}ll}
    \mbox{undefined}&\mbox{if one of $\intof\Phi$, $\intof\Psi$ is undefined}\\
    \intof{\Phi} & \mbox{if~} \intof\Psi =\one\\
    \intof{\Psi} & \mbox{if~} \intof\Phi =\one\\
    \zer &  \mbox{if~}\intof\Phi =\zer \mbox{~and~} \intof\Psi =\zer \\
    \intof{\Phi} \vlan \intof{\Psi} &\mbox{otherwise}  
  \end{array}\right.
  \\
  \intof{\Phi \vlor \Psi} &=
  \left\{ \begin{array}{@{\,}ll}
    \mbox{undefined}&\mbox{if one of $\intof\Phi$, $\intof\Psi$ is undefined}\\
    \intof{\Phi} & \mbox{if~} \intof\Psi =\zer\\
    \intof{\Psi} & \mbox{if~} \intof\Phi =\zer\\
    \one &  \mbox{if~}\intof\Phi =\one \mbox{~and~} \intof\Psi =\one \\
    \intof{\Phi} \vlor \intof{\Psi} &\mbox{otherwise}  
  \end{array}\right.
  \\
  \intof{\esbs\Phi x \Psi} &=
  \left\{ \begin{array}{@{\,}ll}
     \intof{\asbs \Phi x \Psi} &\mbox{if~$\intof\Phi$ and $\intof\Psi$ are both defined} \\
     \mbox{undefined}&\mbox{otherwise}
  \end{array}\right.
  \\
  \intof{\gsbs\Phi x p\Psi} &=
  \left\{ \begin{array}{@{\,}ll}
     \intof{\agsbs \Phi x p \Psi} &\mbox{if~$\intof\Phi$ and $\intof\Psi$ are both defined} \\
     \mbox{undefined}&\mbox{otherwise}
  \end{array}\right.
  \\
  \intof{\odn\Phi{}\Psi{}} &=
  \left\{ \begin{array}{@{\,}ll}
    \downsmash{\odn{\intof{\Phi}}{}{\intof{\Psi}}{}} &\mbox{if~$\intof\Phi$ and $\intof\Psi$ are both defined} \\
    & \mbox{\pif and $\intof{\cn \Phi}\neq\intof{\pr \Psi}$}\\
     \downsmash{\odt{\intof{\Phi}}{}{\intof{\Psi}}{}} &\mbox{if~$\intof\Phi$ and $\intof\Psi$ are both defined} \\
    & \mbox{\pif and $\intof{\cn \Phi}=\intof{\pr \Psi}$}\\
         \mbox{undefined}&\mbox{otherwise}
  \end{array}\right.
  \\
  \intof{\odq\Phi{}\Psi{}} &=
  \left\{ \begin{array}{@{\,}ll}
     \downsmash{\odt{\intof{\Phi}}{}{\intof{\Psi}}{}} &\mbox{if~$\intof\Phi$ and $\intof\Psi$ are both defined} \\
    & \mbox{\pif and $\intof{\cn \Phi}=\intof{\pr \Psi}$}\\
         \mbox{undefined}&\mbox{otherwise}
  \end{array}\right.
\end{align*}
  \caption{The interpretation map $\intof{\,\cdot\,}\colon \SADerv \to \StDerv$}
 \label{fig:interpretation}
\end{figure}

\begin{proposition}[restate = PropDeri, name = ]
If $\dera$ is a subatomic derivation in $\TBLS$ and $\intof\dera$ is defined, then $\intof\dera$ is a standard derivation in~$\SKS$.  
\end{proposition}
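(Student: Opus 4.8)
\emph{Plan.} I would argue by structural induction on the subatomic derivation $\dera$, following the standard pattern for subatomic interpretation maps \cite{Aler:16:A-Study-:hc,BarrGugl:22:A-Subato:yr} and extending it with the cases for explicit and guarded substitutions (as in \cite{BarrGuglRalph:25:strlin}) and for the algebraic rules of $\TBLS$. Two preliminary facts drive the induction. First, \emph{$\intof{\,\cdot\,}$ commutes with premise and conclusion}: whenever $\intof\dera$ is defined, $\intof{\pr\dera}$ and $\intof{\cn\dera}$ are defined, $\pr{\intof\dera}=\intof{\pr\dera}$ and $\cn{\intof\dera}=\intof{\cn\dera}$. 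This is a routine separate induction reading off the clauses of Figure~\ref{fig:interpretation}; in particular, if $\intof\Phi$ equals a unit then so do $\intof{\pr\Phi}$ and $\intof{\cn\Phi}$. Second, \emph{the expansion relations are sound for $\intof{\,\cdot\,}$}: if $A\expeq B$ or $A\expto B$ in either direction and both interpretations are defined, then $\intof A=\intof B$. For $\expeq$ this amounts to the three clauses of \eqref{eq:expes} — distributing a substitution over a connective, composing two explicit substitutions, and collapsing an explicit substitution to an actual one — each checked by unfolding the definition of $\intof{\,\cdot\,}$ (using that $\intof{\esbs\Phi x\Psi}$ is by definition $\intof{\asbs\Phi x\Psi}$); for $\expto$ it is the three clauses of \eqref{eq:expgs}, where the side condition that the guard $p$ does not occur in any range appearing in $E$ is precisely what makes the guarded actual substitution inert. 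Granting the second fact, every composition by expansion $\vls\odq{\;\derb\;}{}\derc{}$ satisfies $\intof{\cn\derb}=\intof{\pr\derc}$, hence $\cn{\intof\derb}=\pr{\intof\derc}$ by the first fact, so $\intof{\odq\derb{}\derc{}}$ is a synchronal composition $\odt{\intof\derb}{}{\intof\derc}{}$, which is a standard derivation by Definition~\ref{def:synch} and the induction hypothesis on $\derb$ and $\derc$.

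\emph{The routine cases.} If $\dera$ is a unit there is nothing to prove; if it is a ranged variable then $\intof\dera$ is undefined. If $\dera=\Phi\vlan\Psi$ or $\Phi\vlor\Psi$, each branch of the case split in Figure~\ref{fig:interpretation} yields $\intof\Phi$, $\intof\Psi$, a unit, or $\intof\Phi\vlan\intof\Psi$ (resp.\ $\intof\Phi\vlor\intof\Psi$); invoking the induction hypothesis on $\Phi$ and $\Psi$, the commutation fact, and — when a subderivation collapses to a unit — the observation that its endpoints then interpret to that same unit, one checks each branch to be a standard derivation with the expected endpoints. For $\dera=\esbs\Phi x\Psi$ and $\dera=\gsbs\Phi x p\Psi$ the interpretation is by definition that of $\asbs\Phi x\Psi$, resp.\ $\agsbs\Phi x p\Psi$, already covered. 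The base case $\dera=\Phi\ba\Psi$ is settled entry by entry of the $4\times 4$ table at the top of Figure~\ref{fig:interpretation}: for each entry one verifies that the displayed object — a unit, $a$, $\cneg a$, or one of the inference steps $\awU$, $\awD$, $\mix$ — is a correct $\SKS$ derivation whose premise and conclusion agree with those obtained by feeding $\pr\Phi,\cn\Phi$ and $\pr\Psi,\cn\Psi$ back through the same table (via the commutation fact), with $\intof\Phi$ and $\intof\Psi$, standard by the induction hypothesis, pasted in where the entry requires them.

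\emph{The main case, and the obstacle.} The heart of the proof is $\dera=\vls\odn{\;\derb\;}{\rr}\derc{}$, a composition by inference with $\rr$ a rule of $\TBLS$. If $\intof{\cn\derb}=\intof{\pr\derc}$ we land once more on a synchronal composition. Otherwise $\intof\dera=\odn{\intof\derb}{}{\intof\derc}{}$ and, by the commutation fact, it remains to show that $\odn{\intof{\cn\derb}}{}{\intof{\pr\derc}}{}$ is a legal step (or short derivation) of $\SKS$; equivalently, that the $\intof{\,\cdot\,}$-image of every instance of every rule of $\TBLS$ is either an identity of standard formulae or a valid $\SKS$ derivation. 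This forces a case analysis over the four subatomic rule schemes of \eqref{eq:schemes} and the six algebraic rules of \eqref{eq:extra-rules}, subdividing in each case on whether the parameters $\alp$ and $\bet$ are $\vlan$, $\vlor$, or an atom, and on what the schematic variables $x,y,z,w$ are instantiated to (each interpreting to $\zer$, to $\one$, or to a genuine formula, with definedness of any $\ba$-subformula forcing its two arguments to interpret to units). One finds uniformly that each subatomic instance collapses to an identity or to an instance of medial, contraction, cocontraction, switch, or a weakening/unit rule of $\SKS$, and each algebraic instance to an instance of associativity, commutativity, weakening, or an identity — the non-trivial off-diagonal entries $\awU$, $\awD$, $\mix$ of the interpretation table being exactly what makes the cases involving atoms match up. I expect this exhaustive case analysis to be the main obstacle: it is mechanical rather than conceptually hard, but long, and the real work lies in organising it so that the subatomic schemes are dispatched once and for all, as in \cite{Aler:16:A-Study-:hc,BarrGugl:22:A-Subato:yr,BarrGuglRalph:25:strlin}, with the new algebraic and guarded-substitution cases layered on top.
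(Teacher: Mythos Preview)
Your proposal is correct and follows the same approach as the paper; the paper's proof is a one-line sketch that states only the key observation (each $\TBLS$ rule instance with defined interpretation yields an $\SKS$ rule instance), leaving the structural induction, the commutation of $\intof{\,\cdot\,}$ with $\pr/\cn$, and the soundness of the expansion relations entirely implicit, whereas you spell all of these out. Two minor remarks: the interpretation table for $\Phi\ba\Psi$ is $3\times 3$, not $4\times 4$; and your handling of $\esbs\Phi x\Psi$ by appealing to the ``already covered'' case $\asbs\Phi x\Psi$ is not literally a structural subcase, so a cleaner induction measure (e.g.\ number of explicit/guarded substitutions lexicographically before structural size) would be needed to make that step formal.
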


\begin{example}
  Let us look again at the two derivations in Example \ref{ex:prederi}. Both $\intof\dera$ and $\intof\derb$ are undefined. However, we have\padjust{-2ex}
  $$
  \intof{\esbs{0}{w}\Psi} = \odn{(a.\odn{0}{\awD}{-a}{})}{\aiU}{0}{}
  \quand\!
  \intof{\esbs{1}{w}\Psi} = \odn{(a.\odn{-a}{\awU}{1}{})}{\fequiv}{a}{}
  $$
\end{example}

\begin{remark}
  It should be noted that the size of $\intof{\Phi}$ is not polynomially bounded by $\Phi$ since all explicit and guarded substitutions are converted to actual substitutions.
  Nevertheless in some situations we need not perform every substitution in order to determine the interpretation, so that it may be computed in polynomial time. An example is given by the Proposition below.
  \end{remark}

\begin{proposition}[restate = IntPoly, name = ]
    \label{prop:intpoly}
Let $A$ be an open formula with free variables $\set{v_1,\dots,v_n}$ and let $B_1,\dots,B_n$ be formulae. If $\intof{B_1}=\dots=\intof{B_n}=U\in\Unts$ then $\intof{\exs{B_i}{v_i}{A}A}=U$. 
\end{proposition}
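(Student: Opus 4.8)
The plan is to argue by structural induction on the open formula $A$. Since $A$ is open it contains no units, and hence --- as noted right after Definition~\ref{def:subatomic-pre-derivations} --- no guarded substitutions, so exactly three cases arise: $A$ is a single variable, $A=A_1\conna A_2$ with $\conna\in\Atms\cup\Conn$, or $A=\esbs C y D$. Two preliminary observations will be used throughout. (i)~The interpretation map of Figure~\ref{fig:interpretation} never inspects the range of a variable occurrence, so $\intof{\range B S}=\intof B$ for every formula $B$ and range $S$ (an easy induction on $B$). (ii)~By the clauses of Figure~\ref{fig:interpretation} we have $\intof{\esbs\Phi x\Psi}=\intof{\asbs\Phi x\Psi}$, so we may freely normalise nested explicit substitutions using the expansion relations~\eqref{eq:expes}, having first $\alpha$-renamed the bound variable of any $\esbs C y D$ occurring in $A$ so that it is fresh for all the $B_i$ and distinct from the $v_i$.

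For the base case $A=v_1^R$ (so $n=1$), the formula $\exs{B_i}{v_i}{A}A$ is $\esbs{B_1}{v_1}{v_1^R}$, which by~\eqref{eq:expes} expands to $\range{B_1}{R}$, whose interpretation is $\intof{B_1}=U$ by~(i). For $A=A_1\conna A_2$ the substitution distributes over $\conna$, so $\exs{B_i}{v_i}{A}A$ expands to $(\exs{B_i}{v_i}{A_1}{A_1})\conna(\exs{B_i}{v_i}{A_2}{A_2})$ with each $A_k$ open and $\fv{A_k}\subseteq\fv A$; the induction hypothesis applied to each $A_k$ shows both immediate subformulae interpret to $U$, and it then remains only to read off from Figure~\ref{fig:interpretation} that $\intof{\Phi\conna\Psi}=U$ whenever $\intof\Phi=\intof\Psi=U\in\Unts$. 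This is the ``diagonal'' of the relevant clause: for $\conna=\vlan$ the case $\intof\Phi=\intof\Psi=U$ lands in the branch returning $\intof\Phi$ when $U=\one$ and in the branch returning $\zer$ when $U=\zer$; dually for $\conna=\vlor$; and for $\conna=\ba\in\Atms$ it lands on the two diagonal cells ($\one$ and $\zer$) of the table at the top of the figure.

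For the remaining case $A=\esbs C y D$: since the $v_i$ are distinct from $y$, distributing the outer substitution inward shows that $\exs{B_i}{v_i}{A}A$ expands to $\esbs{C'}{y}{D'}$, where $C'=\exs{B_i}{v_i}{C}{C}$ and $D'$ is obtained from $D$ by the same explicit substitution of the $B_i$ for the $v_i$, with the variable $y$ left in place. By the induction hypothesis on the subterm $C$ we get $\intof{C'}=U$; moreover $\fv{C'}\subseteq\bigcup_i\fv{B_i}$, so none of $y,v_1,\dots,v_n$ is free in $C'$ (because $y$ is fresh and no $v_i$ is free in any $B_j$), and therefore $\esbs{C'}{y}{D'}$ is itself a simultaneous explicit substitution into $D$ of all the free variables of $D$ --- replacing $y$ by $C'$ and each $v_i$ by $B_i$ --- each substituted formula of which interprets to $U$. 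The induction hypothesis on the subterm $D$ then yields $\intof{\exs{B_i}{v_i}{A}A}=U$, completing the induction.

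I expect the routine parts to be observations~(i) and~(ii) and the case checks against Figure~\ref{fig:interpretation}. The step that will need the most care is the explicit-substitution case: one must track precisely which free variables of $A$ occur in $C$ and which in $D$, $\alpha$-rename $y$ so that the side conditions of~\eqref{eq:expes} and of simultaneous explicit substitution are met, and then verify that the formula reached after distributing the substitutions is literally an instance of the induction hypothesis for the subterm $D$. If one additionally reads the statement as asserting that $\intof{\exs{B_i}{v_i}{A}A}$ is defined, it remains to note that every rewrite step above stays within the defined fragment of the interpretation map, which holds because each substituted formula has interpretation $U\in\Unts$.
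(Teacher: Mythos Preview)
Your proposal is correct and follows essentially the same approach as the paper: a structural induction on $A$ with the same three cases (variable, $A_1\conna A_2$, $\esbs C y D$), using the definition of the interpretation map to distribute the substitution and then applying the induction hypothesis. Your version is more explicit than the paper's about why only three cases arise, about the role of $\alpha$-renaming and the side conditions on simultaneous explicit substitution, and about reading off $\intof{\Phi\conna\Psi}=U$ from Figure~\ref{fig:interpretation}; the paper simply asserts these steps.
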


\subsection{Soundness and Completeness}

The main purpose of the interpretation map is to carry the soundness and completeness of $\SKS$ to $\TBLS$.

\begin{definition}
  A \defn{(subatomic) proof} is a derivation $\dera$ such that $\intof{\pr\dera}=\one$. A subatomic formula $A$ is \defn{provable} in $\TBLS$ if there is a proof $\dera$ in $\TBLS$ with $\cn\dera=A$, and a standard formula $A$ is \defn{provable} in $\TBLS$ if there is a proof $\dera$ in $\TBLS$ with $\intof{\cn\dera}=A$.
\end{definition}

\begin{theorem}[restate = SoundCompl, name = ]
  \label{thm:sub-compl}
  Let $A$ be a standard formula. The following are equivalent:
  \begin{enumerate}
  \item $A$ is a Boolean tautology.
  \item $A$ is provable in $\SKS$.
  \item $A$ is provable in $\TBLS$.
  \item $A$ is provable in $\TBLS$ without the algebraic rules.
  \end{enumerate}
\end{theorem}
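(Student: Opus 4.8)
The plan is to prove the cycle of implications $(1)\Leftrightarrow(2)$, $(2)\Rightarrow(3)$, $(3)\Rightarrow(1)$, and then $(4)\Rightarrow(3)$ is trivial while $(2)\Rightarrow(4)$ requires a little more care. The equivalence $(1)\Leftrightarrow(2)$ is not our responsibility: it is exactly the known soundness and completeness of $\SKS$ for classical propositional logic \cite{brunnler:tiu:01,Brun:04:Deep-Inf:rq,GuglGundPari::A-Proof-:fk}, which we import wholesale. So the real work is relating $\TBLS$ to $\SKS$ via the interpretation map $\intof{\,\cdot\,}$.

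For $(3)\Rightarrow(1)$ (soundness of $\TBLS$): suppose $A$ is a standard formula provable in $\TBLS$, so there is a $\TBLS$-derivation $\dera$ with $\intof{\cn\dera}=A$ and $\intof{\pr\dera}=\one$. First I would argue that $\intof\dera$ is in fact defined — this is where one has to be slightly careful, since $\intof{\,\cdot\,}$ is only partial. The natural route is: a proof is by definition a derivation whose premise interprets to $\one$, and one shows by induction on the structure of a $\TBLS$-derivation that if $\intof{\pr\dera}$ is defined then $\intof\dera$ is defined (the units can only ``spread downward'' in the sense tracked by the interpretation map, and each rule scheme and each expansion relation preserves definedness — this is essentially the content already packaged in the Proposition labelled \texttt{PropDeri}). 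Granting that, \texttt{PropDeri} tells us $\intof\dera$ is a standard $\SKS$-derivation from $\one$ to $A$, i.e. an $\SKS$-proof of $A$; hence $A$ is provable in $\SKS$, hence a tautology by $(2)\Rightarrow(1)$.

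For $(2)\Rightarrow(3)$ (completeness of $\TBLS$): given an $\SKS$-proof of the standard formula $A$, I would exhibit a $\TBLS$-proof of $A$. The cleanest strategy is to simulate $\SKS$ rule by rule. Every standard formula is already a (flat, unit-free-or-not) subatomic formula once we read a standard atom $a$ as $\zer\ba\one$ and $\cneg a$ as $\one\ba\zer$; one checks $\intof{\zer\ba\one}=a$ and $\intof{\one\ba\zer}=\cneg a$, consistent with Figure \ref{fig:interpretation}. Then each $\SKS$ inference rule (atomic identity, atomic cut, atomic contraction, atomic cocontraction, the switch/medial linear rules, the unit equalities) must be shown derivable in $\TBLS$ on these encodings. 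The atomic contraction/cocontraction and medial cases are exactly the instances displayed in \eqref{eq:acon-acocon-unit}, which are instances of $\sDn\vlor\ba$ and $\sUp\vlan\ba$; the atomic identity and atomic cut are the subatomic rules shown in Figure \ref{fig:cut}, again instances of the rule scheme \eqref{eq:schemes}; switch is the algebraic rule $\sw$ (or derivable from the subatomic scheme); and the unit manipulations of \eqref{eq:units} are handled by the algebraic rules together with expansions. Composing these simulations along the given $\SKS$-proof, and using that $\intof{\,\cdot\,}$ commutes with the constructors wherever it is defined, yields a $\TBLS$-derivation $\dera$ with $\intof\dera$ equal (up to the synchronal-composition bookkeeping in Figure \ref{fig:interpretation}) to the original $\SKS$-proof, so $\intof{\pr\dera}=\one$ and $\intof{\cn\dera}=A$.

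Finally, $(4)\Rightarrow(3)$ is immediate since the algebraic-rule-free system is a subsystem of $\TBLS$. For $(2)\Rightarrow(4)$ — which is what makes the footnote ``the algebraic rules are not needed'' precise — I would redo the simulation of the previous paragraph, but each time an algebraic rule ($\sw$, $\assD$, $\assU$, $\mix$, $\sDn\com$, $\sUp\com$) was invoked, replace it by a short derivation built only from instances of the subatomic rule scheme \eqref{eq:schemes} on the relevant unit/variable encodings. Concretely, switch and the associativities are standard consequences of the medial-shaped subatomic rules with a unit argument, commutativity of $\vlan,\vlor$ follows from a subatomic rule where one argument is a unit, and $\mix$ is obtained from $\zer\ba\one$-style encodings; the merge lemmas (Lemma \ref{lemma:mergemedial}, Lemma \ref{lemma:mergeswitch}) can be used to lift these from the atomic level to arbitrary subformulae if needed. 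I expect the main obstacle to be precisely this last bookkeeping: checking that every algebraic rule can be eliminated in favour of the pure subatomic scheme without ever creating or deleting a unit (so that strict linearity and the well-definedness of $\intof{\,\cdot\,}$ along the construction are maintained), and verifying the definedness-propagation claim used in the soundness direction. Both are routine inductions, but they are where the care lies.
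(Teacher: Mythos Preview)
Your plan matches the paper's proof in spirit: cycle through the implications, import $\SKS$ soundness/completeness for $(1)\Leftrightarrow(2)$, and translate between $\SKS$ and $\TBLS$ rule-by-rule via the interpretation map. Two points of difference are worth noting.

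First, for eliminating the algebraic rules the paper takes the direction $(3)\Rightarrow(4)$ rather than your $(2)\Rightarrow(4)$, and its argument is sharper than ``replace each algebraic rule by a short subatomic derivation.'' The observation is simply that \emph{the interpretation of every instance of an algebraic rule, when defined, is already the interpretation of some single subatomic rule instance}. So one can pass from a $\TBLS$ proof to a $\TBLS$-without-algebraic-rules proof by swapping each algebraic step for a subatomic step with the same interpretation---no multi-step derivations, no merge lemmas, no worry about strict linearity being disturbed. Your route would also work, but it is more laborious and the ``main obstacle'' you flag (checking that the replacements never create/delete units) simply does not arise in the paper's argument.

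Second, your soundness direction $(3)\Rightarrow(1)$ leans on the claim that if $\intof{\pr\dera}$ is defined then $\intof{\dera}$ is defined. This is not quite as routine as you suggest: the case $\intof{\Phi\ba\Psi}$ is only defined when $\intof\Phi$ and $\intof\Psi$ land in the $3\times 3$ table of Figure~\ref{fig:interpretation}, and one must verify that the subatomic rule schemes cannot manufacture an intermediate subformula $A\ba B$ with $\intof{A}\notin\{\zer,\one\}$ out of a premise whose interpretation is defined. The paper sidesteps this entirely by citing~\cite{AlerGugl:18:Subatomi:ty} for $(3)\Rightarrow(2)$; if you want a self-contained argument you will need to make that definedness-propagation induction precise, tracking the atomic-connective case explicitly rather than waving it through with \texttt{PropDeri}.
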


\begin{remark}
  The equivalence of 3) and 4) in Theorem~\ref{thm:sub-compl} indicates that the algebraic rules of~\eqref{eq:extra-rules} are, in fact, not needed. However, the presentation of the technicalities in the following sections greatly benefits from the presence of the algebraic rules and therefore they are included in $\TBLS$.
\end{remark}

\section{Cut Freeness and Superposition of Proofs}
\label{sec:cut}

\subsection{Cuts and projections in subatomic systems}

Figure~\ref{fig:cut} in the introduction shows that the atomic cut rule $\aiU$ is the interpretation of an instance of the $\sUp\ba\vlan$. This motivates the following definition.

\begin{definition}
A \defn{(subatomic) cut} is an instance of the rule scheme $\sUp\ba\vlan$ or $\vlan\sUp\ba$ shown below:
\begin{equation}
  \label{eq:cut-scheme}
          \odn{((x\ba              y)\vlan(z\ba w))}
{\sUp\ba\vlan}{\vlsmallbrackets
  ((x\vlan z)\ba(y\vlan w))}{}
\end{equation}
A subatomic proof system is \defn{cut-free} if it does not contain any cuts. 
We denote by $\TBLScf$ the cut-free fragment of $\TBLS$, that is, $\TBLS$ with the rules $\sUp\ba\vlan$ and $\vlan\sUp\ba$ removed for all $a\in\Atms$. 
\end{definition}

This definition is justified by the following proposition which says that any interpretation of a proof in $\TBLScf$ is also cut-free in $\SKS$.

\begin{proposition}
  If $\dera$ is a cut-free subatomic derivation then $\intof{\dera}$ contains no instances of the atomic cut rule.
  \rm\cite{BarrGugl:22:A-Subato:yr}
\end{proposition}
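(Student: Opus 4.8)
The plan is to trace the recursive definition of $\intof{\cdot}$ in Figure~\ref{fig:interpretation}, keeping track of which $\SKS$ inference rules can occur in $\intof\dera$, and to show that the atomic cut $\aiU$ occurs only as the interpretation of a subatomic cut. As a preliminary simplification I would observe that $\intof{\cdot}$ is invariant under eliminating all explicit and guarded substitutions --- this is exactly what the clauses $\intof{\esbs\Phi y\Psi}=\intof{\asbs\Phi y\Psi}$ and $\intof{\gsbs\Phi y p\Psi}=\intof{\agsbs\Phi y p\Psi}$ say --- and that this elimination, performed innermost first so that it terminates, preserves cut-freeness, since an actual or guarded actual substitution sends every instance of a rule $\rr$ to an instance of the same $\rr$ and adds no new composition by inference. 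Hence we may assume $\dera$ is flat, and then every composition by expansion it contains is trivial ($\cn\Phi=\pr\Psi$), because the non-trivial expansion relations of~\eqref{eq:expes}--\eqref{eq:expgs} all involve a substitution constructor.

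Next I would induct on the structure of the flat derivation $\dera$. The unit and variable cases produce no inference rule; the cases $\dera=\Phi\vlan\Psi$ and $\dera=\Phi\vlor\Psi$ merely re-assemble $\intof\Phi$ and $\intof\Psi$ with a connective (or keep one of them) and add no inference step; and a composition by expansion interprets to a synchronal composition, which by Definition~\ref{def:synch} adds no inference step --- so the induction hypothesis carries all of these. For the base case $\dera=\Phi\ba\Psi$ one simply reads off the table at the top of Figure~\ref{fig:interpretation}: the only rules it ever introduces are $\awU$, $\awD$ and $\mix$, never $\aiU$, and $\intof\Phi,\intof\Psi$ do not appear as subderivations of the result. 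That leaves a composition by inference $\odn{\Phi}{\rr}{\Psi}{}$: if $\intof{\cn\Phi}=\intof{\pr\Psi}$ the interpretation is again a synchronal composition and we are done by induction; otherwise it is $\odn{\intof\Phi}{}{\intof\Psi}{}$, where a single $\SKS$-step bridges $\intof{\cn\Phi}$ and $\intof{\pr\Psi}$, and I must show this step is not $\aiU$ unless $\rr$ is a subatomic cut.

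The hard part is this last claim, which amounts to a finite case analysis over the schemes~\eqref{eq:schemes} and the algebraic rules~\eqref{eq:extra-rules}. The algebraic rules interpret to $\SKS$-rules other than the cut. For the subatomic schemes the driving observation is that the interpretation of an $\ba$-rooted subderivation is, by the base-case table, always one of $\zer$, $\one$, $a$, $\cneg a$, or a one-step derivation --- in particular never a proper conjunction --- so for the bridging step to be the cut $a\vlan\cneg a\to\zer$, the pattern $a\vlan\cneg a$ in $\intof{\cn\Phi}$ must be created by a genuine $\vlan$-connective of $\dera$, and the collapse to $\zer$ forces $\rr$ to absorb that $\vlan$ into an $\ba$-connective; tracking the shapes in~\eqref{eq:schemes}, while allowing $x,y,z,w$ to be instantiated by arbitrary formulae (not only units), shows this is possible only for $\sUp\ba\vlan$ --- equivalently $\vlan\sUp\ba$, which is the same rule since $\ba\in\Atms$ --- and only for the instance $x,w\mapsto\zer$, $y,z\mapsto\one$ already displayed in Figure~\ref{fig:cut}; every other instance of every subatomic rule interprets to a medial, (co)contraction, weakening, structural step, or an identity. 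Thus a cut-free $\dera$ yields an $\intof\dera$ free of atomic cuts. The only delicate points in carrying this out are being exhaustive about those instantiations and about the position of the inference step within a deep-inference context, but this is a routine strengthening of the case analysis already behind the earlier proposition that $\intof{\cdot}$ sends $\TBLS$-derivations to $\SKS$-derivations, now recording rule names rather than mere correctness.
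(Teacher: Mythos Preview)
The paper does not give its own proof of this proposition: it simply cites \cite{BarrGugl:22:A-Subato:yr}. So there is nothing to compare your argument against directly. Your overall strategy---reduce to the flat case and then do a finite case analysis over the rule schemes, checking that only $\sUp\ba\vlan$/$\vlan\sUp\ba$ can interpret to $\aiU$---is the natural one, and your analysis of the $\Phi\ba\Psi$ table and of the algebraic rules is correct. The cited paper works in a setting without explicit substitutions, so for flat derivations your argument is essentially the intended one.

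There is, however, a gap in your flattening step. You assert that ``an actual or guarded actual substitution sends every instance of a rule $\rr$ to an instance of the same $\rr$''. This is true when one substitutes a \emph{formula} for a variable, but in $\esbs\Theta y\Psi$ the object $\Theta$ may be a non-trivial derivation. Replacing the leaf $y$ in a composition-by-inference $\odn{C[y]}{\rr}{D[y]}{}$ inside $\Psi$ by the derivation $\Theta$ yields a step from $C[\cn\Theta]$ to $D[\pr\Theta]$; when $\cn\Theta\neq\pr\Theta$ this is generally \emph{not} an instance of $\rr$ (the schema variable that $y$ occupied would need two different values). So the flattened object need not be a correct $\TBLScf$ derivation, and your subsequent case analysis over $\TBLScf$ rules does not literally apply to it. To repair this, argue instead by induction on a size measure compatible with the unfolding $\intof{\esbs\Theta y\Psi}=\intof{\asbs\Theta y\Psi}$, and note that every composition-by-inference in $\asbs\Theta y\Psi$ is the image of one in $\Psi$ (with the two possibly different substitutions $\asbs{\cn\Theta}{y}$ and $\asbs{\pr\Theta}{y}$ on premise and conclusion); the case analysis then must be carried out for such ``two-sided'' substitution instances of the non-cut rules, which still never collapse to $a\vlan\cneg a\to\zer$, but this requires a separate check.
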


In \cite{BarrGugl:22:A-Subato:yr} it has been shown that any subatomic proof can be transformed into a cut-free subatomic proof, and in \cite{BarrGuglRalph:25:strlin} this result has been extended to the strictly linear proof system without unit equality inference steps.

In both cases, cut elimination is achieved via \emph{projections}. The \defn{left-projection} (resp. \defn{right-projection}) of a derivation $\dera$ with respect to an atom $\ba$, denoted by $\lpr\ba\dera$ (resp. $\rpr\ba\dera$) is obtained from $\dera$ by discarding everything in the right (resp. left) scope of an atom and collapsing inference rules whenever possible. 

\begin{example}\label{example:projection}
   The left- and right-projections of \[\dera\ideq\odn
  {(\odn
  {[(\zer\ba\one).(\zer\ba\one)]}
{\ba\sDn\vlor}{([\zer.\zer])\ba([\one.\one])}{}.\odbox{\odn
  {(\one.\zer)}
{\mix}{[\one.\zer]}{}\ba\odn
  {(\zer.\zer)}
{\mix}{[\zer.\zer]}{}})}
  {\vlan\sUp\ba}{(([\zer.\zer].[\one.\zer]))\ba(([\one.\one].[\zer.\zer]))}{}\]
  with respect to $\ba$ are
\[\lpr\ba\dera=
(\zer\vlor\zer)\vlan\odn{\one\vlan\zer}{\mix}{\one\vlor\zer}{}\quad\quad\rpr\ba\dera=
(\one\vlor\one)\vlan\odn{\zer\vlan\zer}{\mix}{\zer\vlor\zer}{}
\]
\end{example}

A subatomic derivation containing an atomic connective $\ba$ can be viewed as a juxtaposition of two derivations: the one in which that atom is false and the one in which it is true.

We can also read atomic connectives as decision trees~\cite{BarrGugl:22:A-Subato:yr}, so that $A\ba B$ means ``if $a$ is true then $B$ else $A$''. Correspondingly, the left-projection of a derivation with respect to $\ba$ is the derivation in which that atom is false, and its right-projection is the derivation in which it is true.

The left- and right-projections can then be recombined to obtain a formula which is equivalent under the interpretation map to the conclusion of the original proof:
\begin{equation}
  \label{eq:cut-deri}
\odv{((\lpr\ba\dera)) \ba ((\rpr\ba\dera))}{\derd}{C}{}
\end{equation}
where $\derd$ is a cut-free subatomic derivation. In a strictly linear proof system, where unit equality inference steps are not available, Lemma~\ref{lemma:mergemedial} can be applied, so any leaves $v$ or $u$ which are not in the scope of $\ba$ in $\cn \dera$ are replaced by $v\ba v$ or $u\ba u$ in $C$, which are nevertheless interpreted as $v$ or $u$. Examples of this are given in \cite{BarrGuglRalph:25:strlin}.

The construction of $\derd$ in~\eqref{eq:cut-deri} above, as detailed out in \cite{BarrGugl:22:A-Subato:yr} and~\cite{BarrGuglRalph:25:strlin}, can also be applied to our system. However, as in  \cite{BarrGugl:22:A-Subato:yr} and~\cite{BarrGuglRalph:25:strlin}, this construction would lead to an exponential blow-up in the size of the proof. For this reason, we will not repeat that construction here, as we will prove a much stronger result in Section~\ref{sec:psim} below.


\subsection{Superpositions}

One of the insights of our work is that if the translation of a standard derivation ($\SKS$ or Frege) into subatomic logic is done carefully, then the underlying structure of projections will match. The following example shows that this is not immediate.

\begin{example}\label{example:superposition}
The projections of $\dera$ given in Example~\ref{example:projection} have matching structures, in the sense that there exists a derivation 
\[\derb=
(x\vlor x)\vlan\odn{y\vlan\zer}{\mix}{y\vlor\zer}{}
\] such that $\asbss{\zer}{x}{\one}{y}\derb=\lpr\ba\dera$ and $\asbss{\one}{x}{\zer}{y}\derb=\rpr\ba\dera$.

The interpretation of $\dera$ is  \[
\intof{\dera} = \odn
  {(\odn
  {[a.a] }
{\acD}{a}{}.\odn
  {\zer}
{\awD}{-a}{})}
{\aiU}{\zer}{}
\] exhibiting a contraction, a weakening, and a cut.

However, the derivation \[\derc
\ideq\odn
  {(\odn
  {[(\zer\ba\one).(\zer\ba\one)]}
{\ba\sDn\vlor}{([\zer.\zer])\ba([\one.\one])}{}.
\odbox{\odn
  {(\one.\zer)}
{\mix}{[\one.\zer]}{}\ba \zer})}
{\vlan\sUp\ba}{(([\zer.\zer].[\one.\zer]))\ba(([\one.\one].\zer))}{}
\]
has the same interpretation as $\dera$, but its projections do not match:
\[
\lpr\ba\derc=(\zer\vlor\zer)\vlan\odn{\one\vlan\zer}{\mix}{\one\vlor\zer}{}\quad\quad\quad
\rpr\ba\derc=(\one\vlor\one)\vlan\zer
\]
\end{example}

\begin{definition}
We say that $\derb$ is a \defn{superposition} of $\dera$ with respect to an atom $\ba$ if there exist free variables $x$ and $y$ in $\derb$ such that $\asbss{\zer}{x}{\one}{y}\derb=\lpr\ba\dera$ and $\asbss{\one}{x}{\zer}{y}\derb=\rpr\ba\dera$.
\end{definition}
  
As said before, the method of cut elimination given in \cite{BarrGuglRalph:25:strlin} increases the size of a proof exponentially: to eliminate all of the cuts from a proof $\dera$, we project with respect to each atom in turn and merge the conclusions. Since the left- and right- projections can be almost as large as the original derivation, the elimination of each atom can double the size of the proof. However, by bringing together superpositions and guarded substitutions, we are able to compress this construction. 

\begin{example}\label{example:observation}
If $\derb$ is a superposition of $\dera$ with respect to $\ba$, then we can construct:
\[
\odq{\gsbss{\zer}{x}{\one}{y}{l} 
     \gsbss{\one}{x}{\zer}{y}{r}\odq
    {\esbs{\derb}{z} ((z\rng l\ba z\rng r)) }
{}{\range{\cn\derb}{l} \ba \range{\cn\derb}{r} }{} }
{}{ \asbss{\zer}{x}{\one}{y}\range{\cn\derb}{l} \ba \asbss{\one}{x}{\zer}{y}\range{\cn\derb}{r} }{}
\]
which has the same conclusion as $\lpr\ba\dera\ba\rpr\ba\dera$ but is in general much smaller because $O(\size \derb)= O\left(\size{\lpr\ba\dera}\right)=O\left(\size{\rpr\ba\dera}\right)$.
\end{example}

This idea forms the basis of how we are able to polynomially simulate substitution Frege systems with our cut-free system $\TBLScf$.

\section{Substitution Frege}
\label{sec:frege}

To substantiate our claim that guarded substitutions are a highly effective means of proof compression, we compare our cut-free system $\TBLScf$ to substitution Frege systems \cite{CookReck:79:The-Rela:mf}, which are a class of p-equivalent systems for propositional logic with a substitution rule. No system is known to be more powerful than substitution Frege systems in compressing the size of proofs, and this therefore serves as a benchmark. In this section, we show that any substitution Frege proof can be transformed into a $\TBLScf$ proof with only a polynomial increase in size.

\begin{definition}
  A \defn{Frege substitution} is a function $\rho\colon\Atms\to\StFmla$ which is the identity almost everywhere.
\end{definition}

We use for Frege substitutions the same notation as for actual substitutions, i.e., if $\set{a_1,\ldots,a_n}$ is the set of atoms on which $\rho$ is not the identity, and $\rho=\set{a_1\mapsto A_1,\ldots,a_n\mapsto A_n}$ then
\begin{equation}
  \label{eq:frege-sub}
  \rho B=\asb{\sbs{A_1}{a_1},\ldots,\sbs{A_n}{a_n}}B
\end{equation}
is the formula obtained from $B$ by simultaneously replacing all occurrences of $a_i$ with $A_i$ and all occurrences of $\cneg a_i$ with $\cneg A_i$ for all $i\in\set{1,\ldots,n}$.

\begin{definition} 
A \defn{Frege proof with substitution} $\Phi= P_1,\dots, P_h$ is a finite sequence of standard formulae such that for every $i\in\set{1,\ldots,h}$, the formula $P_i\in\StFmla$ is one of the following:
\begin{itemize}
\item An instance of one of the axioms below:
  \begin{equation*}
    \begin{array}{r@{\;\ideq\;}l}
      \frF_1& \bar{A} \vlor (B \vlor A)            \\         
      \frF_2&(A \vlan (B\vlan \bar C))\vlor ((A\vlor \bar B)\vlor (\bar A\vlor C))\\
      \frF_3&(\bar A\vlan B)\vlor (\bar B \vlor A)                   \\
       \frF_4&\one     \\
\end{array}
  \end{equation*}
\item The conclusion of an instance of the \defn{modus ponens rule}: 
  \begin{equation}
    \label{eq:mp}
  \vliinf\mp{}{B}A{\bar A\vlor B}  
  \end{equation}
  where the premises are $P_k$ and $P_l$ for some $k,l < i$.
\item The conclusion of the \defn{substitution rule}: 
  \begin{equation}
    \label{eq:sub}
  \odn B\sub{\rho B}{}  
  \end{equation}
  where the premise is $P_k$ for some $k<i$, and $\rho$ is an arbitrary Frege substitution.
  \end{itemize}
The conclusion of $\Phi$ is $P_h$, the size is given by $\sizeof\Phi=\sum{\sizeof{P_i}}$, and the width by $\max{\sizeof{P_i}}$. We denote by $\sF$ the proof system that comprises all Frege proofs with substitution.
\end{definition}

\begin{definition}
  We denote by $\sFzo$ the proof system that comprises all Frege proofs with substitution where for every instance of the substitution rule, the image of the substitution $\rho$ is $\Unts$, i.e., all formulae $A_1,\ldots,A_n$ in~\eqref{eq:frege-sub} are either $\zer$ or $\one$.
\end{definition}

In the case of $\sFzo$, we can write~\eqref{eq:frege-sub} as follows
$$\rho B=\asb{\sbs{A_1}{a_1},\ldots,\sbs{A_n}{a_n}}B=\asbs{A_1}{a_1} \ldots \asbs{A_n}{a_n}B$$

\begin{theorem}[restate = sFregezo, name = ]
  \label{thm:sfzo}
  $\sF$ and $\sFzo$ are p-equivalent. \rm\cite{Buss:95:Some-Rem:xe}
\end{theorem}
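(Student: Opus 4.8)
The plan is to prove the two simulations separately. One direction is trivial: every $\sFzo$ proof is by definition a $\sF$ proof, so $\sF$ p-simulates $\sFzo$ with zero overhead. The content is therefore entirely in the other direction, namely that $\sFzo$ p-simulates $\sF$. The standard route (due to Buss, as cited) is to replace an unrestricted substitution step $\odn{B}{\sub}{\rho B}{}$, where $\rho = \set{a_1\mapsto A_1,\ldots,a_n\mapsto A_n}$ with the $A_j$ arbitrary, by a short derivation that uses only substitutions into $\Unts$ together with modus ponens and the Frege axioms.

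The key idea I would carry out: encode each formula $A_j$ by a fresh atom. Introduce, for each $j$, a new atom $b_j$ not occurring anywhere in the proof so far, and first derive the biconditional $b_j \leftrightarrow A_j$ — concretely the two formulas $\cneg b_j \vlor A_j$ and $\cneg A_j \vlor b_j$ — which is itself a (short, constant-per-connective) Frege proof since $b_j$ is fresh and hence this is a tautology one can prove by a fixed gadget in the axioms $\frF_1$–$\frF_4$ and $\mp$. More precisely, since $b_j$ does not occur in the rest of the derivation, I can afford to literally introduce $b_j$ as an abbreviation: derive $A_j \vlor \cneg A_j$ and then, using the substitution $\set{a_j \mapsto A_j}$ only implicitly, work to transport the established theorem $B$ to $\rho B$. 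The cleanest formulation: from $B$, using the substitution rule with the \emph{unit} substitution is not directly enough, so instead I would (i) take the already-proven $B$, (ii) substitute each $a_j$ by the fresh atom $b_j$ — but that is still not a unit substitution.

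So the correct approach is the one where the blow-up is absorbed by modus ponens rather than substitution. I would proceed as follows. First prove, by a generic Frege derivation of size polynomial in $\sizeof{A_j}$, the \emph{Leibniz/congruence} lemma: if $\vdash$ some formula $C$, then from a proof of $C$ and a proof of each $b_j \leftrightarrow A_j$ one obtains a proof of $\asbs{b_1}{a_1}\cdots\asbs{b_n}{a_n}$-of-$C$ versus $\rho C$ being interderivable; this is the standard ``replacement of equivalents'' argument, proved by induction on $C$ using $\frF_2$, $\frF_3$ and $\mp$, and it costs $O(\sizeof{C})$ Frege steps each of width $O(\sizeof{C} + \max_j\sizeof{A_j})$. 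Then, to simulate the substitution step $B \rightsquigarrow \rho B$: introduce fresh atoms $b_1,\ldots,b_n$; note $B$ is already proven; apply the \emph{unit}-substitution rule is still not what renames $a_j$ to $b_j$. The resolution is that renaming an atom to a fresh atom is achievable \emph{without} any substitution rule at all inside $\sFzo$, because one can re-run the original derivation of $B$ with $a_j$ uniformly renamed to $b_j$ — this only doubles that sub-proof, and since in $\sFzo$ we are allowed to re-derive, the total cost telescopes to a polynomial. Finally, from the renamed proof of $B[b/a]$ and the proofs of $b_j \leftrightarrow A_j$, the congruence lemma yields $\rho B$. Summing over all (at most $h$) substitution steps, each incurring cost polynomial in the width of the whole proof, gives the polynomial bound.

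The main obstacle is bookkeeping the freshness and the width: one must ensure the newly introduced atoms $b_j$ do not collide across the (possibly many) substitution steps, and one must verify that re-running a sub-derivation with renamed atoms, plus the congruence gadgets, stays within a global polynomial — i.e.\ that the costs add rather than multiply. The accounting is straightforward but needs care: the renaming-and-congruence trick applied to the $i$-th substitution step costs $O(\sizeof{\Phi} \cdot \mathrm{width}(\Phi))$, and there are at most $\sizeof{\Phi}$ such steps, so the simulating $\sFzo$ proof has size $O(\sizeof{\Phi}^3)$, which is the desired polynomial bound. Since this is a known result, I would cite \cite{Buss:95:Some-Rem:xe} for the details and only sketch the construction as above.
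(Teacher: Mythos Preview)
The paper does not give a proof of this theorem: it is stated as a known result and attributed to Buss, with no argument supplied. So there is no in-paper proof to compare your sketch against, but the sketch can still be assessed on its own, and it contains a genuine gap.

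The fatal step is your claim that, for a fresh atom $b_j$, the biconditional $b_j \leftrightarrow A_j$ (i.e.\ the pair $\cneg b_j \vlor A_j$ and $\cneg A_j \vlor b_j$) ``is a tautology one can prove by a fixed gadget'' because $b_j$ is fresh. It is not a tautology: take $A_j$ to be any other atom $c$; then $b_j\leftrightarrow c$ is plainly falsifiable. Freshness of $b_j$ has no bearing on validity. Consequently the final move of your argument --- ``from the renamed proof of $B[b/a]$ and the proofs of $b_j\leftrightarrow A_j$, the congruence lemma yields $\rho B$'' --- cannot go through, because those equivalences are simply unprovable in any sound system. (You seem to sense trouble, since the proposal cycles through several attempts before settling on this one.) A secondary issue is the cost analysis of ``re-running'' subproofs: if each substitution step may duplicate the accumulated $\sFzo$ derivation, then $h$ substitution steps contribute a factor $2^h$, so the assertion that costs ``telescope'' to a polynomial needs a much tighter argument than you give.

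The standard fix, which is essentially Buss's, avoids both problems by replacing the bogus equivalence with a case split on the substituted atom. For a single atom $a$ not occurring in $A$, the implication
\[
\bigl(B[\zer/a]\vlan B[\one/a]\bigr)\;\to\;B[A/a]
\]
is a genuine tautology with a Frege proof of size polynomial in $\sizeof{B}+\sizeof{A}$ (by structural induction on $B$). Two applications of the $\sFzo$ substitution rule yield $B[\zer/a]$ and $B[\one/a]$ from $B$, and two uses of $\mp$ then give $B[A/a]$. A simultaneous substitution $\set{a_1\mapsto A_1,\ldots,a_n\mapsto A_n}$ reduces to $2n$ such single-atom steps: first rename each $a_j$ to a fresh $b_j$ \emph{using this same device} (take $A=b_j$), then substitute $b_j\mapsto A_j$ in sequence. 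All intermediate formulae have size at most $\sizeof{\rho B}$, so each original substitution step costs polynomially in the width of the $\sF$ proof, and no re-running of earlier derivations is needed.
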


We are now ready to state the main theorem of this paper.

\begin{theorem}[restate = MainThm, name = ]
  \label{thm:main}
  System $\TBLScf$ p-simulates $\sF$.
\end{theorem}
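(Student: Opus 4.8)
The plan is to first normalise the Frege proof via Theorem~\ref{thm:sfzo}, and then realise the whole proof as a single subatomic \emph{superposition} from which the conclusion is read out by guarded substitutions. By Theorem~\ref{thm:sfzo} it suffices to p-simulate $\sFzo$, the variant of substitution Frege in which the substitution rule only ever substitutes the units $\zer$ and $\one$ for atoms. This normalisation is the crucial enabler: substituting a unit for an atom $a$, which we represent subatomically as $\zer\ba\one$, is exactly the operation of taking a left- or right-projection with respect to the atomic connective $\ba$, and projections are precisely what a superposition records.

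Given an $\sFzo$-proof $\Phi = P_1,\dots,P_h$ with atoms $a_1,\dots,a_m$ (so $m\le\size\Phi$), I would process the lines in order, building a single subatomic derivation and maintaining the invariant that after $P_i$ we have a derivation $\Psi_i$, of size polynomial in $\sum_{j\le i}\size{P_j}$, whose premise interprets to $\one$ and whose conclusion simultaneously encodes subatomic proofs of all of $P_1,\dots,P_i$ in superposed form: every line that will be reused later is kept as a subderivation whose atomic connectives carry ranges, so that the specific instances needed downstream can be extracted by guarded substitutions without ever re-deriving that line. The axiom steps are routine: each of $\frF_1,\dots,\frF_4$ has a polynomial-size $\SKS$-proof, which translates to a polynomial-size $\TBLScf$-derivation once the breaks in strict linearity are repaired using the merge lemmas (Lemmas~\ref{lemma:mergemedial} and~\ref{lemma:mergeswitch}).

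The two genuinely interesting cases are the substitution rule and modus ponens, and both are discharged by guarded substitutions, exactly as foreshadowed in Example~\ref{example:observation}. For a substitution step $P_i=\rho P_k$ with $\rho$ mapping atoms to units, $\rho P_k$ is obtained from the subatomic proof of $P_k$ by a sequence of projections with respect to the atomic connectives in the domain of $\rho$; instead of projecting, I attach fresh ranges to those connectives and append a short composition by expansion whose guarded substitutions read off precisely that instance, at a cost of $O(\size{P_k})$ times a polynomial in $m$ and $h$ for the range bookkeeping. For a modus ponens step $P_i=B$ from $P_k=A$ and $P_l=\cneg A\vlor B$, I would implement the Frege cut by the corresponding subatomic cuts (instances of $\sUp\ba\vlan$), producing a derivation with cuts, and then remove those cuts by the superposition-plus-guarded-substitution construction of Section~\ref{sec:cut}: the essential point is that, as in Example~\ref{example:observation}, the superposition $\derb$ of a cut-derivation $\dera$ has size $O(\size\dera)$ rather than $\size{\lpr\ba\dera}+\size{\rpr\ba\dera}$, so eliminating the cuts contributes only a polynomial factor overall rather than the exponential blow-up incurred by the naive cut elimination. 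Reading out $P_h$ at the end is one final composition by expansion, and the entire construction is plainly polynomial-time computable.

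The main obstacle is the size control, which hinges on showing that the superposition structure is genuinely preserved throughout: one must check that every derivation produced really does admit a superposition with respect to each relevant atom, so that Example~\ref{example:observation} applies; that combining the two independently-built superposed derivations at a modus ponens step does not destroy it; that the total number of distinct guards stays polynomial in $h$; and --- because the interpretation map blows up formulae in general --- that all the interpretations we rely on (premises being $\one$, conclusions being the intended $P_i$) remain computable and verifiable in polynomial time, for which Proposition~\ref{prop:intpoly} is the key tool. Getting the range bookkeeping right across modus ponens, where two independently-built superpositions must be merged compatibly, is the technically delicate heart of the argument.
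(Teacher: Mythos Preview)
Your high-level decomposition is right: reduce to $\sFzo$ via Theorem~\ref{thm:sfzo}, then build a single superposed derivation and read out the conclusion with guarded substitutions. But your treatment of modus ponens has a genuine gap, and it is precisely the gap the paper's construction is designed to close.

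You propose to simulate modus ponens by introducing subatomic cuts $\sUp\ba\vlan$ and then removing them via the superposition idea of Example~\ref{example:observation}. The problem is that Example~\ref{example:observation} presupposes that a superposition $\derb$ of the derivation $\dera$ already exists, i.e.\ that $\lpr\ba\dera$ and $\rpr\ba\dera$ have matching structure. As Example~\ref{example:superposition} shows, this is not automatic; whether a superposition exists depends on how $\dera$ was built. You acknowledge this as an ``obstacle'' but offer no mechanism to guarantee it, and iterating over many atoms and many reused lines only makes it worse. Without that guarantee, the only cut-elimination procedure available is the projection-and-merge of Section~\ref{sec:cut}, which the paper explicitly notes is exponential.

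The paper sidesteps this entirely by never introducing atomic connectives in Phase~I. Every atom $a_j$ is replaced by a pair of \emph{pseudo-dual variables} $x_j,\varneg x j$ (the ``open translation'', Definition~\ref{defn:vblenumeration}), so the Phase~I derivation $\derb$ contains only $\vlan$ and $\vlor$ and is trivially cut-free. Modus ponens is then simulated not by cut but by Lemma~\ref{lemma:mergeswitch}: the conjunction $A\vlan\tilde A$ of pseudo-dual formulae is rewritten, using only $\sUp\vlor\vlan$ and $\vlan\sUp\vlor$, into a formula built from conjunctions $v\vlan v'$ of pseudo-dual variables, which is contradictory with respect to the variable enumeration and hence vanishes under interpretation. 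The resulting $\derb$ \emph{is} a superposition with respect to every atom by construction. Only in Phase~II (eversion, Lemmas~\ref{lemma:phase2eversion} and~\ref{lemma:phase2merge}) are atomic connectives reintroduced, and in Phase~III the guarded substitutions $\nu_j$ fill in the units on the correct sides. This is the missing idea in your proposal: rather than ``build with cuts, then find a superposition'', the paper does ``build a superposition with variables, then instantiate the atoms at the end''.
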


In other words,  there is a $k\in \Nat$ such that for every $\sFrege$ proof  $\Phi= P_1,\dots, P_h$, there is a derivation $\derb$ in $\TBLScf$, such that $\intof{\pr\derb}=\one$ and $\intof{\cn\derb}=P_h$ and  $\sizeof{\derb}=O(\sizeof{\Phi}^k)$.

The following section is dedicated to the proof of Theorem~\ref{thm:main}. We will make use of Theorem~\ref{thm:sfzo}, and show p-simulation of $\sFzo$, as this simplifies the presentation.

\section{P-simulation of Substitution Frege}
\label{sec:psim}

The proof of Theorem~\ref{thm:main} will be in three phases.
In the first phase, we construct a $\TBLScf$ derivation $\derb$ which tracks the relations between formulae in a given $\sFzo$ proof $\dera$, using explicit substitutions and ranges.

This derivation is cut-free; in fact, it contains no atomic connectives at all. We can view this derivation as a superposition with respect to every atom which occurs in the $\sFzo$ proof~$\Phi$. The atoms in $\Phi$ correspond to \textit{variables} in $\derb$, in the same way that the atom $a$ corresponds to the variables $x$ and $y$ in Example \ref{example:superposition}.

The second and the third phase of the simulation take the derivation $\derb$ produced in the first phase and apply a rearrangement of subformulas (phase 2) and guarded substitutions (phase 3) to obtain a proof in $\TBLScf$ whose interpretation is a proof of the conclusion of the Frege proof $\dera$.

\subsection{Simulating \texorpdfstring{$\mp$}{mp} and \texorpdfstring{$\sub$}{sub} in \texorpdfstring{$\TBLScf$}{KSubG}}

In a $\sFzo$ proof, a formula can act as a premise for multiple applications of the $\mp$ and $\sub$ rules. 

\begin{example}\label{example:fregeproof}
Consider a $\sFzo$ proof $P_1,\dots,P_h$ which contains the following fragment: \padjust{-2ex}
\[
\begin{array}{rlc}
 & \vdots &   \\
 P_{k-2}&= ((\bar a\vlor a)\vlan( b\vlan\bar b))\vlor(c\vlor \bar c) &\\
 P_{k-1}&= \bar a \vlor a & \\
 P_k &=   (a\vlan \bar a)\vlor(\bar b\vlor b) &    \\
 P_{k+1}&= (\ttt\vlan \fff)\vlor(\bar b\vlor b) &  (\sub (P_k)) \\
 P_{k+2}&= \bar b\vlor b & (\mp(P_{k-1}, P_k)) \\
 P_{k+3}&= c\vlor \bar c & (\mp(P_k,P_{k-2} )  ) \\
 P_{k+4}&=   (a\vlan \bar a)\vlor(\fff\vlor \ttt) &  (\sub (P_k))  \\
   &\vdots& \\
\end{array}
\]
Then $P_k$ is used as a premise in obtaining $P_{k+1}$,  $P_{k+2}$, $P_{k+3}$, and $P_{k+4}$.
\end{example}

It has been shown in~\cite{BrusGugl:07:On-the-P:fk} that $\SKS$ p-simulates Frege systems, and that $\SKS$ extended by a substitution rule p-simulates $\sFrege$.\footnote{In fact the authors of \cite{BrusGugl:07:On-the-P:fk} use the variant $\SKSg$, obtained from $\SKS$ by loosening the restriction that the structural rules should be atomic, but it is very straightforward to show p-equivalence between $\SKS$ and $\SKSg$.} In order to simulate instances of $\mp$, it is assumed in~\cite{BrusGugl:07:On-the-P:fk} that the conjunction of all previous lines of the proof is proven. The premises for the $\mp$ inference steps are duplicated via cocontractions and used to prove the $\mp$ conclusion using switch and cut. In a strictly linear proof system, it is cumbersome to duplicate and permute formulae, but we can profit from using explicit substitutions instead.

\begin{example}\label{ex:frege-mp}
The applications of modus ponens in Example \ref{example:fregeproof} could be simulated by a construction like the following:
\[
\odq{
\esbs{P_k}{y_k}
\esbs{ P_{\vlnos k-1}\vlan y_k }{y_{k+2}}
\esbs{y_k \vlan P_{\vlnos k-2} }{y_{k+3}}\dots y_h }
{}{
\esbs{\odframefalse\odv
{P_{\vlnos k-1}\vlan P_k}
{\Psi_{k+2}}{P_{k+2}}{\Ssys}}{y_{k+2}}
\esbs{\odframefalse\odv
{P_k \vlan P_{\vlnos k-2}}
{\Psi_{k+3}}{P_{k+3}}{\Ssys}}{y_{k+3}}\dots y_h}{}
\] 
where $\Psi_{k+2}$ and $\Psi_{k+3}$ perform the modus ponens in some proof system~$\Ssys$, and each $y_i$ is a fresh variable. That is, each successive step of the $\sFzo$ proof is proven inside an explicit substitution, and then the formula is substituted every time the conclusion is used further down in the proof.

Note that this construction is not yet precise; in particular $\Psi_{k+2}$ and $\Psi_{k+3}$ are not strictly linear, and the formulae are not translated into subatomic logic.
\end{example}

In $\SKS$, atomic cuts, switches and unit equality rules are used to cancel out the antecedent $A$ of modus ponens~\eqref{eq:mp} and its negation $\cneg A$ in the other premise, and the resulting formula $B$ can then be used as a premise in future inference steps (just as in $\sF$) without duplicating any material unnecessarily. 

In our strictly linear system, unit equality inference steps are not available, and so we are unable to destroy the cut formulae. Instead of cuts, we can use explicit substitutions to ensure that when we substitute the conclusion into future inference steps, we do not bring along material which will blow up the size of the proof, using Lemma \ref{lemma:mergeswitch} to transform $A\vlan \bar A$ into a formula composed entirely of conjunctions of $\one$ and $\zer$, which will therefore vanish under the interpretation function.\looseness=-1

But observe that we would need the derivation on the right in Lemma~\ref{lemma:mergeswitch}, using the rules $\sUp\bet\vlan$ and $\vlan\sUp\bet$. If $\bet\in\Atms$ then this derivation will not be cut-free. For this reason, we have to ensure that our derivation $\derb$ does not contain any atomic connectives, and so we use instead a superposition. The atomic connectives are recovered in a later phase in the construction.

Besides having to simulate modus ponens, we also need a way of dealing with substitution inference steps. Not only are we constrained by the linearity of our proof system, but also by how we can apply substitutions. In $\sFzo$, substitutions arrive \emph{ex nihilo}: we can proceed from $C$ to $\asbs{B}{a}C$, for any formula $B$. This is not allowed in $\TBLS$, where substitutions only move between being explicit and actual: we must proceed from $\esbs{B}{a}C$ to $\asbs{B}{a}C$.

Nevertheless the finer control over where substitutions apply afforded by guarded substitutions allows us to simulate the substitutions of $\sFzo$.

\begin{example}\label{ex:frege-sub}
In Example \ref{example:fregeproof}, the substitution rule is applied twice to $P_k$, to obtain $P_{k+1}$ and $P_{k+4}$. In order to achieve the same proof compression as $\sFzo$, we want to prove $P_k$ once and reuse it, without any conflict between the substitutions.

Limiting ourselves to explicit substitutions, and using the idea from Example~\ref{ex:frege-mp} we might consider derivations with either of the premises
\[
\esbs{\ttt}{x_b}\esbs{\fff}{\varneg x b}\esbs{\ttt}{x_a}\esbs{\fff}{\varneg x a} \esbs{P_k}{y_k} \esbs{y_k}{y_{k+1}}\esbs{y_k}{y_{k+4}}\dots y_h
\]
\[
\esbs{P_k}{y_k} \esbs{\esbs{\ttt}{x_a}\esbs{\fff}{\varneg x a}y_k}{y_{k+1}} \esbs{\esbs{\ttt}{x_b}\esbs{\fff}{\varneg x b}y_k}{y_{k+4}} \dots y_h
\]
where the variables $x_a$ and $\varneg x a$ represent the atom $a$ (see Example~\ref{example:superposition}), and the variable $x_b$ and $\varneg x b$ represent the atom $b$, and the explicit substitutions $\esbs{\ttt}{x_b}$, $\esbs{\fff}{\varneg x b}$, $\esbs{\ttt}{x_a}$, and $\esbs{\fff}{\varneg x a}$ are intended to simulate the substitution inference steps.
In the first case, the substitutions $\esbs{\ttt}{x_a}$ and $\esbs{\fff}{\varneg x a}$ will also catch $P_{k+4}$; in the second case, we would need to apply $\esbs{P_k}{y_k} $ in a variable capturing way, which we cannot do.
However, by introducing guarded substitutions, we can use the premise
\[\gsbss{\ttt}{{x_b}}{\fff}{\varneg x b}{s}\gsbss{\ttt}{x_a}{\fff}{\varneg x a}{r}\esbs{P_k}{y_k}\esbs{y_k\rng{r}}{y_{k+1}}\esbs{y_k\rng {s}}{y_{k+4}}\dots y_h \]
which can be expanded as
\[
\odq
{\gsbss{\ttt}{{x_b}}{\fff}{\varneg x b}{s}\odq
{\gsbss{\ttt}{x_a}{\fff}{\varneg x a}{r}\odq
{\esbs{P_k}{y_k}\esbs{y_k\rng{r}}{y_{k+1}}\esbs{y_k\rng {s}}{y_{k+4}}\dots y_h }
{}{\esbs{\range{P_k}{r}}{y_{k+1}}\esbs{\range{P_k}{s}}{y_{k+4}}\dots y_h}{}}
{}{\esbs{\asbss{\ttt}{x_a}{\fff}{\varneg x a}
         \range{P_k}{r} }{y_{k+1}}
   \esbs{\range{P_k}{s} }{y_{k+4}}
   \dots y_h }{} }
{}{\esbs{\asbss{\ttt}{x_a}{\fff}{\varneg x a}
         \range{P_k}{r} }{y_{k+1}}
   \esbs{\asbss{\ttt}{{x_b}}{\fff}{\varneg x b}
         \range{P_k}{s} }{y_{k+4}}
   \dots y_h }{}
\]
in order to obtain the formulae $P_{k+1}$ and $P_{k+4}$. We note that this construction accumulates the ranges $r$ and $s$, so that the size of these formulae grow, but this will not have a greater than polynomial effect on the size of the construction. 
\end{example}

\subsection{From Standard Formulae to Subatomic Formulae}

Every line in a $\sFzo$ proof is a standard formula. We now show how we translate such a standard formula into a subatomic formula.

\begin{definition}
An actual substitution is \defn{elementary} if it maps variables only to units or variables. An actual substitution $\sigma$ \defn{closes} a formula $A$ if $\sigma A$ is closed. An actual substitution $\sigma$ is \defn{dualising} with respect to two variables $x$ and $y$ if $\sigma x=\ccneg{\sigma y}$.
\end{definition}
\begin{definition}
\label{defn:vblenumeration} 
We call any finite sequence of positive and negative atoms $\atmenum{n}=\set{a_1,\atmneg{a}{1}\dots,a_n,\atmneg{a}{n}}$ an \defn{atomic enumeration}. If $\Phi$ is an $\sFzo$ proof, we write $\atmenum{\Phi}$ for the set of atoms appearing in $\Phi$, including the pair of dual atoms, even if only one of the pair appears in the proof. Given an injective map from $v\colon\Atms \to \Vbls$, we define a \defn{variable enumeration}, $\varenum{\atmenum{n}}$ to be the image of an atomic enumeration $\atmenum{n}$, writing $\varenum{n}$ for $\varenum{\atmenum{n}}$ and $\varenum{\Phi}$ for $\varenum{\atmenum{\Phi}}$. We call the image of dual atoms $a_j$ and $\atmneg{a}{j}$ in a variable enumeration $\varenum{n}$ \defn{pseudo-dual variables}, writing them as $v(a_j)=x_j$ and $v(\atmneg{a}{j})=\varneg{x}{j}$, but we emphasise that this is not an extension of the negation operation $\atmneg{\cdot}{}$ to variables. Instead, for a subatomic formula $A$, given a variable enumeration $\varenum{n}$ we can define the \defn{pseudo-dual formula} $\Tilde{A}$ to be $\asbs{\varneg{x}{i}}{x_i}_{\fv{A}}\cneg{A}$.
\end{definition}

\begin{definition}
Given an atomic enumeration $\atmenum{n}$ with variable enumeration $\varenum{n}$, and a subatomic formula $A$ with $\fv A \subseteq \varenum{n}$, we say that $A$ is \defn{tautological} with respect to $\varenum{n}$ if for every elementary substitution $\eta$ which is dualising with respect to all pseudo-dual variables, we have $\intof{\eta A}=\one$, and it is \defn{contradictory} with respect to $\varenum{n}$ if $\intof{\eta A}=\zer$.
\end{definition}

\begin{definition}\label{defn:factorisation}
Given a formula $Q$, we define its \defn{factorisation} to be composed of two parts: an open formula $A$ whose every variable appears only once and an elementary substitution $\eta$, such that $\eta A=Q$. We note that the factorisation is unique modulo the choice of free variables of $A$, and assume in general that these are chosen to be fresh variables.
\end{definition}
We now give the translation that we will use to translate formulae of an $\sFzo$ proof to the subatomic system.

\begin{definition}
Let $\atmenum{n}$ be an atomic enumeration with variable enumeration $\varenum{n}$, and let $P$ be a standard formula whose every atom is in $\atmenum{n}$. We obtain the \defn{open translation} $Q$ of $P$ by replacing each atom by its image under the given variable enumeration, so that $\fv{Q}\subseteq \varenum{n}$. We can then factorise $Q$ to obtain an open formula $A$ and an elementary substitution $\eta$ such that $\eta A=Q$; we call the pair $(\eta,A)$ the \defn{factorised (open) translation} of $P$. For any two standard formulas $P_i,P_j$, we assume without loss of generality that in their factorised translations $\fv{A_i}\cap \fv{A_j}=\emptyset$.
\end{definition}

It is important to note that the factorised open translation does not contain any atomic connectives, but that we have replaced each atom by an associated variable. This guarantees that we will not use the cut rule in the first phase of the simulation, because there are no atoms to cut.

\begin{proposition}[restate = StToSub, name = ]
Let $\atmenum{n}$ be an atomic enumeration with variable enumeration $\varenum{n}=\set{x_1,\varneg x 1,\dots,x_n,\varneg x n}$ and let $P$ be a standard formula with factorised translation $(\eta,A)$. Then $\intof{\asbss{\zer \ba_j \one}{x_j}{\one\ba_j\zer}{\varneg x j}_{1\leq j\leq n} \eta A }=P$.
\end{proposition}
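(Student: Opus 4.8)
The plan is to prove the statement by structural induction on the standard formula $P$, tracking simultaneously the effect of the open translation, the factorisation, and the subatomic substitution $\asbss{\zer \ba_j \one}{x_j}{\one\ba_j\zer}{\varneg x j}_{1\leq j\leq n}$. First I would unwind the definitions: the open translation $Q$ of $P$ is obtained by replacing each atom $a_j$ by its associated variable $x_j$ and each $\cneg a_j$ by $\varneg x j$; factorising $Q$ gives $(\eta, A)$ with $\eta A = Q$ and $A$ open with all variables distinct. Since actual substitutions compose, we have
\[
\asbss{\zer \ba_j \one}{x_j}{\one\ba_j\zer}{\varneg x j}_{j} \eta A
= \asbss{\zer \ba_j \one}{x_j}{\one\ba_j\zer}{\varneg x j}_{j} \, Q ,
\]
because $\eta$ maps the fresh variables of $A$ back onto the (pseudo-dual) variables $x_j,\varneg x j$ occurring in $Q$, and substituting a closed formula like $\zer\ba_j\one$ for a variable is insensitive to whether that variable was reached directly or via $\eta$. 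So it suffices to show $\intof{\asbss{\zer \ba_j \one}{x_j}{\one\ba_j\zer}{\varneg x j}_{j} Q} = P$, and this reduces the claim to a statement purely about the open translation.

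Next I would establish the key lemma by induction on $P$: writing $Q$ for the open translation of $P$ and $\widehat Q$ for $\asbss{\zer \ba_j \one}{x_j}{\one\ba_j\zer}{\varneg x j}_{j} Q$, we have $\intof{\widehat Q} = P$. The base cases are the heart of the argument. If $P = \one$ then $Q = \one$ and $\intof{\one} = \one = P$; similarly for $\zer$. If $P = a_j$, then $Q = x_j$ and $\widehat Q = \zer \ba_j \one$; by the case analysis at the top of Figure~\ref{fig:interpretation}, $\intof{\zer \ba_j \one}$ falls in the cell with $\intof\Phi = \zer$, $\intof\Psi = \one$, which yields $a_j$. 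If $P = \cneg a_j$, then $Q = \varneg x j$ and $\widehat Q = \one \ba_j \zer$, whose interpretation is $\cneg{a_j}$ by the cell with $\intof\Phi = \one$, $\intof\Psi = \zer$. The inductive step for $P = P_1 \vlan P_2$ or $P = P_1 \vlor P_2$ uses that the open translation commutes with the connectives ($Q = Q_1 \mathbin{\Conn} Q_2$) and that the subatomic substitution distributes over connectives (the first clause of $\expeq$ holds at the level of actual substitutions), so $\widehat Q = \widehat{Q_1} \mathbin{\Conn} \widehat{Q_2}$; then one checks that the clauses defining $\intof{\Phi \vlan \Psi}$ and $\intof{\Phi \vlor \Psi}$ in Figure~\ref{fig:interpretation} compute exactly $\intof{\widehat{Q_1}} \mathbin{\Conn} \intof{\widehat{Q_2}} = P_1 \mathbin{\Conn} P_2$ — here the ``simplification'' cases (where one argument is $\one$ or $\zer$) are harmless because standard formulae $P$ do not contain $\one$ or $\zer$ as proper subformulae unless one is being explicit about them, and in any case $\intof{\cdot}$ applied to the standard formula $P_i$ returns $P_i$ itself, matching the recursion.

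The main obstacle, and the point deserving the most care, is the bookkeeping around factorisation and the interplay with $\eta$: one must verify that passing through the factorised pair $(\eta,A)$ rather than working directly with $Q$ does not change anything, which rests on the fact that $\eta$ is elementary (it only renames, mapping fresh variables to the $x_j,\varneg x j$) and that composing $\asbss{\cdots}{}{}{}_j$ after $\eta$ gives the same result as applying the composite substitution $\asbss{\zer\ba_j\one}{x_j}{\one\ba_j\zer}{\varneg x j}_j$ directly to $Q$. A secondary subtlety is that I should confirm the substitution into $\zer\ba_j\one$ and $\one\ba_j\zer$ is well defined as an actual substitution (these are closed flat formulae, so no range inheritance or variable capture issues arise) and that, since different atoms get different atomic connectives $\ba_j$, no interference occurs between the substitutions for distinct $j$. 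Once these routine but fiddly checks are in place, the induction closes and the proposition follows by combining the composition identity above with the key lemma.
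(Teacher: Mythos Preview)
Your proposal is correct and amounts to a careful structural-induction unwinding of the definitions involved (open translation, factorisation, actual substitution, and the cases of the interpretation map). The paper's own proof is the single line ``This follows immediate from the definition,'' so you have simply spelled out in detail what the paper leaves implicit.
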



\subsection{Phase I: Superposition}
\begin{figure*}[!t]
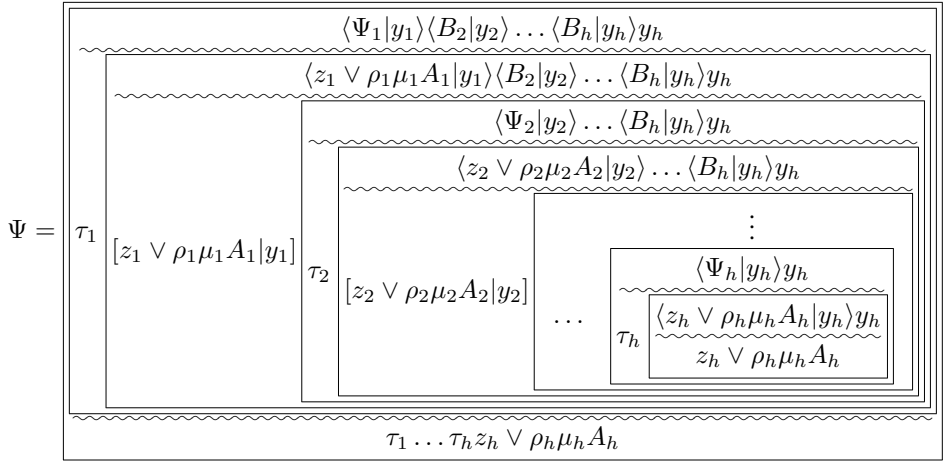

\[\derb\ideq
\od{\odq{\odq
  {\esbs{\derb_1}{y_1}\esbs{B_2}{y_2}\dots\esbs{B_h}{y_h}y_h}
{}{\tau_1\; \odq
  {\esbs{z_1 \vlor\rho_1\mu_1 A_1}{y_1}\esbs{B_2}{y_2}\dots\esbs{B_h}{y_h}y_h}
{}{\asbs{z_1 \vlor\rho_1\mu_1 A_1}{y_1}\odq
  {\esbs{\derb_2}{y_2}\dots\esbs{B_h}{y_h}y_h}
{}{\tau_2\; \odq
  {\esbs{z_2 \vlor\rho_2\mu_2 A_2}{y_2}\dots\esbs{B_h}{y_h}y_h}
{}{\asbs{z_2 \vlor\rho_2\mu_2 A_2}{y_2} \odbox{\begin{array}{cc}
     & \vdots \\
    \dots & \odq
  {\esbs{\derb_h}{y_h}y_h }
{}{\tau_h\; \odq
  {\esbs{z_h \vlor\rho_h\mu_h A_h }{y_h}y_h }
{}{z_h \vlor\rho_h\mu_h A_h }{} }{}
\end{array} } }{} }{}} {} }{}}
{}{\tau_1\dots\tau_h z_h \vlor \rho_h\mu_h A_h}{}}
\]
\caption{The derivation $\derb$ in Phase I of the p-simulation construction}
\label{figure:phase1}
\end{figure*}
\begin{convention}
In all the following we will use:
  \begin{itemize}
      \item $\sigma, \nu$ to denote guarded substitutions $\gsbs A y p$
      \item $\tau$, to denote explicit substitutions $\esbs A y$
      \item $\rho,\eta,\mu$ to denote actual substitutions $\asbs A y$
  \end{itemize} 
\end{convention}
\begin{construction}
\label{cons:phase-1-a}
Let $\dera=P_1,\dots,P_h$ be an $\sFzo$ proof, with atomic enumeration $\atmenum{\dera}$, variable enumeration $\varenum{\dera}$, and factorised open translation $(\eta_1,A_1),\dots,(\eta_h,A_h)$.

For each line $P_i$ of the $\sFzo$ proof $\dera$, we will construct:
\begin{itemize}
    \item a cut-free derivation $\derb_i$, 
    \item explicit substitutions $\tau_i$,
    \item actual substitutions $\mu_i$ and $\rho_i$
    \item a guarded substitution $\sigma_i$
    \item a flat formula $B_i$
\end{itemize}

The premise of each derivation $\derb_i$ will be determined by the premise(s) of the inference rule by which $P_i$ is obtained; when $P_i$ is an axiom, $\pr{\derb_i}$ will be $B_i$ and otherwise the formula $B_i$ will allow us to substitute in the premise needed. The guarded substitution $\sigma_i$, which is non-empty only when $P_i$ is obtained by an application of substitution, will store the information of that substitution until a later phase in the simulation. The conclusion of $\derb_i$ will be $\tau_i z_i \vlor \rho_i\mu_i A_i$, where $z_i$ is a fresh variable. The explicit substitution $\tau_i$ allows us to control the accumulation of logical material from applications of modus ponens, the actual substitution $\rho_i$ allows us to accumulate ranges so that the guarded substitutions apply correctly, and the actual substitution $\mu_i$ aids in giving strictly linear simulations of the axioms which make use of weakening and contraction. We fix a set of fresh variables $\set{y_1,\dots,y_h}$, which we will use in Construction \ref{cons:phase-1-b} to link these pieces together.

First suppose that $P_i$ is obtained from an axiom. We only show here the case for $\frF_1$, the others being similar (and shown in the appendix). We have that $A_i$ is of the form $\vls{[X'.(Y.X)]}$, where $\eta_i X'$ and $\eta_i X$ are pseudo-dual formulae. For a variable $v\in \fv X$, we denote by $v'$ the variable in $\fv X'$ appearing in the corresponding position, and we note that $\eta_i v$ and $\eta_i v'$ are pseudo-dual. With use of Lemma \ref{lemma:mergeswitch} we construct:\padjust{-2ex}
\[\Psi_i = \od{\odq{\zer}{}{\esbs{\zer}{z_i}z_i}{} \vlor \eta_i \odv
  {[\acs{\odframefalse
  \odn{(v.\zer)}\mix{[v.\zer]}{}}{v}{Y}Y.
  \odv{\acs{v'\vlor v}{v}{X}\sUp X}{}{[X'.X]}{\sDn\vlan\vlor,\vlor\sDn\vlan}]}
{}{[X'.[\acs{v\vlor 0}{v}{Y}Y.X]]}{\assD,\comD}}\]
and we obtain the substitutions $\tau_i=\esbs{\zer}{z_i}$, $\rho_i=\eta_i$, $\mu_i=\acs{v\vlor \zer}{v}{Y}$, $\sigma_i=\epsilon$, and the formula $B_i=\zer\vlor\eta_i(\acs{v\vlan \zer}{v}{Y}Y \vlor \acs{v'\vlor v}{v}{S}\sUp X ) $.

  Next suppose that $P_i$ is obtained from an application of modus ponens, so that there exist $k,l<i$ such that $P_l=\bar P_k \vlor P_i$. Then $A_l$ is of the form $A_{k,l}\vlor A_{i,l}$, where $\eta_l A_{k,l}$ and $\eta_k A_k$ are pseudo-dual formulae and $\eta_l A_{i,l}=\eta_i A_i$. For a variable $v$ occurring in $A_k$, we denote by $v'$ the variable occurring in $A_{k,l}$ in the corresponding position, and we note that $\eta_k v$ and $\eta_l v'$ are pseudo-dual. Then by Lemma \ref{lemma:mergeswitch}, we construct:\padjust{-2ex}
\[\Psi_i\ideq \od{\odi{\odi{\odh
{([z_k. \rho_k\mu_k A_k].[z_l.[\rho_l\mu_l A_{k,l}.\rho_l\mu_l A_{i,l}]])}}
{\sw}{\odbox{[[z_k.z_l]. \odn
  {(\rho_k\mu_k A_k.[\rho_l\mu_l A_{k,l}.\rho_l\mu_l A_{i,l}])}
{\sw}{[ \odframetrue\odv
  {( \rho_k\mu_k A_k. \rho_l\mu_l A_{k,l})}
{}{\acs{ \rho_k\mu_k v \vlan\rho_{l}\mu_l v' }{v}{A_k}\sDn {A}_k}{\set{\vlan\sUp\vlor, \sUp\vlor\vlan} }
. \rho_l\mu_l A_{i,l}]}{} ]}}{}}
{\assD}{ \odq{(([z_k.z_l]\vlor 
   \acs{ \rho_k\mu_k v \vlan \rho_{l}\mu_l v' }{v}{A_k}\sDn {A}_k))}{}{\esbs{([z_k.z_l])\vlor 
   \acs{ \rho_k\mu_k v \vlan \rho_{l}\mu_l v' }{v}{A_k}\sDn {A}_k }{z_i} z_i}{} \vlor  \rho_l\mu_l A_{i,l} }{}
}
\]
The substitution $\tau_i$ is $\esbs{(z_k\vlor z_l) \vlor \acs{\rho_k\mu_k v \vlan \rho_{l}\mu_l v }{v}{A_k}\sDn {A}_k}{z_i}$, $B_i=y_k\vlan y_l$, and $\sigma_i=\epsilon$. We inherit $\mu_i$ and $\rho_i$ from $\mu_l$ and $\rho_l$: $\mu_i A_i= \mu_l A_{i,l} $ and $\rho_i=\rho_l$.

  Finally suppose that $P_i$ is obtained from an application of substitution, so that there exists $k<i$, atoms $a_{j_1},\dots, a_{j_m}$ and units $u_{j_1},\dots, u_{j_m} $ such that $P_i =\asbs{u_{j_1}}{a_{j_1}}\dots\asbs{u_{j_m}}{a_{j_m}}P_k$. For a variable $v$ occurring in $A_i$, we denote by $v'$ the variable occurring in $A_k$ in the corresponding position. We construct
\[
\Psi_i\ideq \od{\odq{z_k\rng i}{}{\esbs{z_k\rng i}{z_i}z_i}{}\vlor \range{\rho_k A_k}{i} }
\]
\[\sigma_i=\gsbss{u_{j_1}}{x_{j_1}}{\bar u_{j_1}}{\varneg{x}{j_1}}{i}\dots\gsbss{u_{j_m}}{x_{j_m}}{\bar u_{j_m}}{\varneg{x}{j_m}}{i}\]
Then $B_i=y_k\rng i$ and $\tau_i = \esbs{z_k\rng i}{z_i}$ and $\mu_i$ is inherited from $\mu_k$: $\mu_i A_i=\mu_k A_k$. The substitution $\rho_i$ adds the range $i$ to every variable in its scope: $\rho_i = \acs{\rho_k v\rng{i}}{v}{\mu_iA_i}$. We note that $\size{\rho_i\mu_iA_i}\leq \size{\rho_k\mu_k A_k }+\size{A_k}$.
\end{construction}

\begin{lemma}[restate = PhaseOneTautology, name = ]
    \label{lemma:tautology}
For each $i\in\set{1,\dots,h}$, $\asbs{B_1}{y_1}\dots \asbs{B_i}{y_i}y_i$ is tautological with respect to $\varenum{\dera}$.
\end{lemma}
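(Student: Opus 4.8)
The plan is to prove the statement by strong induction on $i$, following the three ways the line $P_i$ can arise in the $\sFzo$ proof $\dera$. Write $D_j$ for the expanded formula $\asbs{B_1}{y_1}\dots\asbs{B_j}{y_j}y_j$. First I would record the easy structural facts: every $B_j$ is flat, contains no atomic connective and no $z$-variable, and its only free variables are pseudo-dual variables together with some of $y_1,\dots,y_{j-1}$ (and for an axiom line, only pseudo-dual variables); hence $D_i$ is flat, contains no atomic connective, and $\fv{D_i}\subseteq\varenum{\dera}$. So for any dualising elementary substitution $\eta$ — which, negation being trivial on variables, we may assume sends each pseudo-dual pair to complementary units, as otherwise $\intof{\eta D_i}$ is undefined and nothing is to be shown — the formula $\eta D_i$ is closed and $\intof{\eta D_i}$ is a well-defined Boolean value, which we must show equals $\one$.

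For the inductive steps I would argue as follows. If $P_i$ comes by modus ponens from $P_k$ and $P_l$ with $k,l<i$ and $P_l=\bar P_k\vlor P_i$, then $B_i=y_k\vlan y_l$, and unfolding the substitutions yields $D_i=D_k\vlan D_l$ (the subsequence of substitutions acting on $y_k$, resp. $y_l$, reconstructs $D_k$, resp. $D_l$); by the induction hypothesis $\intof{\eta D_k}=\intof{\eta D_l}=\one$, so the $\vlan$-clause of the interpretation map gives $\intof{\eta D_i}=\intof{\eta D_k\vlan\eta D_l}=\intof{\eta D_k}=\one$. If $P_i$ comes by substitution from $P_k$ with $k<i$, then $B_i=y_k\rng i$; since the guard $i$ occurs in none of $B_1,\dots,B_k$ and actual substitution commutes with adjoining a fresh guard to the range of every free variable, unfolding gives $D_i=\range{D_k}{i}$. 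As adjoining a guard to a unit does nothing, $\eta\range{D_k}{i}=\eta D_k$, so $\intof{\eta D_i}=\intof{\eta D_k}=\one$ by the induction hypothesis.

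The base case, and likewise the subcase of the induction where $P_i$ is an instance of an axiom $\frF_1,\dots,\frF_4$, is where I expect the real work to be, and it is the main obstacle. Here $B_i$ contains no $y$-variable, so $D_i=B_i$, and one verifies directly that $\intof{\eta B_i}=\one$ for the premise $B_i$ of each of the four axiom-simulating derivations $\Psi_i$ from Construction~\ref{cons:phase-1-a} (and its appendix). The computation uses two facts about $\eta$: a subformula all of whose leaves interpret to $\zer$ — e.g.\ a leaf $u\vlan\zer$ with $u$ a unit — interprets to $\zer$; and since $\eta_i v$ and $\eta_i v'$ are pseudo-dual throughout the construction, $\eta$ sends $\eta_i(v'\vlor v)$ to a disjunction of complementary units, which interprets to $\one$. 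Morally, each $B_i$ presents the tautology $P_i$ in subatomic form with the literals weakened by $\Psi_i$ frozen to the value that makes their clause trivial; carrying this out is four short, essentially mechanical case analyses, one per axiom.
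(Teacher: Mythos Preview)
Your proposal is correct and follows essentially the same approach as the paper: induction on $i$, handling the axiom lines by direct inspection of the $B_i$ produced in Construction~\ref{cons:phase-1-a} (where the relevant subformulae are disjunctions of pseudo-dual variables, hence sent to $\one$ by any dualising $\eta$), and handling modus ponens and substitution by unfolding $D_i$ to $D_k\vlan D_l$ and $\range{D_k}{i}$ respectively before invoking the hypothesis. Your exposition is in fact more explicit than the paper's in spelling out why $D_i=D_k\vlan D_l$ in the modus ponens case and why the added range is harmless in the substitution case, and your remark that one may restrict attention to $\eta$ mapping pseudo-dual pairs to complementary units (since otherwise $\intof{\eta D_i}$ is undefined) is a sensible clarification of a point the paper leaves implicit.
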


\begin{lemma}[restate = PhaseOneContradiction, name = ]
    \label{lemma:contradictory}
For each $i\in\set{1,\dots,h}$, $\tau_1\dots \tau_i z_i$ is contradictory with respect to $\varenum{\dera}$.
\end{lemma}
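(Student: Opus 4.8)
The plan is to prove the statement by induction on $i$, following the case analysis of Construction~\ref{cons:phase-1-a} that defines the $\tau_i$ in terms of how $P_i$ is obtained. What has to be shown is that the formula $\tau_1\cdots\tau_i z_i$, once all these explicit substitutions are carried out as actual substitutions, interprets to $\zer$ under every elementary substitution $\eta$ that is dualising on all pseudo-dual variables. The intuition is that $\tau_i z_i$ rewrites $z_i$ to (roughly) $z_k\vlor z_l$ (in the $\mp$ case), to $z_k\rng i$ (in the $\sub$ case), or to $\zer$ (in the axiom case); so unfolding $\tau_1\cdots\tau_i$ replaces $z_i$ by a disjunction of $\zer$'s together with ``leftover'' contradictory material coming from the merged cut-formulae $\acs{\rho_k\mu_k v\vlan\rho_l\mu_l v'}{v}{A_k}\sDn A_k$. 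The heart of the argument is that each such leftover piece is contradictory: $\eta_k v$ and $\eta_l v'$ are pseudo-dual, so $\rho_k\mu_k v\vlan\rho_l\mu_l v'$ interprets (under a dualising $\eta$) to a conjunction of $\zer$ and $\one$, which is $\zer$; and then Proposition~\ref{prop:intpoly} together with the interpretation clauses for $\vlor$ and $\zer$ collapses the whole disjunction to $\zer$.

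Concretely, I would organise the induction as follows. \textbf{Base / axiom case:} if $P_i$ is an axiom then $\tau_i=\esbs{\zer}{z_i}$, so $\tau_1\cdots\tau_i z_i$ reduces to $\tau_1\cdots\tau_{i-1}\zer$; since no $\tau_j$ for $j<i$ touches a variable occurring in $\zer$, this is just $\zer$, which is contradictory. \textbf{Modus ponens case:} $\tau_i=\esbs{(z_k\vlor z_l)\vlor \acs{\rho_k\mu_k v\vlan\rho_l\mu_l v'}{v}{A_k}\sDn A_k}{z_i}$, so $\tau_1\cdots\tau_i z_i$ expands to $\tau_1\cdots\tau_{i-1}\bigl((z_k\vlor z_l)\vlor \acs{\rho_k\mu_k v\vlan\rho_l\mu_l v'}{v}{A_k}\sDn A_k\bigr)$. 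Now I would push the remaining substitutions through the $\vlor$ (using the first clause of $\expeq$), apply the induction hypothesis to the $\tau_1\cdots\tau_{i-1}z_k$ and $\tau_1\cdots\tau_{i-1}z_l$ components (noting $k,l<i$) to see they are contradictory, and handle the third component: under a dualising $\eta$, each $\rho_k\mu_k v\vlan\rho_l\mu_l v'$ interprets to $\zer$ because $\eta_k v,\eta_l v'$ are pseudo-dual, whence by Proposition~\ref{prop:intpoly} the whole $\acs{\cdots}{v}{A_k}\sDn A_k$ interprets to $\zer$ as well (here one needs that $\sDn A_k$ has only $\vlor$-connectives, so $\intof{\cdot}$ of a formula all of whose variables go to $\zer$ is $\zer$). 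Finally $\intof{\zer\vlor\zer\vlor\zer}=\zer$. \textbf{Substitution case:} $\tau_i=\esbs{z_k\rng i}{z_i}$, so $\tau_1\cdots\tau_i z_i$ becomes $\tau_1\cdots\tau_{i-1}z_k\rng i$; the added range $i$ is invisible to the interpretation map (it only affects guarded substitutions, which are not present here), so this has the same interpretation as $\tau_1\cdots\tau_{i-1}z_k$, which is contradictory by the induction hypothesis since $k<i$.

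\textbf{Main obstacle.} The subtle point is bookkeeping about which $\tau_j$ act on which $z$-variables: the claim ``$\tau_1\cdots\tau_{i-1}$ applied to $z_k$ yields exactly $\tau_1\cdots\tau_k z_k$ (up to harmless ranges)'' relies on the fact that each $z_m$ is a fresh variable introduced only at line $m$ and that $\tau_m$ is the unique explicit substitution binding $z_m$, so the $\tau_j$ for $j>k$ simply pass over any $z_k$-free formula. I would state and use this as a small separate observation (all $z_m$ distinct and fresh, $\tau_m=\esbs{\cdot}{z_m}$, so $\esbs{A_{m}}{z_m}$ commutes trivially past anything not mentioning $z_m$) before running the induction. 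The second delicate point is that we must invoke Proposition~\ref{prop:intpoly} with $U=\zer$, which requires checking its hypothesis that every free variable of the open formula $\sDn A_k$ is sent (via $\rho_k\mu_k v\vlan\rho_l\mu_l v'$, then $\eta$) to something interpreting to $\zer$ — this is exactly the pseudo-duality of $\eta_k v$ and $\eta_l v'$, which is recorded in Construction~\ref{cons:phase-1-a}, combined with the fact that $\eta$ is dualising on pseudo-dual variables and elementary. These two observations dispatched, the rest is a routine unwinding of the expansion relations.
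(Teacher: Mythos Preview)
Your proposal is correct and follows essentially the same approach as the paper: an induction on $i$ with the same three-case split (axiom, modus ponens, substitution), and the same key observation that in the modus ponens case the residual term $\acs{\rho_k\mu_k v\vlan\rho_l\mu_l v'}{v}{A_k}\sDn A_k$ is contradictory because each conjunction $\rho_k\mu_k v\vlan\rho_l\mu_l v'$ collapses to $\zer$ under any dualising elementary substitution. Your write-up is in fact more explicit than the paper's on the bookkeeping of fresh $z$-variables and on the appeal to Proposition~\ref{prop:intpoly}; the paper simply asserts ``by construction'' at that point. One minor remark: your parenthetical that ``$\sDn A_k$ has only $\vlor$-connectives'' is true but not needed, since Proposition~\ref{prop:intpoly} applies to any open formula once all substituted values interpret to the same unit.
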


The following lemma shows that the ranges and guarded substitutions that we introduce interact in the intended way.

\begin{lemma}[restate = SigmaRhoEta, name = ]
    \label{lemma:sigmarhoeta}
For each $i$ and each variable $v\in\fv{A_i}$,  there exists a range $R\subseteq \set{1,\dots,i}$ and a derivation \[\odv{\sigma_h\dots\sigma_1\rho_i\mu_i v}
{}{\eta_i\mu_i v^R}{}\] whose width is $O(hw)$, whose only vertical composition is by expansion and whose height is $O(h)$.
\end{lemma}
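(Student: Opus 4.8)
The plan is to prove the statement by induction on $i$, splitting into cases according to how the line $P_i$ is obtained in the $\sFzo$ proof $\dera$. Throughout I will use the invariant — immediate from the construction, since $\rho$ adjoins only the current line number to ranges, at most one element per substitution step — that all ranges occurring in $\rho_j\mu_j w$ are subsets of $\set{1,\dots,j}$; I will also carry, as part of the induction hypothesis, the implicit fact that $\eta_j\mu_j w$ has constant size and hence at most one variable occurrence (because $\eta_j$ is elementary and $\mu_j$ affects each variable only by a bounded amount).

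If $P_i$ is an axiom (this covers the base case $i=1$), then $\sigma_i=\epsilon$ and $\rho_i=\eta_i$, so $\rho_i\mu_i v=\eta_i\mu_i v$ contains no ranges at all; hence every $\sigma_j$ acts trivially by the third clause of $\expto$, so $\sigma_h\dots\sigma_1\rho_i\mu_i v$ expands to $\eta_i\mu_i v$ and we take $R=\emptyset$, the derivation being a bounded block of trivial expansions whose width is $\size{\eta_i\mu_i v}$ plus the total size of the $\sigma_j$, i.e. $O(hw)$. If $P_i$ is obtained by modus ponens from $P_k,P_l$ with $P_l=\bar P_k\vlor P_i$, then by the construction $\rho_i=\rho_l$, $\mu_iA_i=\mu_lA_{i,l}$ and $\eta_lA_{i,l}=\eta_iA_i$; writing $\hat v\in\fv{A_{i,l}}$ for the variable in the position of $v\in\fv{A_i}$, we get $\rho_i\mu_i v=\rho_l\mu_l\hat v$ and $\eta_i\mu_i v=\eta_l\mu_l\hat v$, so the derivation and range supplied by the induction hypothesis for $l<i$ transfer verbatim, with $R\subseteq\set{1,\dots,l}\subseteq\set{1,\dots,i}$.

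The substitution case is the heart of the proof. Suppose $P_i$ comes from $P_k$ by replacing atoms $a_{j_1},\dots,a_{j_m}$ by units, so $\sigma_i=\gsbss{u_{j_1}}{x_{j_1}}{\bar u_{j_1}}{\varneg x {j_1}}{i}\dots$, and for $v\in\fv{A_i}$ with corresponding $v'\in\fv{A_k}$ we have $\rho_i\mu_i v=\range{(\rho_k\mu_k v')}{i}$ and $\mu_i v$ in the position of $\mu_k v'$. I would first record two sub-observations. First, because $\rho_k\mu_k v'$ mentions only ranges $\subseteq\set{1,\dots,k}$ and guarded substitutions never enlarge a range, the substitutions $\sigma_i,\dots,\sigma_h$ — all guarded by indices exceeding $k$ — act trivially on it, so by the induction hypothesis for $k$ the full expansion of $\sigma_{i-1}\dots\sigma_1\rho_k\mu_k v'$ is already the flat formula $\eta_k\mu_k{v'}^{R'}$, with $R'\subseteq\set{1,\dots,k}$ and at most one variable occurrence. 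Second, adjoining the guard $i$ to all ranges commutes with every $\sigma_j$ for $j\neq i$, since the membership test ``$j\in S$'' is unaffected by inserting $i$ into $S$, and any variable a $\sigma_j$ turns into a unit discards the extra guard exactly as the unit would. Combining these, the expansion of $\sigma_h\dots\sigma_1\range{(\rho_k\mu_k v')}{i}$ proceeds by carrying out $\sigma_1,\dots,\sigma_{i-1}$ (pushing the adjoined $i$ through, reaching $\eta_k\mu_k{v'}^{R'\cup\set i}$), then carrying out $\sigma_i$, and then observing that $\sigma_{i+1},\dots,\sigma_h$ act trivially because the surviving ranges lie in $\set{1,\dots,i}$. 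A three-way case analysis on $\eta_k\mu_k v'$ — a unit; a pseudo-dual variable that $\sigma_i$ substitutes, because its atom is one of the $a_{j_\ell}$; or a pseudo-dual variable that $\sigma_i$ leaves alone — finishes the case: in the first two the outcome is exactly $\eta_i\mu_i v$ and $R=\emptyset$ works (using that a leaf of $P_k$ that is substituted away becomes the unit $u_{j_\ell}$ in $P_i$, so that $\eta_i$ sends the variable in that position to that unit), and in the third the outcome is $\eta_i\mu_i v^{R'\cup\set i}$, so $R=R'\cup\set i\subseteq\set{1,\dots,i}$. The height and width bounds follow from the induction hypothesis together with the single-variable-occurrence shape of $\rho_i\mu_i v$ and the fact that its range grows only by $O(1)$ per substitution step.

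The step I expect to be the main obstacle is precisely this substitution case: one must verify that the guard $i$ adjoined by $\rho_i$ is delivered to $\sigma_i$ and to nothing else, that it is consumed exactly when $\sigma_i$'s target variable is the pseudo-dual variable sitting at $v$'s position, and that the residual ranges never make a later $\sigma_j$ fire spuriously. The clean way to keep this under control is to strengthen the induction hypothesis with the two bookkeeping facts used above (ranges confined to $\set{1,\dots,j}$, and $\eta_j\mu_j w$ of constant size), so that at step $i$ one is manipulating a formula with a single variable occurrence and a small range rather than an arbitrary formula; pinning down that strengthened statement is the real content of the argument.
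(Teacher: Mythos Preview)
Your proposal is correct and follows essentially the same approach as the paper: induction on $i$ with the same three cases (axiom, modus ponens, substitution), the same observation that the extra guard $i$ introduced by $\rho_i$ is picked up by $\sigma_i$ and by no other $\sigma_j$, and the same conclusion $R=R'\cup\set i$ in the substitution case. Your treatment is in fact more explicit than the paper's---you spell out the commutation of ``adjoin $i$ to all ranges'' with each $\sigma_j$ ($j\ne i$) and the three-way case split on $\eta_k\mu_k v'$, both of which the paper leaves implicit---but note one small inaccuracy: $\eta_j\mu_j w$ need not have a \emph{single} variable occurrence (for the $\frF_2$ axiom $\mu_j$ can send $v\mapsto v\vlor v$, giving two occurrences), though since all occurrences are of the same variable with the same range your case analysis still goes through unchanged.
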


\begin{lemma}[restate = PsiIPoly, name = ]
  \label{lemma:derbipoly}
For each $i$, the size of $\derb_i$ is polynomial in the size of $\dera$.
\end{lemma}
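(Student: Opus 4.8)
The plan is to bound the size of each $\derb_i$ by tracing through the three cases of Construction~\ref{cons:phase-1-a}, showing in each case that $\derb_i$ is obtained from $\Psi_i$ together with a composition by expansion, and that $\size{\Psi_i}$ is controlled by the merge lemmas. Recall that $\derb_i$ has premise $B_i$ and conclusion $\tau_i z_i \vlor \rho_i\mu_i A_i$, and that its body is essentially $\Psi_i$ (the derivations displayed in the construction) glued to the expansion step $\esbs{\cdot}{z_i}z_i \expeq \asbs{\cdot}{z_i}z_i$. So $\size{\derb_i} = O(\size{\Psi_i} + \size{B_i} + \size{\tau_i} + \size{\rho_i\mu_i A_i})$, and it suffices to bound each summand polynomially in $\size\dera$.

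First I would establish the running invariants on the auxiliary data. Let $w = \max_j \size{P_j}$ be the width of $\dera$, so $\size{A_j} = O(w)$ for every $j$ (the open translation only renames atoms to variables and the factorisation does not increase size). For the ranges: by the final observation in Construction~\ref{cons:phase-1-a}, $\size{\rho_i\mu_i A_i} \leq \size{\rho_k\mu_k A_k} + \size{A_k}$ in the substitution case, and in the modus ponens and axiom cases $\rho_i,\mu_i$ are inherited or built directly from $\eta_i$, so an easy induction on $i$ gives $\size{\rho_i\mu_i A_i} = O(hw)$, where $h$ is the length of $\dera$; the ranges attached to variables have size at most $h$ since each is a subset of $\set{1,\dots,i}$. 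Similarly $\size{\mu_i A_i} = O(w)$ directly from the definitions, and $\size{B_i} = O(1)$ in the modus ponens and substitution cases ($B_i$ is $y_k\vlan y_l$ or $y_k\rng i$, the latter of size $O(h)$) and $O(w)$ in the axiom case.

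Next I would bound $\size{\Psi_i}$ and $\size{\tau_i}$ in each of the three cases. In the \emph{axiom} case, $\Psi_i$ is built from the derivation on the left of Lemma~\ref{lemma:mergeswitch} applied to the open formula $X$ (which has $\size X = O(w)$) together with some $\assD$, $\comD$ and unit-expansion steps, so by the size bounds in that lemma its height is $O(\size X) = O(w)$ and its width is $O(w)$, hence $\size{\Psi_i} = O(w^2)$; and $\tau_i = \esbs{\zer}{z_i}$ has constant size. In the \emph{modus ponens} case, $\Psi_i$ consists of a few $\sw$ and $\assD$ steps plus one application of the right-hand derivation of Lemma~\ref{lemma:mergeswitch} to the open formula $A_k$; by that lemma its height is $O(\size{A_k}) = O(w)$ and its width is $O(\size{\rho_k\mu_k A_k} + \size{\rho_l\mu_l A_{k,l}}) = O(hw)$, so $\size{\Psi_i} = O(hw^2)$, and $\tau_i$ substitutes the formula $(z_k\vlor z_l)\vlor \acs{\rho_k\mu_k v \vlan \rho_l\mu_l v}{v}{A_k}\sDn A_k$ for $z_i$, which has size $O(hw)$. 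In the \emph{substitution} case $\Psi_i$ is just the single expansion step $z_k\rng i \expeq \esbs{z_k\rng i}{z_i}z_i \vlor \range{\rho_k A_k}{i}$, so $\size{\Psi_i} = O(hw)$, and $\tau_i = \esbs{z_k\rng i}{z_i}$ has size $O(h)$. Combining, in every case $\size{\derb_i} = O(hw^2) = O(\size\dera^2)$, which is polynomial in $\size\dera$.

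**The main obstacle** I anticipate is the bookkeeping around the accumulation of ranges through chains of substitution inference steps: because $\rho_i$ adds a fresh range element at each substitution line, a variable occurrence can accumulate up to $h$ guards, and one must check that this only multiplies sizes by a factor of $h$ rather than causing a cascade — in particular that $\rho_i$ is applied to $\mu_i A_i$ (whose size is $O(w)$, independent of $i$) and not to something already blown up, so that the $+\size{A_k}$ recurrence really does stay linear in $i$. The merge lemmas then do the genuine work and their stated $O(\size A)$ height / $O(\sum(\size{B_i}+\size{C_i}))$ width bounds are exactly what is needed; the rest is careful accounting.
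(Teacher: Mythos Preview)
Your proposal is correct and follows essentially the same approach as the paper, just with considerably more detail. The paper's own proof is a terse three-sentence sketch: it notes that the width of $\derb_i$ is $O(hw)$ (worst case arising in the modus ponens and substitution cases from the $O(h)$ ranges accumulated by $\rho_i$ on the $O(w)$ variables of $A_i$) and that the height is $O(w)$ (worst case in the axiom and modus ponens cases, from the merge lemmas), which is exactly the accounting you carry out case by case. Your identification of the range-accumulation bookkeeping as the only delicate point is spot on, and your final bound $O(hw^2)$ agrees with the paper's implicit width-times-height estimate.
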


\begin{construction}
\label{cons:phase-1-b}
Let $\dera=P_1,\dots,P_h$ be an $\sFzo$ proof, with atomic enumeration $\atmenum{\dera}$, variable enumeration $\varenum{\dera}$, and factorised open translation $(\eta_1,A_1),\dots,(\eta_h,A_h)$. Using the derivations, substitutions, and formulae from Construction \ref{cons:phase-1-a}, we construct the $\TBLScf$ derivation $\derb$ given in Figure \ref{figure:phase1}.
\end{construction}

\begin{lemma}[restate = PsiPoly, name = ]
  \label{lemma:phase1}
The size of $\derb$ is polynomial in the size of $\dera$: its width is $O(h^2w)$ and its height is $O(hw)$.
\end{lemma}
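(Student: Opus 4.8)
The plan is to read the bounds directly off the block structure of the derivation $\derb$ in Figure~\ref{figure:phase1}. From top to bottom, $\derb$ is a sequence of $h$ segments: segment $i$ first descends through the subderivation $\derb_i$ — which sits inside the context assembled from the prefix $\tau_1\cdots\tau_{i-1}$, the actual substitutions $\asbs{z_j\vlor\rho_j\mu_j A_j}{y_j}$ for $j<i$, and the remaining body $\esbs{B_{i+1}}{y_{i+1}}\cdots\esbs{B_h}{y_h}y_h$ sitting below the substitution on $y_i$ — and then performs a constant number of compositions by expansion, namely pulling $\tau_i$ out in front and replacing the explicit substitution $\esbs{z_i\vlor\rho_i\mu_i A_i}{y_i}$ by its actual counterpart, before handing control to segment $i+1$. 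Making this schematic picture a bona fide $\TBLScf$ derivation requires checking that after the substitutions of the earlier segments the body presents exactly $\pr\derb_i$ where needed, and that the instances of $\expeq$ used are licensed by their side conditions — here one uses that $\derb$ contains no guarded substitutions, the $\sigma_i$ being invoked only in a later phase. Granting this, it remains to bound (a) each $\derb_i$ and (b) the context and body formulae.

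For (a) I would inspect Construction~\ref{cons:phase-1-a}, exactly as in the proof of Lemma~\ref{lemma:derbipoly} but keeping track of constants. In the axiom and modus ponens cases $\derb_i$ consists of a constant number of algebraic rules together with one application of Lemma~\ref{lemma:mergeswitch} to an open formula of size $O(\sizeof{A_i})$, respectively $O(\sizeof{A_k})$, both $O(w)$ since the factorised translation of a line $P_i$ has size $O(\sizeof{P_i})$; hence $\h(\derb_i)=O(w)$. In the substitution case $\derb_i$ is a single composition by expansion, so $\h(\derb_i)=O(1)$. Summing over the $h$ segments and adding the $O(1)$ gluing expansions of each gives $\h(\derb)=O(hw)$.

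For (b) the key point is that the material substituted into the body does not compound. Each $\rho_i\mu_i A_i$ has size $O(hw)$: $\mu_i A_i$ has the same tree shape as $A_i$, hence $O(w)$ variable occurrences, and $\rho_i$ only attaches ranges, each a subset of $\set{1,\dots,h}$, so each ranged variable has size $O(h)$ — the inequality $\sizeof{\rho_i\mu_i A_i}\le\sizeof{\rho_k\mu_k A_k}+\sizeof{A_k}$ from Construction~\ref{cons:phase-1-a} makes this precise. Consequently each $\tau_i$ substitutes a formula of size $O(hw)$, so the prefix $\tau_1\cdots\tau_h$ has size $O(h^2w)$. The body $\esbs{B_1}{y_1}\cdots\esbs{B_h}{y_h}y_h$ starts at size $O(hw)$ (each $B_i$ has size $O(w)$ in the axiom case and $O(1)$ otherwise), and each step replacing an explicit substitution $\esbs{z_i\vlor\rho_i\mu_i A_i}{y_i}$ by its actual counterpart enlarges the body by at most (number of free occurrences of $y_i$ currently present) times $O(hw)$; since $z_i\vlor\rho_i\mu_i A_i$ contains no $y$-variables these steps never create fresh $y$-occurrences, and the original body has at most $2h$ free $y$-occurrences in total, each $B_i$ contributing at most two. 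So the total growth over all $h$ steps is $O(h)\cdot O(hw)=O(h^2w)$, and the body stays of size $O(h^2w)$. Every formula occurring in $\derb$ is such a prefix-plus-body, possibly with one formula from some $\derb_i$ (of width $O(hw)$) plugged in, whence $\w(\derb)=O(h^2w)$. Finally $\derb$ is a chain of $O(\h(\derb))=O(hw)$ steps between formulae of width $O(h^2w)$, and each $\derb_i$ has polynomial size (Lemma~\ref{lemma:derbipoly}); so $\sizeof{\derb}$ is bounded by a polynomial in $h$ and $w$, hence in $\sizeof{\dera}$ since $\sizeof{\dera}\ge\max(h,w)$.

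I expect the width analysis of (b) to be the main obstacle: one must verify rigorously that ranges stay confined to $\set{1,\dots,h}$ under the iterated $\rho_i$, that the $z$-variables are never substituted into positions that retrigger growth, and that the substituted bodies genuinely match the premises of the successive $\derb_{i+1}$ up to only $O(1)$ bookkeeping expansions — all of which the schematic Figure~\ref{figure:phase1} suppresses. The height bound, by contrast, is a routine summation once the case analysis of Construction~\ref{cons:phase-1-a} is in hand.
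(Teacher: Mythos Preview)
Your proposal is correct and follows the same approach as the paper's own proof: bound the height by summing the heights of the $\derb_i$ over the $h$ segments, and bound the width by adding up the sizes of the accumulated $\tau_i$ prefixes and of the formulae $z_i\vlor\rho_i\mu_i A_i$ substituted into the body. The paper's argument is a three-line calculation, essentially $\w\derb\le h(\max_i\sizeof{\tau_i}+2\max_i\sizeof{z_i\vlor\rho_i\mu_i A_i})=O(h^2w)$ and $\h\derb=h\cdot O(\max_i\h\derb_i)=O(hw)$; your $y$-occurrence counting argument (``at most $2h$ free $y$-occurrences in total, none regenerated'') is exactly what justifies the factor $2$ in that formula, which the paper leaves implicit.
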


\subsection{Phase II: Eversion}

The derivation $\derb$ given in Figure \ref{figure:phase1} is designed to be a superposition of the derivation we need to work on. We now generalise the approach described in Example \ref{example:observation}. For this we have to bring the correct atoms back into the conclusion of $\derb$. This is done by Lemma~\ref{lemma:phase2eversion} below, which is an example of what is called \emph{eversion} in \cite{BarrGuglRalph:25:strlin}. 

\begin{definition}
Given an atomic enumeration $\atmenum{n}$ and any variable $v$, we define the formula $\expr{n}{v}$ to be 
\[\expr{n}{v}=
\esbs{v\rng{l_1}\ba_1 v\rng{r_1}}{v}\dots\esbs{v\rng{l_n}\ba_n v\rng{r_n}}{v}v
\] for $l_1,r_1,\dots,l_n,r_n$ guards. We define the formula $\expri{n}{j}{v}$ to be the formula obtained from $\expr{n}{v}$ by omitting the $j^{th}$ term; that is, the formula associated with the atomic enumeration $\atmenum{n}\setminus \set{a_j,\atmneg{a}{j}}$. 
Given a variable enumeration $x_1, \varneg{x}{1}\dots,x_n, \varneg{x}{n}$ of $\atmenum{n}$, for each $j$, we define the guarded substitution \[\nu_j=\gsbss{\zer}{x_j}{\one}{\varneg{x}{j}}{l_j}\gsbss{\one}{x_j}{\zer}{\varneg{x}{j}}{r_j}\] 
\end{definition}

\begin{proposition}[restate = Expprestaut, name = ]
    \label{prop:expprestaut}
Let $\atmenum{n}$ be an atomic enumeration and let $\varenum{n}$ be a variable enumeration. Let $A$ be a formula such that $\fv A \subseteq \varenum{n}$. If $A$ is a tautology (resp. contradiction) with respect to $\varenum{n}$ then $\intof{\nu_n\dots\nu_1 \esbs{A}{y} \expr{n}{y}}=\one$ (resp. $\zer$).
\end{proposition}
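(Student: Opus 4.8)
The plan is to analyse the formula $\nu_n\cdots\nu_1 \esbs{A}{y}\expr{n}{y}$ by unfolding the substitutions from the inside out, and then checking that the result is the interpretation of a suitable instance of $A$ under an elementary dualising substitution, so that the hypothesis that $A$ is a tautology (resp.\ contradiction) with respect to $\varenum{n}$ applies directly. First I would expand $\esbs{A}{y}\expr{n}{y}$ using the definition of $\expr{n}{y}$: each term $\esbs{v\rng{l_i}\ba_i v\rng{r_i}}{v}$ nested $n$ times transforms the single free occurrence of $y$ into the iterated atomic-connective tree $\cdots(y\rng{l_1\cdots}\ba_1 y\rng{r_1\cdots})\cdots$, and then the outermost $\esbs{A}{y}$ replaces the remaining $y$ by $A$. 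Concretely, after carrying out all the explicit substitutions in $\expr{n}{y}$, every free variable occurrence $v^R$ of $A$ is replaced by a copy of the tree $\esbs{v^R\rng{l_1}\ba_1 v^R\rng{r_1}}{v}\cdots\esbs{v^R\rng{l_n}\ba_n v^R\rng{r_n}}{v}v$ (with $R$ distributed onto the ranges, per the definition of $\range{\cdot}{S}$ and the expansion relation $\expeq$). The key bookkeeping point is that the guards $l_1,r_1,\dots,l_n,r_n$ are freshly chosen and pairwise distinct, so that the ranges introduced by different terms of $\expr{n}{\cdot}$ do not interfere.

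Next I would apply the guarded substitutions $\nu_1,\dots,\nu_n$. Recall $\nu_j=\gsbss{\zer}{x_j}{\one}{\varneg{x}{j}}{l_j}\gsbss{\one}{x_j}{\zer}{\varneg{x}{j}}{r_j}$. By the expansion rules~\eqref{eq:expgs} for guarded substitutions, $\nu_j$ replaces exactly those occurrences of $x_j$ (resp.\ $\varneg{x}{j}$) whose range contains $l_j$ by $\zer$ (resp.\ $\one$), and those whose range contains $r_j$ by $\one$ (resp.\ $\zer$), leaving all other occurrences untouched. In the expanded formula, for a pseudo-dual pair $x_j,\varneg{x}{j}$ appearing in $A$, the occurrences carrying guard $l_j$ are precisely the left arguments of the $j^{th}$-level $\ba_j$-node, and those carrying $r_j$ are the right arguments. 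Hence after applying $\nu_j$ the $j^{th}$ atomic node above each $x_j$ becomes $\zer\ba_j\one$ and the one above each $\varneg{x}{j}$ becomes $\one\ba_j\zer$ — that is, these nodes become exactly $a_j$ and $\cneg{a_j}$ in the subatomic encoding. After all of $\nu_1,\dots,\nu_n$ have been applied, the remaining free variables are only those of $A$ not among $\varenum{n}$ (of which there are none, since $\fv A\subseteq\varenum{n}$), so the formula becomes closed: it is $A$ with each $v^R$ replaced by a tree of $\ba_i$-nodes whose leaves are units, which under $\intof{\cdot}$ collapses (via the table in Figure~\ref{fig:interpretation}, the cases $\zer\ba\one\mapsto a$ and $\one\ba\zer\mapsto\cneg a$) to $\intof{\,\eta A\,}$ where $\eta$ is the elementary substitution sending $x_j\mapsto(\zer\ba_j\one)$ and $\varneg{x}{j}\mapsto(\one\ba_j\zer)$ — which is dualising with respect to all pseudo-dual pairs. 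This is exactly the situation covered by Proposition~\ref{prop:StToSub} / the definition of ``tautological with respect to $\varenum{n}$'', so the hypothesis gives $\intof{\nu_n\cdots\nu_1\esbs{A}{y}\expr{n}{y}}=\one$ (and symmetrically $=\zer$ in the contradiction case).

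I expect the main obstacle to be the careful verification that ranges are tracked correctly through the interleaving of explicit and guarded substitutions — in particular, that when $\range{\cdot}{S}$ is applied to a variable already carrying other guards, the guard $l_j$ or $r_j$ that $\nu_j$ is looking for ends up in the range of exactly the intended occurrences and no others, and that the side conditions on the expansion relations~\eqref{eq:expgs} (flatness, and the requirement on ranges not containing $p$) are met at each step so that the $\nu_j$ can legitimately be pushed through the explicit substitutions. Getting the order of operations right — expand all of $\expr{n}{y}$ first, then peel off $\nu_1,\dots,\nu_n$ — keeps this manageable, and an induction on $n$ (adding one atomic layer and one guarded substitution $\nu_j$ at a time) is the clean way to organise it: the inductive step reduces $\expr{n}{v}$ to $\expri{n}{j}{v}$ after applying $\nu_j$, so the inductive hypothesis for the smaller atomic enumeration applies.
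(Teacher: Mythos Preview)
Your plan contains a genuine confusion about what $\esbs{A}{y}\expr{n}{y}$ expands to. You begin correctly: the nested substitutions in $\expr{n}{y}$ build an atomic-connective tree with $2^n$ leaves, each being $y$ with a distinct range $R\in\{l_1,r_1\}\times\cdots\times\{l_n,r_n\}$, and then $\esbs{A}{y}$ replaces each such leaf $y^R$ by $\range{A}{R}$. But in the very next sentence you switch to asserting that ``every free variable occurrence $v^R$ of $A$ is replaced by a copy of the tree $\expr{n}{v}$''. That is the formula $\acs{\expr{n}{v}}{v}{A}A$, which is a genuinely different formula --- it is the \emph{conclusion}, not the premise, of Lemma~\ref{lemma:phase2eversion}, and the two are related only by a derivation, not by equality. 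Your second paragraph then analyses this second formula: the phrase ``the $j^{\text{th}}$ atomic node above each $x_j$ becomes $\zer\ba_j\one$'' only makes sense there, since in the actual expanded formula the $\ba_j$-nodes sit above whole copies of $A$, not above individual variable occurrences.

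The paper's argument stays with the correct expansion and is much shorter than your plan. Each of the $2^n$ leaves is $\range{A}{R}$, and applying $\nu_n\cdots\nu_1$ to such a leaf amounts to an \emph{elementary} closing substitution that sends $x_j\mapsto\zer,\;\varneg{x}{j}\mapsto\one$ when $l_j\in R$ and $x_j\mapsto\one,\;\varneg{x}{j}\mapsto\zer$ when $r_j\in R$; this substitution is dualising on every pseudo-dual pair, so the tautology hypothesis gives $\intof{\nu_n\cdots\nu_1\range{A}{R}}=\one$ directly. Every leaf therefore interprets to $\one$, and since $\intof{\one\ba_i\one}=\one$ by the table in Figure~\ref{fig:interpretation}, the whole tree interprets to $\one$. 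This sidesteps two problems your route runs into: the substitution $x_j\mapsto(\zer\ba_j\one)$ you invoke at the end is not elementary, so the hypothesis ``tautological with respect to $\varenum{n}$'' does not apply to it; and nested interpretations of the shape $\intof{a_j\,\ba_i\,a_j}$, which your analysis of the wrong formula would require, are in fact undefined. No induction on $n$ is needed.
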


\begin{lemma}[restate = PhaseTwoEversion, name = ]
    \label{lemma:phase2eversion}
Let $\atmenum{n}$ be an atomic enumeration, let $y$ be a variable and let $A$ be a flat open formula. Then there exists a $\TBLScf$ derivation $\odv{\esbs{A}{y}\expr{n}{y} }{}{\acs{\expr{n}{v}}{v}{A}A }{\alp\sDn\ba}$ whose width and height are both $O(n\size A)$.
\end{lemma}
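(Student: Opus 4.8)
The plan is to assemble the derivation out of $n$ copies of the merge‐medial derivations of Lemma~\ref{lemma:mergemedial}, one for each atom $\ba_j$ in the enumeration. The role of each is to \emph{commute} $\ba_j$ past the connectives of $A$: in $\esbs{A}{y}\expr{n}{y}$ the connectives of $A$ lie below the $\ba_j$'s coming from $\expr{n}{y}$, in $\acs{\expr{n}{v}}{v}{A}A$ they lie above, and one application of Lemma~\ref{lemma:mergemedial} with $\alp=\ba_j$ (using $\sDn{\ba_j}=\ba_j$) replaces a pair of substitution‐instances of $A$ joined by $\ba_j$ with a single instance in which $\ba_j$ has been pushed down to the leaves. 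The base case $n=0$ is trivial: $\expr{0}{y}=y$, so $\esbs{A}{y}\expr{0}{y}\expeq A\expeq\acs{\expr{0}{v}}{v}{A}A$ by compositions by expansion.

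Concretely, I would process the atoms one at a time using the factorisation $\expr{n}{y}=\esbs{y\rng{l_1}\ba_1 y\rng{r_1}}{y}\,\expri{n}{1}{y}$ (and similarly at later stages). A single phase: first apply the outermost explicit substitution $\esbs{A}{y}$ to the label $y\rng{l_1}\ba_1 y\rng{r_1}$ via the expansion relations of~\eqref{eq:expes}, rewriting $\esbs{A}{y}\expr{n}{y}$ to $\esbs{\range{A}{l_1}\ba_1\range{A}{r_1}}{y}\,\expri{n}{1}{y}$; then expose and handle the remaining structure so as to arrive at a formula of the form $\acs{\range{\expri{n}{1}{v}}{l_1}}{v}{A}A\;\ba_1\;\acs{\range{\expri{n}{1}{v}}{r_1}}{v}{A}A$ (here $\range{A}{l_1}\ba_1\range{A}{r_1}$ is still flat and open, since adding ranges and one atomic connective creates no units, so the remaining atoms can be processed recursively); then apply Lemma~\ref{lemma:mergemedial} with $\alp=\ba_1$ and open formula $A$ to merge these along $\ba_1$, obtaining $\acs{\range{\expri{n}{1}{v}}{l_1}\ba_1\range{\expri{n}{1}{v}}{r_1}}{v}{A}A$; and finally close with compositions by expansion at each of the $O(\size{A})$ leaves of $A$ that re‐fold $\range{\expri{n}{1}{v}}{l_1}\ba_1\range{\expri{n}{1}{v}}{r_1}$ back into $\expr{n}{v}$. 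Each phase contributes height $O(\size{A})$ (one merge‐medial pushed through all of $A$), every intermediate formula has size $O(n\size{A})$, and there are $n$ phases, which gives the claimed $O(n\size{A})$ bounds on both height and width.

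Cut‐freeness needs no argument: the merge‐medial derivations use only the rules $\sDn\alp\bet$ and $\bet\sDn\alp$, which for atomic $\alp$ are medial‐type rules $\ba_j\bet,\bet\ba_j$ and are never the cut $\sUp\ba_j\vlan$ or $\vlan\sUp\ba_j$; so the whole derivation lies in $\TBLScf$. The real difficulty is the size bound. The binary‐tree normal form of $\expr{n}{v}$ has $2^n$ leaves, so the derivation must never produce it: every phase must be carried out on the \emph{chain} (explicit‐substitution) presentation of the expression, in which the re‐used substitution variable keeps the size linear in $n$. This forces the range‐bookkeeping to be set up very carefully — in particular, after each merge‐medial the newly exposed atomic connective must be re‐folded into the chain \emph{without} permuting atoms inside an expanded tree (such a permutation would be exponential), and the ranges $l_j,r_j$ accumulated along the way must be threaded so that the final chain is exactly $\expr{n}{v}$. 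Getting this compatibility right — between the expansion relations~\eqref{eq:expes}--\eqref{eq:expgs}, the merge‐medial derivations, and the compressed presentation of $\expr{n}{v}$ — is the technical heart of the proof, and it is precisely what the explicit‐substitution machinery of the strictly‐linear setting is designed to make possible.
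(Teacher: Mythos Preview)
Your high-level picture is right — one merge-medial per atom, chained together — but the order in which you apply the merge-medial and the recursion is backwards, and this is fatal for the size bound.

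You recurse on the formula $A'=\range{A}{l_1}\ba_1\range{A}{r_1}$, which has size roughly $2\size A$. Unfolding the recursion, the call at depth $d$ is on a formula of size $\approx 2^{d}\size A$, so the intermediate derivations have width $\Theta(2^{n}\size A)$, not $O(n\size A)$. Your claim ``every intermediate formula has size $O(n\size A)$'' is precisely where the argument breaks: it holds for the \emph{endpoints} of each phase but not for what happens inside the recursive call. (Trying to close the induction formally: if the bound is $c\,n\,\size A$, the inductive hypothesis on $A'$ gives $c(n-1)\cdot 2\size A\approx 2cn\size A$, which already exceeds $cn\size A$.) The paper avoids this by applying Lemma~\ref{lemma:mergemedial} \emph{first}, pushing $\ba_1$ down through $A$ to obtain $\acs{v\rng{l_1}\ba_1 v\rng{r_1}}{v}{A}A$, then factoring the small terms $v\rng{l_1}\ba_1 v\rng{r_1}$ out as explicit substitutions, and only then recursing on $\esbs{A}{y}\expri{n}{1}{y}$ with the \emph{same} $A$. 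The inductive call is thus always on the original $A$, so the width stays $O(n\size A)$.

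There is a second, smaller gap: your ``re-fold'' step does not go through by expansion alone. After your merge you have $\range{\expri{n}{1}{v}}{l_1}\ba_1\range{\expri{n}{1}{v}}{r_1}$ at each leaf, which (fully expanded) has $\ba_1$ at the root and $\ba_2,\ldots,\ba_n$ below it; but $\expr{n}{v}$ has $\ba_n$ at the root and $\ba_1$ at the leaves. These are genuinely different trees, and the expansion relations~\eqref{eq:expes} cannot permute atomic connectives. In the paper's order this problem disappears: after the inductive hypothesis one has the explicit substitutions $\exs{v\rng{l_1}\ba_1 v\rng{r_1}}{v}{A}$ sitting outside $\acs{\expri{k+1}{1}{v}}{v}{A}A$, and re-associating them onto the chain at each leaf gives literally $\esbs{v\rng{l_1}\ba_1 v\rng{r_1}}{v}\expri{k+1}{1}{v}=\expr{k+1}{v}$ by definition, in $O(\size A)$ expansion steps.
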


\begin{lemma}[restate = PhaseTwoMerge, name = ]
    \label{lemma:phase2merge}
Let $\atmenum{n}$ be an atomic enumeration, let $1\leq j\leq n$ and let $v$ be a variable. Then there exists a $\TBLScf$ derivation \[\odv{\expr{n}{v}}
{}{\range{\expri{n}{j}{v} }{l_j} \ba_j \range{\expri{n}{j}{v} }{r_j}}{\ba\sUp\ba_j}\] whose width and height are both $O(n)$.
\end{lemma}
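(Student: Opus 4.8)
The plan is to read the formula $\expr{n}{v}$ as the iterated explicit substitution $\esbs{D_1}{v}\cdots\esbs{D_n}{v}v$ with $D_i=v\rng{l_i}\ba_i v\rng{r_i}$. Fully expanded, this is a complete binary tree whose connective at depth $n-i+1$ is $\ba_i$ and whose leaves are occurrences of $v$ carrying the range that records the path to them; the target $\range{\expri{n}{j}{v}}{l_j}\ba_j\range{\expri{n}{j}{v}}{r_j}$ is (the expansion of) the same collection of leaves reorganised so that $\ba_j$ sits at the root. So the task is a rotation of $\ba_j$ upward past $\ba_{j+1},\dots,\ba_n$, and the whole point is to perform it on the compact representation of size $O(n)$ and never on its exponentially large expansion.

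The engine is the ``up'' half of Lemma~\ref{lemma:mergemedial}. I would factor $\expr{n}{v}=\esbs{D_1}{v}\cdots\esbs{D_{j-1}}{v}(\esbs{D_j}{v}E)$ with $E=\esbs{D_{j+1}}{v}\cdots\esbs{D_n}{v}v$, rewrite $\esbs{D_j}{v}E\expeq\asbs{(v\rng{l_j}\ba_j v\rng{r_j})}{v}E$ (a legal composition by expansion, since $E$ contains no guarded substitutions), and apply Lemma~\ref{lemma:mergemedial} with open skeleton $E$, connective $\alp:=\ba_j$, and the single free variable $v$ filled by $v\rng{l_j}$ on the left and $v\rng{r_j}$ on the right. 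This yields a derivation $\odv{\asbs{(v\rng{l_j}\ba_j v\rng{r_j})}{v}E}{}{\range{E}{l_j}\ba_j\range{E}{r_j}}{\ba\sUp\ba_j}$ of height $O(\size{E})=O(n)$ and width $O(n)$; every rule it uses is of the shape $\bet\sUp\ba_j$ with $\bet$ one of the atomic connectives $\ba_{j+1},\dots,\ba_n$ occurring in $E$, so in particular $\bet$ is never $\vlan$, no cut appears, and the derivation lives in $\TBLScf$. Prepending the expansion step and placing the result inside the context $\esbs{D_1}{v}\cdots\esbs{D_{j-1}}{v}(\,\cdot\,)$ produces a $\TBLScf$ derivation from $\expr{n}{v}$ to $\esbs{D_1}{v}\cdots\esbs{D_{j-1}}{v}(\range{E}{l_j}\ba_j\range{E}{r_j})$, still of width and height $O(n)$.

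It then remains to massage this conclusion into the required $\range{\expri{n}{j}{v}}{l_j}\ba_j\range{\expri{n}{j}{v}}{r_j}$ using only compositions by expansion. First, distribute the context over the topmost $\ba_j$ by repeated use of $\sigma(A\alp B)\expeq\sigma A\alp\sigma B$, obtaining $(\esbs{D_1}{v}\cdots\esbs{D_{j-1}}{v}\range{E}{l_j})\ba_j(\esbs{D_1}{v}\cdots\esbs{D_{j-1}}{v}\range{E}{r_j})$. Then, using the distributivity laws for explicit substitution in~\eqref{eq:expes} together with $\esbs{A}{y}D\expeq\asbs{A}{y}D$ applied only to $O(1)$-size subterms, slide the introduced range $l_j$ (resp.\ $r_j$) outward from the innermost subterm $D_{j+1}$ past $D_{j-1},\dots,D_1$, one substitution at a time, until it lands on $D_1$; the resulting formula is exactly $\range{\expri{n}{j}{v}}{l_j}$ (resp.\ $\range{\expri{n}{j}{v}}{r_j}$). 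Since each of these $O(n)$ compositions by expansion touches only $O(1)$-size subterms, every intermediate formula stays of size $O(n)$, and the whole construction has width and height $O(n)$.

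The main obstacle is precisely this insistence on polynomial size. Applying Lemma~\ref{lemma:mergemedial} to the \emph{non-flat} open skeleton $E$ is essential---flattening $E$ would blow it up exponentially---so the merge there has to be understood as performing one rotation $\bet\sUp\ba_j$ per connective of the compact $E$ rather than one per leaf of its expansion; and the rewriting of the previous paragraph must likewise be carried out entirely on compact forms, tracking carefully how the freshly introduced ranges $l_j,r_j$ percolate through the substitutions so that the conclusion comes out as the stated formula on the nose while no formula is ever expanded. A secondary point is to use consistently the ``up'' orientation of the merge lemma, so that only rules $\bet\sUp\ba_j$ appear and---since $E$ is built purely from atomic connectives---none of them is a cut.
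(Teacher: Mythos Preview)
Your proposal is correct and follows essentially the same strategy as the paper: apply the up-direction of Lemma~\ref{lemma:mergemedial} to the non-flat skeleton $E=E'=\esbs{D_{j+1}}{v}\cdots\esbs{D_n}{v}v$ to rotate $\ba_j$ to the root, with everything else being expansion bookkeeping of size $O(n)$. The only difference is the order in which the prefix $\esbs{D_1}{v}\cdots\esbs{D_{j-1}}{v}$ is handled: the paper first absorbs it into the substituted term (arriving at $\asbs{\range{\expr{j-1}{v}}{l_j}\ba_j\range{\expr{j-1}{v}}{r_j}}{v}E'$) before invoking the merge lemma, whereas you keep it as an outer context, merge with the simple substituents $v\rng{l_j},v\rng{r_j}$, and then distribute and slide the range afterwards---either ordering works within the stated bounds.
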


\subsection{Phase III: Guarded substitutions}

\begin{figure*}[!t]
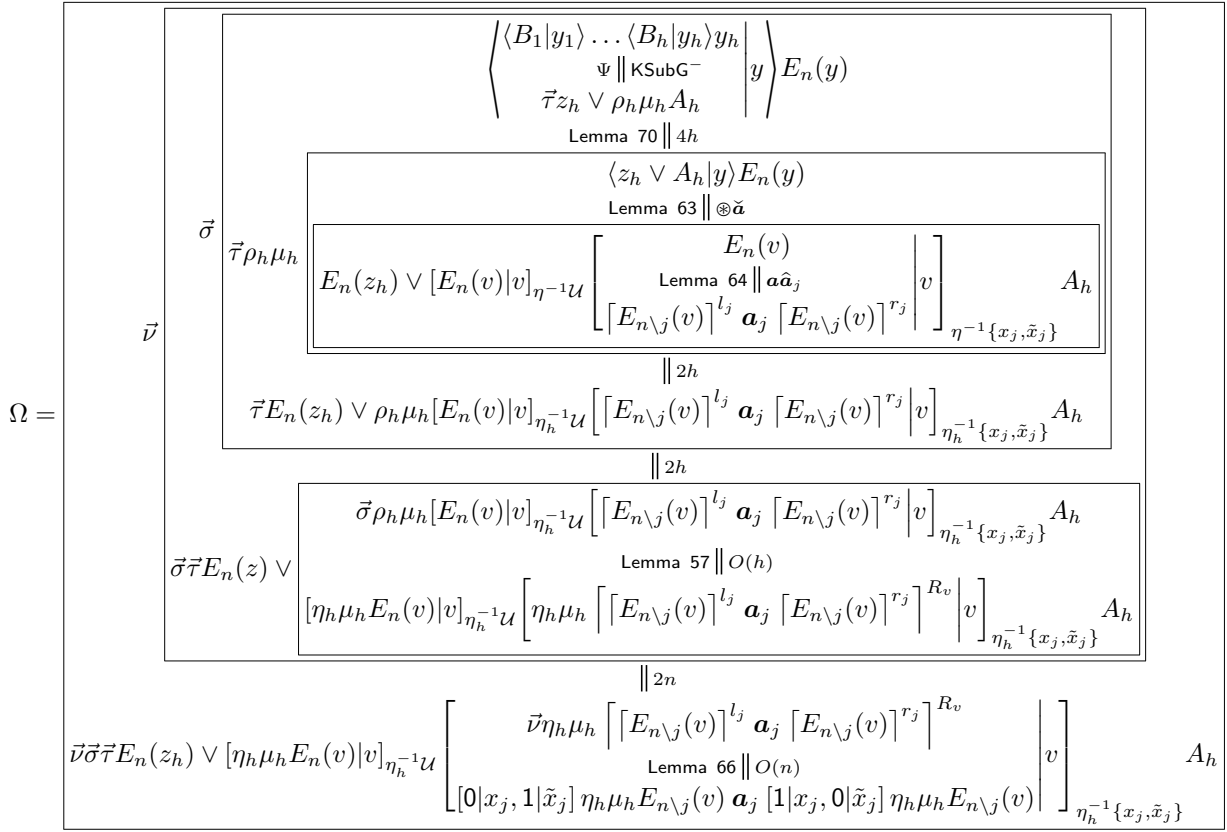

\[\dere=
\od{
\odv
{\vec\nu\; \odv
  {\vec\sigma\; \od{\odd{\odd{\odh
  {\esbs{\odframefalse\odv
    {\esbs{B_1}{y_1}\dots\esbs{B_h}{y_h}y_h  }
    {\derb}{[\vec\tau z_h . \rho_h\mu_hA_h] }{\TBLScf } }{y}\expr{n}{y} }}
  {\mathsf{Lemma~\ref{lemma:extracttau}}}{\vec\tau\rho_h\mu_h\; \odv
  {\esbs{z_h \vlor A_h}{y} \expr{n}{y} }
{\mathsf{Lemma~\ref{lemma:phase2eversion}}}{\odbox{\expr{n}{z_h} \vlor 
\acsnoun{\expr{n}{v}}{v}{\subpre{\eta}{\Unts} }
\acsnoun{\odframefalse\odv
  {\expr{n}{v}}
{\mathsf{Lemma~\ref{lemma:phase2merge}}}{\range{\expri{n}{j}{v} }{l_j} \ba_j 
   \range{\expri{n}{j}{v} }{r_j}   }{\ba\sUp\ba_j} }
{v}{\subpre{\eta}{\set{x_j,\varneg{x}{j}}} }
A_h }}{\alp\sDn\ba}}{4h}}
  {}{[\vec\tau\expr{n}{z_h} .\rho_h\mu_h
        \acsnoun{\expr{n}{v}}{v}{\subpre{\eta_h}{\Unts} }
        \acsnoun{\range{\expri{n}{j}{v} }{l_j} \ba_j 
                 \range{\expri{n}{j}{v} }{r_j}}{v}{\subpre{\eta_h}{\set{x_j,\varneg{x}{j}}} }A_h
        ]}{2h}} }
  {}{[\vec\sigma\vec\tau\expr{n}{z} .
  \odv{\vec\sigma\rho_h\mu_h 
  \acsnoun{\expr{n}{v}}{v}{\subpre{\eta_h}{\Unts} }
        \acsnoun{\range{\expri{n}{j}{v} }{l_j} \ba_j 
                 \range{\expri{n}{j}{v} }{r_j}}{v}{\subpre{\eta_h}{\set{x_j,\varneg{x}{j}}} }A_h}
    {\mathsf{Lemma~\ref{lemma:sigmarhoeta}}}{\acsnoun{\eta_h\mu_h \expr{n}{v}}{v}{\subpre{\eta_h}{\Unts} }
       \acsnoun{\eta_h\mu_h\range{\range{\expri{n}{j}{v} }{l_j} \ba_j 
                 \range{\expri{n}{j}{v} }{r_j} }{R_v} }{v}{\subpre{\eta_h}{\set{x_j,\varneg{x}{j}}}}
       A_h }{O(h) }
  ] }{2h}}
{}{[\vec\nu\vec\sigma\vec\tau \expr{n}{z_h} .
\acsnoun{\eta_h\mu_h\expr{n}{v}}{v}{\subpre{\eta_h}{\Unts} }
\acsnoun{\odframefalse\odv
         {\vec\nu\eta_h\mu_h\range{\range{\expri{n}{j}{v} }{l_j} \ba_j 
                 \range{\expri{n}{j}{v} }{r_j} }{R_v} }
         {\mathsf{Lemma~\ref{lemma:applynu}} }{\asbss{\zer}{x_j}{\one}{\varneg{x}{j} }\eta_h\mu_h\expri{n}{j}{v} \ba_j 
             \asbss{\one}{x_j}{\zer}{\varneg{x}{j} }\eta_h\mu_h\expri{n}{j}{v}}{O(n)} }
         {v}{\subpre{\eta_h}{\set{x_j,\varneg{x}{j}}} }
A_h
]}{2n}
}
\]
    \caption{A $\TBLScf$ proof which p-simulates the $\sFzo$ proof $\dera$. Here $\vec\nu$, $\vec\sigma$, and $\vec\tau$ abbreviate $\nu_n\dots\nu_1$, $\sigma_h\dots\sigma_1$, and $\tau_1\dots\tau_h$ respectively.}
    \label{figure:fregesim}
\end{figure*}

Lemmas \ref{lemma:phase2eversion} and \ref{lemma:phase2merge} above show that the formula $\esbs{z_h\vlor \eta_h\mu_hA_h}{y}\expr{n}{y}$ can be transformed into one with free variables $x_j\rng{l_j}$, $x_j\rng{r_j}$, $\varneg x j \rng{l_j}$, and $\varneg x j \rng{r_j}$ for each $j\in\set{1,\dots,n}$; that is, a formula to which the guarded substitutions $\nu_1,\dots,\nu_n$ can be applied. In Lemma~\ref{lemma:applynu} we show that indeed by applying these guarded substitutions we can obtain a formula whose interpretation is the conclusion of the given $\sFzo$ proof. 

\begin{definition}\label{def:substpreimage}
Let $V$ be a set of variables and let $\eta=\acsnoun{C_j}{v_j}{V}$ be an elementary substitution. The equivalence relation on $V$ given by $\eta v = \eta v'$ induces a partition on $V$, and for any $W\subseteq \Unts\cup\Vbls^\Rngs$, we write $\subpre \eta W$ for the set $\set{v\in V \mid \eta v \in W}$.
\end{definition}

\begin{lemma}[restate = PhaseThree, name = ]
    \label{lemma:applynu}
For each $j\in\set{1,\dots,n}$, $\mu_h$ as given in Construction \ref{cons:phase-1-a}, and $v\in\subpre{\eta_h}{\set{x_j,\varneg x j }} $, there exists a derivation \[\odv
         {\nu_n\dots\nu_1\eta_h\mu_h(({\range{\expri{n}{j}{v} }{l_j} \ba_j 
                 \range{\expri{n}{j}{v} }{r_j} }))}
         {\derc(v) }{\asbss{\zer}{x_j}{\one}{\varneg{x}{j} }\eta_h\mu_h\expri{n}{j}{v} \ba_j 
             \asbss{\one}{x_j}{\zer}{\varneg{x}{j} }\eta_h\mu_h\expri{n}{j}{v}}{}\] 
whose height and width are each $O(n)$ and whose only vertical composition is by expansion.

Furthermore, if $\eta_h v=x_j$ then $\intof{\cn\derc(v)}=a_j$ and if $\eta_h=\varneg x j$ then $\intof{\cn\derc(v)}=\atmneg{a}{j}$.
\end{lemma}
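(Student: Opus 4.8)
The plan is to prove Lemma~\ref{lemma:applynu} by an explicit computation, building $\derc(v)$ as a sequence of $O(n)$ expansion steps that apply, one guarded substitution at a time, the components of $\vec\nu=\nu_n\dots\nu_1$ to the formula $\range G{l_j}\ba_j\range G{r_j}$, where $G\ideq\eta_h\mu_h\expri{n}{j}{v}$. This formula is literally what $\eta_h\mu_h(\range{\expri{n}{j}{v}}{l_j}\ba_j\range{\expri{n}{j}{v}}{r_j})$ denotes, since $\eta_h$ and $\mu_h$ are actual -- meta-level -- substitutions, which distribute over $\ba_j$ and commute with range extension; no derivation step is involved. All free occurrences of a variable in $\expri{n}{j}{v}$ are occurrences of $v$, so the only free variable of $G$ is $\eta_h v$; since $v\in\subpre{\eta_h}{\set{x_j,\varneg x j}}$ we have $\eta_h v\in\set{x_j,\varneg x j}$, and I treat $\eta_h v=x_j$, the case $\eta_h v=\varneg x j$ being symmetric (exchange $l_j\leftrightarrow r_j$ and $\zer\leftrightarrow\one$). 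As one checks from Construction~\ref{cons:phase-1-a}, $\mu_h$ only decorates variables with units, so it contributes only occurrences of units, which are inert for every expansion rule and disappear under $\intof{\cdot}$; I therefore suppress them and treat every free-variable occurrence of $G$ as just $x_j$, whose range in $\range G{l_j}$ contains $l_j$ but not $r_j$, and conversely in $\range G{r_j}$.

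Now I build $\derc(v)$. For each $k\ne j$, neither $x_k$ nor $\varneg x k$ occurs -- free or bound -- in $\range G{l_j}\ba_j\range G{r_j}$ (the only bound variables there are the binder variables of the explicit substitutions inside $\expri{n}{j}{v}$, which are $v$), and this stays true throughout the construction; so the four guarded substitutions constituting $\nu_k$ are each removed in a single step by the rule $\gsbs U y p E\expto E$ of~\eqref{eq:expgs}. This accounts for $4(n-1)$ steps that change nothing. It remains to apply $\nu_j=\gsbs{\zer}{x_j}{l_j}\gsbs{\one}{\varneg x j}{l_j}\gsbs{\one}{x_j}{r_j}\gsbs{\zer}{\varneg x j}{r_j}$, processing its components from the innermost outward. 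The two $\varneg x j$-components vanish as above. For $\gsbs{\one}{x_j}{r_j}$ I first push it through $\ba_j$ using $\sigma(A\alp B)\expeq\sigma A\alp\sigma B$, and then on each copy through the single outermost explicit substitution using $\gsbs U y p\esbs F x E\expto\esbs{\agsbs U y p F}x E$; the side condition holds since $x_j$ does not occur in the body $E$ (only $v$ does). On the $\range G{r_j}$-copy, whose free $x_j$-occurrences all carry $r_j$ in their range, $\agsbs{\one}{x_j}{r_j}$ rewrites them to $\one$, while the $\range G{l_j}$-copy is untouched (none of its ranges contains $r_j$). Symmetrically $\gsbs{\zer}{x_j}{l_j}$ rewrites the $\range G{l_j}$-copy to $\zer$ and fixes the other. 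Reading off the two copies, the conclusion of $\derc(v)$ is exactly $\asbss{\zer}{x_j}{\one}{\varneg x j}G \ba_j \asbss{\one}{x_j}{\zer}{\varneg x j}G$. Every move is an expansion, the total number of steps is $O(n)$, and since every step merely propagates a guarded substitution of a unit, the width never exceeds $O(\size{\expri{n}{j}{v}})=O(n)$.

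For the interpretation: after expanding its explicit substitutions (which does not change its interpretation), $\asbss{\zer}{x_j}{\one}{\varneg x j}G$ is a formula built from $\zer$ and atomic connectives, so $\intof{\asbss{\zer}{x_j}{\one}{\varneg x j}G}=\zer$ by Proposition~\ref{prop:intpoly}; likewise $\intof{\asbss{\one}{x_j}{\zer}{\varneg x j}G}=\one$. Reading the base-case table of Figure~\ref{fig:interpretation} for $\intof{\Phi\ba_j\Psi}$ with $\intof\Phi=\zer$ and $\intof\Psi=\one$ gives $\intof{\cn\derc(v)}=a_j$, as claimed; in the symmetric case $\eta_h v=\varneg x j$ the two copies swap, so $\intof\Phi=\one$ and $\intof\Psi=\zer$, hence $\intof{\cn\derc(v)}=\atmneg a j$. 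I do not expect a conceptual obstacle here: the proof is one extended bookkeeping exercise, and the only delicate point is precisely the bookkeeping of ranges -- one must be sure that the auxiliary guards $l_j,r_j$ never occur in ranges appearing inside $\expri{n}{j}{v}$, so that every application of a rule of~\eqref{eq:expgs} meets its side condition, that inside $\nu_j$ the component $\gsbs{\zer}{x_j}{l_j}$ acts only on the $\range G{l_j}$-copy and $\gsbs{\one}{x_j}{r_j}$ only on the $\range G{r_j}$-copy, and that the units contributed by $\mu_h$ really are inert throughout.
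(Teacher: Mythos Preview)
Your proposal is correct and follows essentially the same approach as the paper's own proof: apply the $\nu_k$ for $k\neq j$ vacuously via $\gsbs U y p E\expto E$ (since $x_k,\varneg x k$ never occur), apply $\nu_j$ non-trivially by distributing over $\ba_j$ and pushing through the outermost explicit substitution, and then read off the interpretation. You are in fact more explicit than the paper about \emph{which} expansion rules are invoked at each step and about the side conditions, and you invoke Proposition~\ref{prop:intpoly} for the interpretation where the paper argues directly; both are fine.

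One small inaccuracy worth noting: your claim that ``$\mu_h$ only decorates variables with units'' is not quite right. In the $\frF_2$ case of Construction~\ref{cons:phase-1-a}, $\mu_i=\acs{v\vlor v}{v}{X''}$ duplicates a variable rather than adjoining a unit. This does not break your argument---the free variable of $\eta_h\mu_h v$ is still just $x_j$ (or $\varneg x j$), and after $\asbss{\zer}{x_j}{\one}{\varneg x j}$ you still get something with interpretation $\zer$---but the sentence as written is slightly misleading. Everything else, including the range bookkeeping and the $O(n)$ size bounds, is in order.
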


\subsection{Putting everything together}

We can now combine phases I, II, and III to prove our main result. The full construction is shown in 
Figure \ref{figure:fregesim}. When we write an actual substitution $\acsnoun{\derd}{v}{\subpre{\eta_h}{\set{x_j,\varneg x j}}}$, we understand this to be indexed over $j\in \set{1,\dots,n}$, so that this abbreviates the substitution $\acsnoun{\derd}{v}{\subpre{\eta_h}{\set{x_1,\varneg x 1}}}\dots \acsnoun{\derd}{v}{\subpre{\eta_h}{\set{x_n,\varneg x n}}}$.

\begin{theorem}[restate = MainFregezo, name = ]
  \label{thm:main-sfzo}
$\TBLScf$ p-simulates $\sFzo$.
\end{theorem}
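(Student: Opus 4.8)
The plan is to assemble the three phases developed above into the single $\TBLScf$ derivation $\dere$ displayed in Figure~\ref{figure:fregesim}, and then verify the three defining properties of a p-simulation. Starting from an $\sFzo$ proof $\dera = P_1,\dots,P_h$ with conclusion $P_h$, I would first fix an atomic enumeration $\atmenum{\dera}$, a variable enumeration $\varenum{\dera}$, and the factorised open translations $(\eta_1,A_1),\dots,(\eta_h,A_h)$, and run Constructions~\ref{cons:phase-1-a} and~\ref{cons:phase-1-b} to obtain the Phase~I derivation $\derb$ of Figure~\ref{figure:phase1}: it contains no atomic connectives, its premise is $\esbs{B_1}{y_1}\dots\esbs{B_h}{y_h}y_h$, its conclusion is $\tau_1\dots\tau_h z_h\vlor\rho_h\mu_h A_h$, and by Lemma~\ref{lemma:phase1} its size is polynomial in $\size{\dera}$. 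Next I would embed this conclusion under $\esbs{\cdot}{y}\expr{n}{y}$, extract $\vec\tau,\rho_h,\mu_h$ to the front (Lemma~\ref{lemma:extracttau}), and carry out Phase~II: eversion (Lemma~\ref{lemma:phase2eversion}) to distribute $\expr{n}{\cdot}$ over $A_h$, and Lemma~\ref{lemma:phase2merge} to reinstate the atomic connective $\ba_j$ at each position where $\eta_h$ sends a variable to a pseudo-dual variable $x_j$ or $\varneg x j$. Phase~III then applies the stored Frege substitutions $\vec\sigma$, absorbing them into $\eta_h$ (up to the accumulated ranges $R_v$) via Lemma~\ref{lemma:sigmarhoeta}, followed by the read-out substitutions $\vec\nu$, which collapse each remaining $\expr{n}{\cdot}$-formula to a single atom via Lemma~\ref{lemma:applynu}. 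All atomic connectives in $\dere$ are introduced by the rules $\alp\sDn\ba$ and $\ba\sUp\ba_j$, neither of which is the cut $\sUp\ba\vlan$ or $\vlan\sUp\ba$, so $\dere$ is a $\TBLScf$ derivation.

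I would then verify that $\intof{\pr\dere}=\one$, that $\intof{\cn\dere}=P_h$, and that $\size{\dere}$ is polynomial in $\size{\dera}$. For the premise, $\pr\dere = \vec\nu\,\vec\sigma\,\esbs{\esbs{B_1}{y_1}\dots\esbs{B_h}{y_h}y_h}{y}\expr{n}{y}$; since $\asbs{B_1}{y_1}\dots\asbs{B_h}{y_h}y_h$ is tautological with respect to $\varenum{\dera}$ by Lemma~\ref{lemma:tautology}, Proposition~\ref{prop:expprestaut} gives $\intof{\pr\dere}=\one$, once one checks that the $\vec\sigma$, which act only on guards in $\set{1,\dots,h}$, preserve tautologicity. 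For the conclusion, $\cn\dere$ is a disjunction whose left disjunct $\vec\nu\,\vec\sigma\,\vec\tau\,\expr{n}{z_h}$ interprets to $\zer$ — because $\tau_1\dots\tau_h z_h$ is contradictory by Lemma~\ref{lemma:contradictory} and Proposition~\ref{prop:expprestaut} applies again — so that disjunct vanishes under $\intof$. For the right disjunct, the ``furthermore'' clause of Lemma~\ref{lemma:applynu} says that each recovered atomic subformula interprets to $a_j$ (resp.\ $\atmneg a j$) exactly when $\eta_h$ maps the corresponding variable to $x_j$ (resp.\ $\varneg x j$), while each $\eta_h\mu_h\expr{n}{v}$ with $\eta_h v\in\Unts$ interprets to that unit, since $\intof{U\ba_j U}=U$ and $\mu_h$ is interpretation-neutral; hence the right disjunct interprets to $\intof{\asbss{\zer\ba_j\one}{x_j}{\one\ba_j\zer}{\varneg x j}_{1\leq j\leq n}\,\eta_h A_h}$, which equals $P_h$ by the proposition stating that a factorised translation $(\eta,A)$ of $P$ satisfies $\intof{\asbss{\zer\ba_j\one}{x_j}{\one\ba_j\zer}{\varneg x j}_{1\leq j\leq n}\eta A}=P$.

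For the size bound I would simply add up the contributions: Lemma~\ref{lemma:phase1} gives $\derb$ width $O(h^2w)$ and height $O(hw)$; the outer explicit substitution into $\expr{n}{y}$ adds $O(n)$; Lemma~\ref{lemma:phase2eversion} adds $O(n\size{A_h})$; and Lemmas~\ref{lemma:phase2merge}, \ref{lemma:sigmarhoeta}, \ref{lemma:applynu} add, per free variable of $A_h$, only $O(n)$, $O(hw)$, $O(n)$ respectively, while $\size{\rho_h\mu_h A_h}=O(hw)$ follows by induction from the bound $\size{\rho_i\mu_i A_i}\le\size{\rho_k\mu_k A_k}+\size{A_k}$ in Construction~\ref{cons:phase-1-a}. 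Since $n$, $h$, $w$, and the number of free variables of $A_h$ are all $O(\size{\dera})$, this yields $\size{\dere} = O(\size{\dera}^k)$ for a fixed $k$, completing the p-simulation. I expect the main obstacle to be not any single estimate but the bookkeeping of ranges and guards threaded through Construction~\ref{cons:phase-1-a}: one must confirm that every composition by expansion involving $\vec\sigma$ or $\vec\nu$ is correct, i.e.\ that the guard of each guarded substitution lies in the range of precisely the variable occurrences it is meant to rewrite and in no others — this is exactly what Lemmas~\ref{lemma:sigmarhoeta} and~\ref{lemma:applynu} are engineered to guarantee, and getting that range accounting right is where the whole construction lives or dies.
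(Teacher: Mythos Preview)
Your proposal is correct and follows essentially the same route as the paper's own proof: you assemble the derivation $\dere$ of Figure~\ref{figure:fregesim} from the three phases, invoke Lemmas~\ref{lemma:tautology} and~\ref{lemma:contradictory} together with Proposition~\ref{prop:expprestaut} to verify the premise and the vanishing left disjunct, use Lemmas~\ref{lemma:sigmarhoeta} and~\ref{lemma:applynu} to recover $P_h$ from the right disjunct, and sum the size contributions of the constituent lemmas. The paper additionally records the explicit bounds $O(hw^2)$ for height and $O(hw(h+w))$ for width and invokes Proposition~\ref{prop:intpoly} to confirm that the interpretation can be checked in polynomial time, but your polynomial estimate and your explicit remark on cut-freeness (via the rule labels $\alp\sDn\ba$ and $\ba\sUp\ba_j$) are in the same spirit.
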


Now Theorem~\ref{thm:main} follows immediately.


\section{What are Guarded Substitutions?}
\label{sec:guarded}

To conclude this paper, we turn to a more speculative discussion of what guarded substitutions really are, and what we might want to do with them in the future. Here, we used the concept of guarded substitution only in very restricted way: we only discussed propositional logic, and we only allowed the units $\zer$ and $\one$ to be substituted. However, the notion of guarded substitution has a much larger potential. To illustrate this, we give an informal example of how guarded substitutions could be used for compression purposes in a familiar proof.

We take the well-known \cite{Maor:19:pythagorean} proof of Pythagoras' theorem using similar triangles, which, semi-formally, we can represent concisely in a first-order proof with guarded substitutions.

The proof begins by establishing that $ACP$ and $BCP$ are congruent to the triangle $ABC$, and hence shows that $\size{AB}^2=\size{AC}^2+\size{BC^2}$.
These two proofs of congruency 
are similar but not identical, differing only in that $A$ and $B$ are transposed. 
\begin{wrapfigure}[7]{l}{.18\textwidth}
  \tikz{\coordinate[label=left : {$A$}] (A) at (0,0);
  \coordinate[label=right: {$B$}] (B) at (2.5,0);
  \coordinate[label=above: {$C$}] (C) at (1.6,1.2);
  \coordinate[label=below: {$P$}] (P) at (1.6,0);
  \draw (A) -- (B)
        (A) -- (C)
        (B) -- (C)
        (C) -- (P);}
\end{wrapfigure}
This can give rise to a superposition: a derivation $\derc$ with free variables $x$ and $y$ whose conclusion is that triangles $xCP$ and $xyC$ are congruent. 
Applying a substitution $\asbss AxBy$ results in a proof that triangles $ACP$ and $ABC$ are congruent, and so the guarded substitutions in the proof $\gsbss AxBy1\gsbss BxAy2 \esbs\derc z (z\rng 1\vlan z\rng 2)$ can be applied in order to recover the conclusion that both the triangles $ACP$ and $BCP$ are congruent to the triangle $ABC$, and the proof of Pythagoras' theorem can proceed as normal from there. 

This proof offers us a glimpse of how guarded substitutions might lead the way towards a proof system that offers both proof compression while still retaining a notion of analyticity; and not only free composition of derivations by connective, but free substitution of derivations too \cite{Gugl:04:Formalis:ea}. Subatomic proof systems\textemdash where strict linearity allows us to restrict all duplication to expansion of substitutions rather than inference rules\textemdash offer a sterile laboratory for developing these ideas, but we expect that these concepts can, with care, be translated into familiar proof systems and eventually programming languages, while maintaining in some form the proof theoretic properties established in this paper.



\newpage
\bibliography{IEEEabrv,swc-bib}


\newpage
\appendix

\subsection{Proofs and definitions for Section~\ref{sec:prelims}}

\begin{definition}\label{def:premise}
We inductively define the four maps \defn{premise} and \defn{conclusion}, ${\pr},{\cn}\colon\SADerv\to\Fmla$, and \defn{width} and \defn{height}, ${\w},{\h}\colon\SADerv\to\Nat$ as follows:
\begin{itemize}
\item If $\dera\in\Fmla$, then $\pr\dera\ideq\cn\dera\ideq\dera$ and $\w\dera=\size\dera$ and $\h\dera=0$
\item If $\dera\ideq \derb \alp \derc$ for some $\alp\in\Atms\cup\Conn$, then 
\[
\begin{aligned}
    \pr\dera&\ideq \pr\derb \alp \pr\derc & \w\dera&=\w\derb+\w\derc\\
    \cn\dera&\ideq \cn\derb \alp \cn\derc & \h\dera&=\max(\h\derb,\h\derc) \\
\end{aligned}
\]
\item If $\dera\ideq\esbs    \derb x    \derc$, then
\[
\begin{aligned}
    \pr\dera&\ideq\esbs{\pr\derb}x{\pr\derc} & \w\dera&=\w\derb+\w\derc\\
    \cn\dera&\ideq\esbs{\cn\derb}x{\cn\derc} & \h\dera&=\max(\h\derb,\h\derc)\\
\end{aligned}
\]
\item If $\dera\ideq\gsbs    \derb x  p  \derc$, then
\[
\begin{aligned}
    \pr\dera&\ideq\gsbs{\pr\derb}x p{\pr\derc} & \w\dera&=\w\derb+\w\derc\\
    \cn\dera&\ideq\gsbs{\cn\derb}x p{\cn\derc} & \h\dera&=\max(\h\derb,\h\derc)\\
\end{aligned}
\]
\item if $\dera\ideq\vls\odn{\;\derb\;}{}\derc{}\,$ or $\dera\ideq\vls\odq{\;\derb\;}{}\derc{}\,$, then 
\[
\begin{aligned}
    \pr\dera&\ideq\pr\derb & \w\dera&=\max(\w\derb,\w\derc)\\
    \cn\dera&\ideq\cn\derc & \h\dera&=\h\derb+\h\derc+1 \\
\end{aligned}
\]
\end{itemize}
\end{definition}

We now show some basic properties of composition by expansion, and we annotate the derivations constructed with their height. These are needed for later proofs.

The following lemma is used in Lemmas \ref{lemma:mergemedial} and \ref{lemma:mergeswitch}.
\begin{lemma}\label{lemma:mergeconstructor}
Let $A,B,C,D$ be subatomic formulae with no guarded substitutions, and let $y$ and $z$ be variables which occur freely in none of them. Then there exist invertible derivations \[\odv{\esbs{A}{x}B \alp \esbs{C}{x}D}
{}{\esbs{A}{y} \esbs{C}{z} ((\asbs{y}{x}B \alp \asbs{z}{x}D ))}{7} \qquad 
\odv{\esbs{A}{y} \esbs{C}{z} \asbs{y \alp z}{x}B }{}{\esbs{A \alp C}{x}B}{4}
\] 
\end{lemma}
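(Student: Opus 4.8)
\textbf{Proof plan for Lemma~\ref{lemma:mergeconstructor}.}

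The plan is to prove both derivations by induction on the structure of the formula $B$, using the expansion equations in \eqref{eq:expes} as the available rewriting steps, and treating the fresh variables $y,z$ as placeholders that let us pull the substitutions $\esbs{A}{\cdot}$ and $\esbs{C}{\cdot}$ out to the top of the formula. For the left-hand (distribution) derivation, I would read the first expansion equation $\sigma(A'\alp B')\expeq \sigma A' \alp \sigma B'$ right-to-left to merge $\esbs{A}{y}(\cdot)$ over the connective $\alp$, and similarly for $\esbs{C}{z}$; since $y$ and $z$ are fresh, the relabelling of the bound variable $x$ to $y$ in $B$ and to $z$ in $D$ is harmless. Concretely I would build the derivation as the composition
\[
\esbs{A}{x}B \alp \esbs{C}{x}D
\;\expeq\;
\asbs{A}{x}B \alp \asbs{C}{x}D
\;\ideq\;
\asbs{y}{x}\!\asbs{A}{y}B \alp \asbs{z}{x}\!\asbs{C}{z}D
\;\expeq\;
\esbs{A}{y}\asbs{y}{x}B \alp \esbs{C}{z}\asbs{z}{x}D
\;\expeq\;
\esbs{A}{y}\esbs{C}{z}((\asbs{y}{x}B\alp\asbs{z}{x}D)),
\]
where in the last two steps I use the third equation of \eqref{eq:expes} (together with freshness, so that the scopes of $\esbs{A}{y}$ and $\esbs{C}{z}$ can be enlarged and reordered without capture) and the first equation to push the substitutions inside the $\alp$. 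Counting the expansion steps gives height at most $7$. The right-hand derivation is then essentially the reverse composition: starting from $\esbs{A}{y}\esbs{C}{z}\asbs{y\alp z}{x}B$, use the first equation of \eqref{eq:expes} to recognise $\asbs{y\alp z}{x}B$ as built from copies of $y\alp z$ at the leaves corresponding to $x$, contract the pair $\esbs{A}{y}\esbs{C}{z}$ into a single $\esbs{A\alp C}{x}$ by the actual-substitution equation, and finish with one more expansion; this costs at most $4$ steps.

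Each of these compositions by expansion is \emph{correct} in the sense of the definition following \eqref{eq:expgs}, since every step is an instance of $\expeq$ (no guarded substitutions appear, so the side condition that $D$ contain no guarded substitutions is met by hypothesis on $A,B,C,D$), and $\expeq$ is used symmetrically, so the derivations are invertible: reading them bottom-up gives exactly the same chain of $\expeq$ steps in reverse. I would spell out the bookkeeping on bound-variable names carefully — this is where the freshness hypothesis on $y$ and $z$ (and the requirement in Notation~\ref{n:simultaneaous} that the substitution variable not occur free in the substituted formula) is used — but the argument is otherwise a direct unfolding of the expansion relation.

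The main obstacle, such as it is, is not conceptual depth but precision about the interaction of the three $\expeq$-equations when several explicit substitutions are stacked: one must check that the generalised distributivity $\esbs{C}{x}\esbs{A}{y}B\expeq\esbs{\esbs{C}{x}A}{y}\esbs{C}{x}B$ from the second line of \eqref{eq:expes}, combined with freshness (so $\esbs{C}{x}A = A$ and $\esbs{A}{y}$, $\esbs{C}{z}$ commute), genuinely lets us carry $\esbs{A}{y}$ and $\esbs{C}{z}$ out past the connective and reorder them, and that the resulting derivation really does have the claimed constant height independent of the sizes of $A,B,C,D$. Since the whole construction uses only expansion, the empty proof system $\Ssys=\emptyset$ is the relevant one, so annotating with the heights $7$ and $4$ as in the statement is exactly the right form of the conclusion.
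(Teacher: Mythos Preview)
Your overall approach is on the right track and shares the key idea with the paper's proof: both constructions use only the expansion equations~\eqref{eq:expes} (so the derivations are invertible and live over the empty rule system), and both work by renaming the bound variable $x$ to the fresh $y,z$ and then pulling the explicit substitutions out past~$\alp$. Two points are worth flagging.

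First, your claimed identity $\asbs{A}{x}B \ideq \asbs{y}{x}\asbs{A}{y}B$ is written in the wrong order: since $y$ is fresh in $B$ we have $\asbs{A}{y}B = B$, hence $\asbs{y}{x}\asbs{A}{y}B = \asbs{y}{x}B$, which is not $\asbs{A}{x}B$. You mean $\asbs{A}{x}B \ideq \asbs{A}{y}\asbs{y}{x}B$. This is almost certainly a slip, since your prose (``relabelling $x$ to $y$'') describes the right thing.

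Second, and more substantively, your route through the fully expanded formula $\asbs{A}{x}B$ differs from the paper's. The paper never carries out the substitution of $A$ into $B$; instead it inserts a vacuous renaming $\esbs{x}{y}\asbs{y}{x}B$ inside the scope of $\esbs{A}{x}$ (using $\asbs{x}{y}\asbs{y}{x}B = B$ and the third equation), and then uses the \emph{second} equation of~\eqref{eq:expes} to fuse $\esbs{A}{x}\esbs{x}{y}(\cdot)$ into $\esbs{A}{y}(\cdot)$. Both routes reach the same intermediate $\esbs{A}{y}\asbs{y}{x}B$ and both meet the stated height bound, but the paper's keeps the width at $\size A + \size B$ rather than the potentially quadratic $\size{\asbs{A}{x}B}$; since the lemma is invoked inside the width-bounded constructions of Lemmas~\ref{lemma:mergemedial} and~\ref{lemma:mergeswitch}, this matters downstream even though the present lemma only bounds height. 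For the right-hand derivation the paper's argument is the three-step chain $\asbs{y\alp z}{x}B \expeq \esbs{y\alp z}{x}B$, then push $\esbs{C}{z}$ and $\esbs{A}{y}$ inward via the second equation; your description gestures at this but is vague about which equation performs the contraction. Finally, no induction on $B$ is needed, and indeed you do not use one: both proofs are direct constructions.
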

\begin{proof}
We note that $D=\asbs xz\asbs zx D$ and $B=\asbs xy\asbs yx B$, and we build:
\[\odq{
\od{\ods{\ods{\odh
    {\esbs A x B}}
    {}{\esbs Ax \esbs xy \asbs yx B}{}}
    {}{\esbs Ay\asbs yx B}{}} \alp 
\od{\ods{\ods{\ods{\odh
    {\esbs CxD}}
    {}{\esbs Cx\esbs xz\asbs zxD}{}}
    {}{\esbs Cz\asbs zx D}{}}
    {}{\esbs Ay\esbs Cz\asbs zx D}{}} }
{}{\esbs Ay \odq
{\odq{\asbs yxB}{}{\esbs Cz\asbs yxB}{}\alp \esbs Cz\asbs zxD}
{}{\esbs Cz((\asbs yxB \alp \asbs zxD ))}{}}{}\]
for the derivation on the left and 
\[
\odq{\esbs Ay\odq{\esbs Cz\odq{\asbs{y\alp z}xB}{}{\esbs{y\alp z}xB}{} }{}{\esbs{y\alp C}xB}{}}
{}{\esbs{A\alp C}xB}{}
\] for the derivation on the right.
\end{proof}

The following lemma is used $h$ many times in the derivation $\dere$ in Figure \ref{figure:fregesim} after the first phase, in order to move the explicit substitutions $\tau_1,\dots,\tau_h$ out of the way.
\begin{lemma}\label{lemma:extracttau}
Let $C,D,A,E$ be subatomic formulae with no guarded substitutions, and let $z$ be a variable which occurs freely in neither $A$ nor $E$. Then there exists an invertible derivation
\[\odv{\esbs{\esbs CzD \alp A}yE}{}{\esbs Cz \esbs{D\alp A}yE }{4}\]
\end{lemma}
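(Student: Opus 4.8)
The plan is to realise the derivation as a chain of four compositions by expansion, each one an instance of one of the equalities in~\eqref{eq:expes} read in the appropriate direction and applied in the appropriate context (open deduction permits applying an expansion step inside a formula constructor, exactly as in the proof of Lemma~\ref{lemma:mergeconstructor}). The idea is to push the inner substitution $\esbs{C}{z}{}$ outwards, which is only legal once $D$ and $A$ are gathered under a single occurrence of it, and to remove the spurious copies of $\esbs{C}{z}{}$ that this creates around $A$ and $E$ using the freshness hypotheses.

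Concretely, starting from $\esbs{\esbs CzD \alp A}{y}{E}$, I would proceed as follows. \emph{Step 1:} since $z\notin\fv A$ we have $\asbs Cz A = A$, so $\esbs{C}{z}{A}\expeq\asbs{C}{z}{A}=A$; applying this in the context $\esbs{(\esbs CzD \alp [\,\cdot\,])}{y}{E}$ yields $\esbs{\esbs CzD \alp \esbs CzA}{y}{E}$. \emph{Step 2:} by the distributivity equation $\sigma(A'\alp B')\expeq \sigma A'\alp\sigma B'$ of~\eqref{eq:expes} with $\sigma=\esbs{C}{z}{}$, we have $\esbs CzD\alp\esbs CzA\expeq\esbs{C}{z}{(D\alp A)}$; applying this in the context $\esbs{[\,\cdot\,]}{y}{E}$ yields $\esbs{\esbs Cz(D\alp A)}{y}{E}$. \emph{Step 3:} since $z\notin\fv E$ we have $\esbs{C}{z}{E}\expeq\asbs{C}{z}{E}=E$; applying this in the context $\esbs{\esbs Cz(D\alp A)}{y}{[\,\cdot\,]}$ yields $\esbs{\esbs Cz(D\alp A)}{y}{\esbs CzE}$. \emph{Step 4:} by the nesting equation $\esbs{C}{z}{\esbs{A'}{y}{B'}}\expeq\esbs{\esbs Cz A'}{y}{\esbs Cz B'}$ of~\eqref{eq:expes} with $A'=D\alp A$ and $B'=E$, this equals $\esbs{C}{z}{\esbs{D\alp A}{y}{E}}$, the desired conclusion. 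Each of the four steps is a single composition by expansion, so the total height is $4$; and since every step uses $\expeq$, which may be traversed in either direction in a correct composition by expansion, the resulting derivation is invertible.

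The only points requiring care are the side conditions on the expansion equations of~\eqref{eq:expes}, and there is no real obstacle: the two instances of $\esbs{C}{z}{G}\expeq G$ (for $G\in\{A,E\}$, via the realisation equation $\esbs Cz G\expeq\asbs Cz G$) need $G$ to contain no guarded substitutions, which is exactly the hypothesis on $A$ and $E$, and need $\asbs{C}{z}{G}=G$, which holds precisely because $z$ is free in neither $A$ nor $E$; the distributivity and nesting equations impose nothing beyond $\alp\in\Atms\cup\Conn$ and arbitrariness of the formulas involved. The substantive content of the lemma is thus just choosing the order of the four expansion steps so that the copies of $\esbs{C}{z}{}$ introduced around $A$ in Step~1 and around $E$ in Step~3 can be discharged by the freshness assumptions.
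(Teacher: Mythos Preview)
Your proof is correct and uses exactly the same four expansion steps as the paper's proof: introduce $\esbs{C}{z}{}$ around $A$ (via $\esbs{C}{z}{A}\expeq\asbs{C}{z}{A}=A$), collapse $\esbs{C}{z}{D}\alp\esbs{C}{z}{A}$ to $\esbs{C}{z}{(D\alp A)}$, introduce $\esbs{C}{z}{}$ around $E$ analogously, and then apply the nesting equation; the paper presents these steps in the nested open-deduction form $\esbs{\Phi_1}{y}{\Phi_2}$ followed by the outer expansion, but the content is identical. Your discussion of the side conditions (absence of guarded substitutions in $A$ and $E$, freshness of $z$) and of invertibility via the symmetry of $\expeq$ is also accurate.
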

\begin{proof}
We note that $A=\asbs CzA$ and $E=\asbs CzE$, and we build:
\[
\odq{\esbs{\odframefalse\odq{\esbs CzD \alp \odq{A}{}{\esbs CzA}{} }{}{\esbs Cz(D\alp A)}{} }{y}\odq{E}{}{\esbs CzE}{} }
{}{\esbs Cz\esbs{D\alp A}yE }{}
\]
\end{proof}

\MergeMedial*

\begin{figure*}[!t]
  \[
\od{\odd{\odd{\odh
{\esbs{\acs{B_i}{x_i}{D} D}{y}\acsnoun{B_i}{x_i}{\fv E \setminus \set{y}}E \mathbin{\sDn\alp} 
\esbs{\acs{C_i}{x_i}{D} D}{y}\acsnoun{C_i}{x_i}{\fv E \setminus \set{y}}E}}
{\mathsf{Lemma~\ref{lemma:mergeconstructor}}}{\esbs{\acs{B_i}{x_i}{D} D}{y_1}\esbs{\acs{C_i}{x_i}{D} D}{y_2}\odv
{\asbs{y_1}{y}\acsnoun{B_i}{x_i}{\fv E \setminus y}E \mathbin{\sDn\alp} \asbs{y_2}{y}\acsnoun{C_i}{x_i}{\fv E \setminus y}E}
{\mathsf{IH}}{\asbs{y_1\mathbin{\sDn\alp} y_2}{y}\acsnoun{B_i\mathbin{\sDn\alp} C_i}{x_i}{\fv E \setminus y}E}{\sDn\alp\bet,\bet\sDn\alp}  }{7}}
{\mathsf{Lemma~\ref{lemma:mergeconstructor}}}{\esbs{\odframefalse\odv
{\acs{B_i}{x_i}{D} D \mathbin{\sDn\alp} \acs{C_i}{x_i}{D} D}
{\mathsf{IH}}{\acs{B_i\mathbin{\sDn\alp} C_i}{x_i}{D} D }{\sDn\alp\bet,\bet\sDn\alp}}{y}\acsnoun{B_i\mathbin{\sDn\alp} C_i}{x_i}{\fv E \setminus y}E  }{4}}
\]\caption{Explicit substitution case for the proof of Lemma \ref{lemma:mergemedial} }
    \label{figure:mergesubmedial}
\end{figure*}

\begin{proof}
We prove the derivation on the left by structural induction on $A$.
  
  If $A$ is a variable then the derivation is a formula.
  
  If $A=D\bet E$ for some $\bet\in\Conn\cup\Atms$ then we build \[
\odn{((\acs{B_i}{x_i}{D}D \bet \acs{B_i}{x_i}{E}E)) \mathbin{\sDn\alp} 
((\acs{C_i}{x_i}{D}D \bet \acs{C_i}{x_i}{E}E)) }{\sDn\alp\bet}
{\odv
{\acs{B_i}{x_i}{D}D \mathbin{\sDn\alp} \acs{C_i}{x_i}{D}D}
{\mathsf{IH}}{\acs{B_i\mathbin{\sDn\alp} C_i}{x_i}{D}D}{\sDn\alp\bet,\bet\sDn\alp} \bet \odv
{\acs{B_i}{x_i}{E}E \mathbin{\sDn\alp} \acs{C_i}{x_i}{E}E}
{\mathsf{IH}}{\acs{B_i\mathbin{\sDn\alp} C_i}{x_i}{E}E}{\sDn\alp\bet,\bet\sDn\alp}
}{}
\]

If $A=\esbs{D}{y}E$ then for variables $y_1$, $y_2$ not appearing in $E$ we build the derivation in Figure \ref{figure:mergesubmedial}, and the case  $A=\gsbs{D}{y}pE$ is similar and simpler.

The derivation on the right can be constructed analogously.
\end{proof}

\MergeSwitch*
\begin{figure*}[!t]
  \[
\od{\odd{\odd{\odh
{\esbs{\odframefalse\odv
{\acs{B_i\alp C_i}{x_i}{D} \sUp D }
{\mathsf{IH}}
{\acs{B_i}{x_i}{D} D \alp \acs{C_i}{x_i}{D} -D}
{\sDn\bet\alp,\alp\sDn\bet}}{y}\acsnoun{B_i\alp C_i}{x_i}{\fv E \setminus y}\sUp E}}
{\mathsf{Lemma~\ref{lemma:mergeconstructor}}}{\esbs{\acs{B_i}{x_i}{D} D}{y_1}\esbs{\acs{C_i}{x_i}{D} -D}{y_2}\odv
{\asbs{y_1\alp y_2}{y}\acsnoun{B_i\alp C_i}{x_i}{\fv E \setminus y}\sUp E}
{\mathsf{IH}}{\asbs{y_1}{y}\acsnoun{B_i}{x_i}{\fv E \setminus y}E \alp \asbs{y_2}{y}\acsnoun{C_i}{x_i}{\fv E \setminus y}-E}{\sDn\bet\alp,\alp\sDn\bet}  }{4}}
{\mathsf{Lemma~\ref{lemma:mergeconstructor}}}{\esbs{\acs{B_i}{x_i}{D} D}{y}\acsnoun{B_i}{x_i}{\fv E \setminus \set{y}}E \alp
\esbs{\acs{C_i}{x_i}{D} -D}{y}\acsnoun{C_i}{x_i}{\fv E \setminus \set{y}}-E  }{7}}
\]\caption{Explicit substitution case for the proof of Lemma \ref{lemma:mergeswitch} }
    \label{figure:mergesubswitch}
\end{figure*}

\begin{proof}
We prove the derivation on the left by structural induction on $A$.

If $A$ is a variable then the derivation is a formula. 

If $A=D\bet E$ for some $\bet\in\Conn\cup\Atms$ then we can assume without loss of generality that $\sUp\bet=\bet$ and $\sDn\bet=\cneg\bet$ and we build:
\[
\odn{\odv{\acs{B_i \alp C_i }{x_i}{D} \sUp D}
{\mathsf{IH}}{\acs{B_i}{x_i}{D} D \alp
   \acs{C_i }{x_i}{D} -D}{\sDn\bet\alp,\alp\sDn\bet} \mathbin{\sUp\bet}
   \odv{\acs{B_i \alp C_i }{x_i}{E} \sUp E}
{\mathsf{IH}}{\acs{B_i}{x_i}{E} E \alp
   \acs{C_i }{x_i}{E} -E}{\sDn\bet\alp,\alp\sDn\bet}
   }
{\sDn{\bet}\alp}
{((\acs{B_i}{x_i}{D} D \bet \acs{B_i}{x_i}{E} E))\alp ((\acs{C_i }{x_i}{D} -D \mathbin{\cneg\bet} \acs{C_i }{x_i}{E} -E))}{}
\]

If $A=\esbs{D}{y}E$ then for variables $y_1$, $y_2$ not appearing in $E$ we build the derivation in Figure \ref{figure:mergesubswitch}, and the case  $A=\gsbs{D}{y}pE$ is similar and simpler.

The derivation on the right can be constructed analogously.
\end{proof}


\subsection{Proofs for Section~\ref{sec:sub-to-standard}}

\begin{definition}\label{def:SKS}
The proof system $\SKS$~\cite{Brun:04:Deep-Inf:rq,GuglGundPari::A-Proof-:fk} consists of the
following inference rules:
\begin{itemize}
\item The \defn{structural rules}:
\[\text{\emph{atomic}}\left\{\begin{array}{ccc}
\odn{\one}{\aiD}{\vls[a.-a]}{}  & \odn{\vls[a.a]}{\acD}{a}{}  & \odn{\zer}{\awD}{a}{} \\
\emph{\: identity} & \emph{\: contraction} & \emph{\, weakening} \\ 
\\[-1ex]
\odn{\vls(a.-a)}{\aiU}{\zer}{}   & \odn{a}{\acU}{\vls(a.a)}{}   & \odn{a}{\awU}{\one}{} \\
\emph{\: cut} & \emph{\: cocontraction}  & \emph{\, coweakening}
\end{array}\right\}\]
\item The \defn{logical rules}:
\[\begin{array}{ccc}
\odn{\vls(A.[B.C])}{\sw}{\vls[(A.B).C]}{}  & & \odn{\vls[(A.B).(C.D)]}{\me}{\vls([A.C].[B.D])}{} \\
\emph{\: switch} & & \emph {\: medial}
\end{array}\]
\item An equivalence on formulae, defined to be the minimal equivalence relation generated by the following:
\[\begin{aligned}
\vls(A.\one) &\fequiv A  &\vls(A.B) &\fequiv \vls(B.A)\\
\vls[A.\zer] &\fequiv A  & \vls[A.B] &\fequiv \vls[B.A] \\\\[-2ex]
\vls[\one.\one] &\fequiv \one \qquad&\vls(A.(B.C)) &\fequiv \vls((A.B).C) \\
\vls(\zer.\zer) &\fequiv \zer & \vls[A.[B.C]] &\fequiv\vls[[A.B].C] \\
\end{aligned}\]

We write $\odn{A}{\fequiv}{B}{}$ whenever $B$ can be reached from $A$ via this equivalence relation. 

\item We also include the mix rule:
\[\odn{\zer}{\mix}{\one}{}\]
to enable a straightforward definition of the interpretation function, but note that it is easily derivable from the other rules.
\end{itemize}
A \defn{proof} is a derivation in $\SKS$ with premise $\one$. And a standard formula $A$ is \defn{provable} if there is a proof with conclusion $A$. 
\end{definition}

\begin{theorem}\label{thm:SKS}
  The system $\SKS$ is sound and complete for classical propositional logic, i.e., a formula is provable in $\SKS$ iff it is a tautology. \rm \cite{brunnler:tiu:01,Brun:04:Deep-Inf:rq,GuglGundPari::A-Proof-:fk}
\end{theorem}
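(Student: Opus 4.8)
The plan is to fix a Boolean valuation and extend it to formulae in the usual way, reading $\vlan$ as conjunction, $\vlor$ as disjunction, and the literals $a$, $\cneg a$ accordingly. I would first check that each inference rule $\odn{A}{}{B}{}$ of $\SKS$ is \emph{locally sound}, in the sense that every model of $A$ is a model of $B$: the atomic rules are the evident implications $\one\models a\vlor\cneg a$, $a\vlor a\models a$, $\zer\models a$, $a\vlan\cneg a\models\zer$, $a\models a\vlan a$, and $a\models\one$; switch $\sw$ and medial $\me$ are the two standard linear-distributivity laws; the equivalence $\fequiv$ preserves meaning in both directions; and $\mix$ is $\zer\models\one$. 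Since every formula context in $\SKS$ is built only from $\vlan$ and $\vlor$, which are monotone, local soundness lifts to arbitrary contexts: $A\models B$ implies $C\{A\}\models C\{B\}$ for every context $C\{\,\}$. Composition of derivations then gives $\pr{\Phi}\models\cn{\Phi}$ by transitivity, so a proof, having premise $\one$, has a conclusion entailed by $\one$, i.e.\ a Boolean tautology.

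\textbf{Completeness.} For the converse I would simulate a system already known to be complete, choosing a cut-free one-sided sequent calculus $\mathsf{GS}$ for classical propositional logic. A sequent (a finite multiset of formulae) is translated to the $\vlor$-disjunction of its members, and I would show that each rule of $\mathsf{GS}$ is realised by an $\SKS$ derivation operating on the corresponding subformula; deep inference is essential here, since it permits rewriting inside a formula without disturbing the surrounding context. As only the down-fragment of $\SKS$ is used, cut is never needed, consistent with the cut-freeness of $\mathsf{GS}$.

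\textbf{Key derivable rules.} The simulation rests on three generic rules synthesised from the atomic ones together with switch and medial. First, a \emph{generic identity} $\one\models A\vlor\cneg A$, proved by induction on $A$: the base case is the atomic rule $\aiD$, and the inductive step routes the two halves into position with $\sw$ and $\me$. Second, a \emph{generic contraction} $A\vlor A\models A$, again by induction on $A$: the base case is $\acD$, and for $A=A_1\vlan A_2$ a single application of $\me$ yields $(A_1\vlor A_1)\vlan(A_2\vlor A_2)$, which two recursive contractions then reduce to $A_1\vlan A_2$. Third, the multiplicative conjunction step, which merges derivations with conclusions $\Gamma\vlor A$ and $\Delta\vlor B$ into one with conclusion $\Gamma\vlor\Delta\vlor(A\vlan B)$, is obtained by iterated $\sw$. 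Weakening is immediate from $\awD$, and the $\vlor$-introduction rule is mere reassociation under $\fequiv$.

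\textbf{Main obstacle.} I expect the crux to be the reduction of the generic contraction (and, dually, of the generic identity) to atomic form: the medial rule $\me$ is included in $\SKS$ precisely so that contraction of a compound formula can be pushed down to its atoms without a non-atomic contraction rule, and getting the inductive construction and the accompanying switch-bookkeeping right is the delicate part. With these lemmas in hand, a structural induction over a $\mathsf{GS}$ proof of a tautology $A$ yields an $\SKS$ proof of $A$; combined with soundness, this gives the stated equivalence between provability in $\SKS$ and being a Boolean tautology.
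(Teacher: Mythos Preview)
Your proposal is correct and follows the standard argument from the cited literature, but note that the paper itself does not prove this theorem: it is stated as a known result and attributed to \cite{brunnler:tiu:01,Brun:04:Deep-Inf:rq,GuglGundPari::A-Proof-:fk} without further elaboration. Your sketch --- local soundness lifted through monotone contexts, and completeness via simulation of a cut-free one-sided sequent calculus after deriving generic identity, contraction, and weakening from their atomic forms using $\sw$ and $\me$ --- is precisely the approach taken in those references, so there is no meaningful divergence to discuss.
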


\begin{theorem}\label{thm:SKS-cut}
  The cut is admissible from  $\SKS$. More precisely, if a formula $A$ is provable in $\SKS$, then it is also provable without using the rules $\aiU$, $\acU$, $\awU$. \rm \cite{brunnler:tiu:01,Brun:04:Deep-Inf:rq}
  \end{theorem}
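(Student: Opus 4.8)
The plan is to obtain admissibility \emph{semantically}, by combining two ingredients: (i) soundness of the full system $\SKS$, and (ii) completeness for tautologies of its \emph{down-fragment}, by which I mean $\SKS$ with the three up rules $\aiU$, $\acU$, $\awU$ deleted. Together these give the statement immediately: a formula provable with the up rules is a tautology (by soundness), hence provable without them (by down-completeness).

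First I would establish soundness. Reading a derivation top-down, one interprets each inference instance $\odn{A}{}{B}{}$, applied inside an arbitrary context, as the Boolean entailment $A\models B$. A direct case check verifies that every rule of $\SKS$ is locally sound: the down rules $\aiD$, $\acD$, $\awD$, the logical rules $\sw$ and $\me$, the unit equalities $\fequiv$, the rule $\mix$, and likewise the up rules. Because $\SKS$ formulae are in negation normal form, every context is built only from $\vlan$ and $\vlor$ and is therefore monotone under classical semantics, so a local entailment lifts to the ambient formula; composing down a derivation $\dera$ yields $\pr\dera\models\cn\dera$. In particular, a proof (premise $\one$) of $A$ gives $\one\models A$, so $A$ is a tautology.

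The crux is the second ingredient, completeness of the down-fragment for tautologies. The cleanest route is to simulate a complete cut-free sequent calculus. Fixing a one-sided cut-free system $\mathsf{G}$ for classical propositional logic, I would, for each rule of $\mathsf{G}$, exhibit a down-fragment derivation realising that step: $\aiD$ for axioms/identity, $\awD$ for weakening, $\sw$ to distribute context across a disjunction, and $\acD$ together with $\me$ to realise contraction. An induction on the cut-free $\mathsf{G}$-proof then converts any proof of $\vdash A$ into a down-fragment $\SKS$-proof of $A$, and since cut-free $\mathsf{G}$ is complete, the down-fragment proves every tautology. Combining with soundness: given any proof of $A$ in full $\SKS$, soundness makes $A$ a tautology, and down-completeness then supplies a proof of $A$ avoiding $\aiU$, $\acU$, $\awU$, which is exactly the claim.

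The main obstacle lies in this simulation, and specifically in reducing the \emph{general} contraction used in the sequent calculus to the \emph{atomic} contraction $\acD$ of $\SKS$: this is precisely where the medial rule $\me$ is indispensable, via an induction on formula structure that pushes a general contraction $\odn{A\vlor A}{}{A}{}$ down to atomic contractions $\acD$ interleaved with medials $\me$. Threading the unit equalities $\fequiv$ correctly through the weakening and identity cases is routine but must be done with care. I note that a purely syntactic alternative exists, namely Brünnler's splitting technique, which yields an effective cut-elimination procedure rather than mere admissibility; its splitting lemma is the hard technical core, and the semantic route above is preferable when, as here, only admissibility is asserted.
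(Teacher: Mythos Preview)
The paper does not supply its own proof of this theorem; it simply cites Br\"unnler's original work. Your proposal is correct and is in fact the standard route taken in those cited references: soundness of full $\SKS$ combined with completeness of the down-fragment, the latter established by translating cut-free sequent proofs, with the reduction of general contraction to $\acD$ via $\me$ being exactly the point where deep inference earns its keep. You also correctly identify the syntactic alternative (splitting), which is the other argument Br\"unnler gives. Nothing to add.
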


\PropDeri*

\begin{proof}
  It suffices to observe that if $\odn A{}B{}$ is an instance of an inference rule in $\TBLS$, and both $\intof A$ and $\intof B$ are defined, then $\odn {\intof A}{}{\intof B}{}$ is an instance of an inference rule in $\SKS$.
\end{proof}

\IntPoly*

\begin{proof}
By induction on the structure of $A$.

If $A\in\Vbls^\Rngs$ then $\intof{\esbs{B_1}{v_1}A}=\intof{B_1}=U$.

If $A=C\alp D$ then it follows from the definition of the interpretation map that $\intof{\exs{B_i}{v_i}AA} = \intof{\exs{B_i}{v_i}CC \alp \exs{B_i}{v_i}DD }$. By induction, we can assume that $\intof{\exs{B_i}{v_i}CC}=U $ and $\intof{\exs{B_i}{v_i}DD}=U$, so it follows that $\intof{\exs{B_i}{v_i}AA}=U$.

If $A=\esbs CxD$ then it follows from the definition of the interpretation map that $\intof{\exs{B_i}{v_i}AA}=\intof{\esbs{\exs{B_i}{v_i}CC }{x}\exsnoun{B_i}{v_i}{\fv D\setminus \set x}D  } $. By induction, we can assume that $\intof{\exs{B_i}{v_i}CC}=U$, and so it follows by induction that $\intof{\esbs{\exs{B_i}{v_i}CC }{x}\exsnoun{B_i}{v_i}{\fv D\setminus \set x}D  } =U$.

\end{proof}

\SoundCompl*

\begin{proof}
  1) $\Leftrightarrow$ 2) follows from Theorem~\ref{thm:SKS}. For 2) $\Rightarrow$ 3) observe that every instance of a rule of $\SKS$ can be seen as the interpretation of an instance of a subatomic rule. The cases of $\acD$ and $\acU$ and $\aiU$ were given in the introduction. The other cases are similar. The result has also been shown in~\cite{AlerGugl:18:Subatomi:ty}.
  3) $\Rightarrow$ 2) also has been shown in~\cite{AlerGugl:18:Subatomi:ty}. 3)  $\Rightarrow$ 4) is shown in the same way as 2) $\Rightarrow$ 3) because the interpretation of every instance of an algebraic rule is (if defined) also the interpretation of a subatomic rule instance. Finally,   4) $\Rightarrow$ 3) is trivial.
\end{proof}


\subsection{Proofs for Section \ref{sec:psim}}

\StToSub*

\begin{proof}
This follows immediate from the definition.
\end{proof}

Before we continue, we first show the remaining cases for Construction~\ref{cons:phase-1-a}:

\medskip

If $P_i$ is the axiom $\frF_2$, Then $A_i$ is of the form $\vls [(X.(Y.Z')).[(X'.Y').[X''.Z]]]$ where $\eta_i X =\eta_i X'$ and both are pseudo-dual to $\eta_i X''$, $\eta_i Y$ and $\eta_i Y'$ are pseudo-dual, and $\eta_i Z$ and $\eta_i Z'$ are pseudo-dual. For a variable $v\in \fv X$, we denote by $v'$ and $v''$ the variables in $\fv X'$ and $\fv X''$ respectively appearing in the corresponding position, and similarly for $Y$ and $Z$. With use of Lemmas \ref{lemma:mergemedial} and \ref{lemma:mergeswitch} we construct the derivation $\derb_i$ and formula $B_i$ in Figure \ref{figure:frege2}.
\begin{figure*}
\[\begin{array}{c}
B_i= \zer\vlor\eta_i(\acs{v\vlor v''}{v}{X}\sUp X\vlan (\acs{v'\vlor v''}{v}{X}\sUp X\vlan (\acs{v\vlor v'}{v}{Y}\sUp Y \vlan \acs{v'\vlor v}{v}{Z}\sUp Z) ) ) \\ \\
\Psi_i = \od{\odq{\zer}{}{\esbs{\zer}{z_i}z_i}{} \vlor \eta_i\,
\odv{\odbox{(\odv
{\acs{v\vlor v''}{v}{X} \sUp X}
{}{[X.X'']}{\sDn\vlan\vlor,\vlor\sDn\vlan}.\odn{(\odv
{\acs{v'\vlor v''}{v}{X}\sUp X}
{}{[X'.X'']}{\sDn\vlan\vlor,\vlor\sDn\vlan}.\odn{(\odv
{\acs{v\vlor v'}{v}{Y}\sUp Y}
{}{[Y.Y']}{\sDn\vlan\vlor,\vlor\sDn\vlan}.\odv
{\acs{v'\vlor v}{v}{Z}\sUp Z}
{}{[Z'.Z]}{\sDn\vlan\vlor,\vlor\sDn\vlan})}
{\vlor\sDn\vlan}{[(Y.Z').[Y'.Z]]}{})}
{\vlor\sDn\vlan}{[(X'.(Y.Z')).[X''.[Y'.Z]]]}{})}}
{}{\odbox{[(X.(Y.Z')).\odn
{([X'.X''].\odv
{[X''.[Y'.Z]]}
{}{[Y'.[X''.Z]]}{\assD,\comD})}
{\vlor\sDn\vlan}{[(X'.Y').\odn
{[X''.[X''.Z]]}
{\assD}{[\odv
{[X''.X'']}
{}{\acs{[v.v]}{v}{X''}X''}{\vlan\sDn\vlor,\vlor\sDn\vlor}.Z]}{}]}{}]}}{\sw,\assD,\comD}}
\end{array}
\]
\caption{$\frF_2$}
\label{figure:frege2}
\end{figure*}
We obtain the substitutions $\tau_i=\esbs{\zer}{z_i}$, $\rho_i=\eta_i$, $\mu_i=\acs{v\vlor v}{v}{X''}$, and $\sigma_i=\epsilon$.

\medskip

In the case of $\frF_3$, we have that $A_i$ is of the form $\vls [(X'.Y).[Y'.X]]$, where $\eta_i X'$ and $\eta_i X$ are pseudo-dual formulae, and so are $\eta_i Y$ and $\eta_i Y'$; for $v\in \fv X$, we denote by $v'$ the variable in $\fv X'$ appearing in the corresponding position, and similarly for $Y$. With use of Lemma \ref{lemma:mergeswitch} we construct:
\[\Psi_i = \od{\odq{\zer}{}{\esbs{\zer}{z_i}z_i}{} \vlor \eta_i\, 
\odn{
\odv{\acs{v'\vlor v}{v}{X} \sUp X }{}{X'\vlor X}{\sDn\vlan\vlor,\vlor\sDn\vlan}\vlan
\odv{\acs{v\vlor v'}{v}{Y} \sUp Y }{}{Y\vlor Y'}{\sDn\vlan\vlor,\vlor\sDn\vlan}
}
{\vlor\sDn\vlan}{[(X'.Y).\odn{[X.Y']}{\comD}{[Y'.X]}{}] }{}
}\]
and we obtain the substitutions $\tau_i=\esbs{\zer}{z_i}$, $\rho_i=\eta_i$, $\mu_i=\epsilon$ and $\sigma_i=\epsilon$, and the formula $B_i=\zer\vlor \eta_i(\acs{v'\vlor v}{v}{X} \sUp X \vlan \acs{v\vlor v'}{v}{Y} \sUp Y) $.

\medskip

In the case of $\frF_4$, then $\eta_i A_i=\one$. We construct 
\[\Psi_i = \od{\odq{\zer}{}{\esbs{\zer}{z_i}z_i}{} \vlor \eta_i A_i
}\] and we obtain the substitutions $\tau_i=\esbs{\zer}{z_i}$, $\rho_i=\eta_i$, $\mu_i=\epsilon$ and $\sigma_i=\epsilon$, and the formula $B_i=\zer\vlor\one$.

The following lemma shows how guarded substitutions can be applied to non-flat formulae; we will use it in Lemmas \ref{lemma:sigmarhoeta} and \ref{lemma:applynu} to show how the guarded substitutions $\vec\sigma$ and $\vec\nu$ are applied in Figure \ref{figure:fregesim} in the proof of Theorem \ref{thm:main}.

\PhaseOneTautology*
\begin{proof}
By induction on the structure of $\dera$.

If $P_i$ is obtained from an axiom $\frF_1$ then $B_i$ is a disjunction $\zer \vlor \eta_i(\acs{v\vlan \zer}{v}{Y}Y \vlor \acs{v'\vlor v}{v}{X}\sUp X )$, where for each $v\in \fv{X}\cup\fv Y$, $\eta_i v$ and $\eta_i v'$ are pseudo-dual variables with respect to $\varenum{\dera}$. If $P_i$ is obtained from an axiom $\frF_2$ or $\frF_3$ then $B_i$ is composed of disjunctions of variables which are pseudo-dual with respect to $\varenum{\dera}$ (in disjunction with $\zer$). If $P_i$ is obtained from an axiom $\frF_4$ then $B_i=\zer\vlor\one$. In each case (except for $\frF_4$, which is immediate), the interpretation of the formula is determined by disjunctions of pseudo-dual variables, and any substitution which dualises pseudo-dual variables will verify it. Therefore, $B_i$ is tautological with respect to $\varenum{\dera}$.

If $P_i$ is obtained from modus ponens on $P_k$ and $P_l= \vls -P_k \vlor P_i$ then $B_i=y_k\vlan y_l$. We assume that $\asbs{B_1}{y_1}\dots \asbs{B_k}{y_k}y_k$ and $\asbs{B_1}{y_1}\dots \asbs{B_l}{y_l}y_l$ are tautological with respect to $\varenum{\dera}$, so $\asbs{B_1}{y_1}\dots \asbs{y_k\vlan y_l}{y_i}y_i$ is tautological with respect to $\varenum{\dera}$.

If $P_i$ is obtained from substitution on $P_k$ then $B_i=y_k\rng i$. We assume that $\asbs{B_1}{y_1}\dots \asbs{B_k}{y_k}y_k$ is tautological with respect to $\varenum{\dera}$. It follows immediately that $\asbs{B_1}{y_1}\dots \asbs{y_k\rng i}{y_i}y_i$ is tautological with respect to $\varenum{\dera}$.
\end{proof}

\PhaseOneContradiction*
\begin{proof}
By induction on the structure of $\dera$.

If $P_i$ is an axiom then $\tau_i=\esbs{\zer}{z_i}$, so $\tau_1\dots \tau_i z_i$ is contradictory with respect to $\varenum{\dera}$.

If $P_i$ is obtained from modus ponens on $P_k$ and $P_l= \vls -P_k \vlor P_i$ then $\tau_i$ is $\esbs{(z_k\vlor z_l) \vlor \acs{\rho_k\mu_k v \vlan \rho_l\mu_l v' }{v}{A_k}\sDn {A}_k}{z_i}$, where $\rho_k v$ and $\rho_l v'$ are pseudo-dual variables. We assume that $\tau_1\dots \tau_k z_k$ and $\tau_1\dots \tau_l z_l$ are contradictory with respect to $\varenum{\dera}$. By construction, $\rho_k\mu_k v \vlan \rho_l\mu_l v'$ is contradictory with respect to $\varenum{\dera}$ for each $v\in\fv{A_k}$. Therefore $\tau_1\dots \tau_i z_i$ is contradictory with respect to $\varenum{\dera}$.

If $P_i$ is obtained from substitution on $P_k$ then $\tau_i$ is $\esbs{z_k\rng i}{z_i}$. We assume that $\tau_1\dots \tau_k z_k$ is contradictory with respect to $\varenum{\dera}$. It follows immediately that $\tau_1\dots \tau_i z_i$ is contradictory with respect to $\varenum{\dera}$.
\end{proof}

\SigmaRhoEta*
\begin{proof}
By induction on the structure of $\dera$

If $P_i$ is an axiom then by construction, $\rho_i=\eta_i$, and the range of each variable in $\rho_i\mu_i A_i$ is empty, so $R=\emptyset$. We construct the derivation by vacuously applying each guarded substitution.

If $P_i$ is obtained from modus ponens on $P_k$ and $P_l= \vls -P_k \vlor P_i$ then for each $v$ occurring in $A_i$ there is $v'$ occurring in $A_l$ such that $\mu_l v' = \mu_i v$, $\rho_l = \rho_i$, and $\eta_l = \eta_i$. By induction, we can construct \[\sigma_h\dots\sigma_{l+1}\od{\odv{\sigma_l\dots\sigma_1\rho_i\mu_i v}{}{\eta_i\mu_i v^R}{O(l)}}\] and we have that $R\subseteq \set{1,\dots,l}$. Therefore, all of the substitutions $\sigma_{l+1},\dots\sigma_h$ can be vacuously applied to $\eta_i\mu_i v^R$ in $(i-l)$ many steps.

If $P_i$ is obtained from a substitution $\asbs{u_{j_1}}{x_{j_1}}\dots\asbs{u_{j_m}}{x_{j_m}}$ on $P_k$ then 
\[\sigma_i=\gsbss{u_{j_1}}{x_{j_1}}{\bar u_{j_1}}{\varneg{x}{j_1}}{i}\dots\gsbss{u_{j_m}}{x_{j_m}}{\bar u_{j_m}}{\varneg{x}{j_m}}{i}\] and for each $v$ occurring in $A_i$, there is a corresponding $w$ occurring in $A_k$ such that \[
\begin{aligned}
\mu_i v &= \mu_k w \\
\rho_i \mu_i v &= \rho_k\mu_k w\rng i \\
\eta_i &= \asbss{u_{j_1}}{x_{j_1}}{\bar u_{j_1}}{\varneg{x}{j_1}}\dots\asbss{u_{j_m}}{x_{j_m}}{\bar u_{j_m}}{\varneg{x}{j_m}}\eta_k
\end{aligned}
\]
By induction, there exists a derivation \[\odv{\sigma_k\dots\sigma_1\rho_k\mu_k w }{}{\eta_k\mu_k w^R}{O(k)} \] for some $R\subseteq\set{1,\dots,k}$. By construction, none of the guarded substitutions $\sigma_1,\dots,\sigma_{i-1}, \sigma_{i+1},\dots,\sigma_h$ have a guard $i$, so it follows that we can construct \[\odv{\sigma_h\dots\sigma_{i+1}\odq
{\sigma_i \odv
{\sigma_{\vlnos i-1}\dots\sigma_{k+1}\odv
{\sigma_k\dots\sigma_1 \rho_i\mu_i v}
{}{\eta_k\mu_i v^{R\cup\set i} }{O(k)} }
{}{\eta_k\mu_i v^{R\cup\set i} }{\vlnos (i-1)-k }}
{}{\eta_i\mu_i v^{R\cup\set i}}{} }
{}{\eta_i\mu_i v^{R\cup\set i} }{\vlnos h-i}\]
\end{proof}

\PsiIPoly*

\begin{proof}
We note that the height of $\dera$ is $h$. Its width, $w$, is given by $\max{\size{P_i}}$. The number of (positive and negative) atoms $2n$ is $O(hw)$.

The worst case for the width of a $\derb_i$ is obtained in the modus ponens and unit substitution cases when there have been many unit substitution steps, so that $\rho_i$ is substituting $O(h)$ many ranges onto each of the $O(w)$ many variables in $A_i$; the width of $\derb_i$ is then $O(hw)$.

The worst case for the height of a $\derb_i$ is obtained in the axiom and modus ponens cases: it is $O(w)$.
\end{proof}

\PsiPoly*
\begin{proof}
The size of $\derb$ can be calculated as follows:

\begin{align*}
    \w\derb &\leq h(\max \size{\tau_i} + 2\max\size{z_i\vlor \rho_i\eta_i A_i} ) \\
    & = hO(w + hw  ) \\
    & = O(h^2 w) \\
    \h\derb &= hO(\max\size {\derb_i} ) \\
    &= O(hw )
\end{align*}
\end{proof}

\Expprestaut*
\begin{proof}
If $A$ is a tautology with respect to $\varenum{n}$, then so is $\range A R$, for $R\in\set{l_1,r_1}\times\cdots\times\set{l_n,r_n}$. Also, for each such $R$, the actual actual substitution obtained by applying $\nu_n\dots\nu_1$ to $\range{A}{R}$ closes the formula and dualises pseudo-dual variables, so $\intof{\nu_n\dots\nu_1 \range A R}=\one$. Applying all of the explicit substitutions in $\esbs{A}{y} \expr{n}{y}$ results in a formula composed of $2^n$ copies of $A$, each with a different range from $\set{l_1,r_1}\times\cdots\times\set{l_n,r_n}$. It follows that $\intof{\nu_n\dots\nu_1 \esbs{A}{y} \expr{n}{y}}=\one$. The case for a contradictory formula is analogous.
\end{proof}

\PhaseTwoEversion*
\begin{proof}
By induction on $n$.

If $n=1$ then $\expr{1}{y}=\esbs{y\rng{l_1}\ba_1 y\rng{r_1}}{y}y$. By Lemma \ref{lemma:mergemedial} we can construct \[\od{\odd{\ods{\odh
{\esbs{A}{y} \esbs{y\rng{l_1}\ba_1 y\rng{r_1}}{y}y  }}
{}{\esbs{\odframefalse\odv
{\range{A}{l_1}\ba_1\range{A}{r_1} }
{}{\acs{{v\rng{l_1}\ba_1 v\rng{r_1}}}{v}{A}A}{\alp\sDn\ba_1 }}{y} y}{}}
{}{\acs{\expr{1}{v}}{v}{A}A }{2\size A}}\] and the height and width of this derivation are both $O(\size A)$.

If $n=k+1$ then $\expri{k+1}{1}{y}=\esbs{y\rng{l_2}\ba_2 y\rng{r_2}}{y}\dots\esbs{y\rng{l_{k+1}}\ba_{k+1} y\rng{r_{k+1}}}{y}y$. By induction and Lemma \ref{lemma:mergemedial} we can construct 
\[
\od{\odd{\odd{\ods{\odh
{\esbs{A}{y}\esbs{y\rng{l_1}\ba_1 y\rng{r_1}}{y}\expri{k+1}{1}{y} }}
{}{\esbs{\odframefalse\odv
{\range{A}{l_1}\ba_1\range{A}{r_1} }
{\mathsf{Lemma~\ref{lemma:mergemedial}}}{\acs{{v\rng{l_1}\ba_1 v\rng{r_1}}}{v}{A}A}{\alp\sDn\ba_1} }{y} \expri{k+1}{1}{y} }{}}
{}{\exs{v\rng{l_1}\ba_1 v\rng{r_1}}{v}{A}\odv
{\esbs{A}{y}\expri{k+1}{1}{y} }
{\mathsf{IH}}{\acs{\expri{k+1}{1}{v} }{v}{A}A }{\alp\sDn\ba }  }{2\size A}}
{}{\acs{\expr{k+1}{v} }{v}{A}A}{2\size A}}
\]
The height of the application of Lemma \ref{lemma:mergemedial} is $O(\size A)$; therefore the height of this derivation is $O(n\size A)$. Its width is given by the $O(\size A) n$ many explicit substitution terms which are factored out.
\end{proof}

\PhaseTwoMerge*
\begin{proof}
Let \[\expr{j-1}{v}= \esbs{v\rng{l_1}\ba_1 v\rng{r_1}}{v}\dots\esbs{v\rng{l_{j-1}}\ba_{j-1} v\rng{r_{j-1}}}{v}v\]
\[E'=
\esbs{v\rng{l_{j+1}}\ba_{j+1} v\rng{r_{j+1}}}{v}\dots\esbs{v\rng{l_n}\ba_n v\rng{r_n}}{v}v
\]
Then by making use of the construction 
\[
\odq{\esbs Ax((x\rng p \conna x\rng q))}
{}{\odq{\range A p}{}{\esbs{\range A p}xx }{} \conna 
   \odq{\range A q}{}{\esbs{\range A q}xx}{}}{}
\] to pass the ranges $l_j$ and $r_j$ to the free variables and applying Lemma \ref{lemma:mergemedial} we construct the following derivation:
\[
\od{\odd{\odd{\odh
{\expr{n}{v} }}
{}{\odq{\esbs{\range{\expr{\vlnos j-1}{v}}{l_j} \ba_j \range{\expr{\vlnos j-1}{v}}{r_j} }{v}E'}{}{\asbs{\range{\expr{\vlnos j-1}{v}}{l_j} \ba_j \range{\expr{\vlnos j-1}{v}}{r_j} }{v}E' }{}}{2j}}
{\mathsf{Lemma~\ref{lemma:mergemedial}}} 
  {\odq
  {\asbs{\range{\expr{\vlnos j-1}{v} }{l_j}}{v}E'}
  {}{\range{\expri{n}{j}{v}}{l_j} }{} \ba_j \odq
  {\asbs{\range{\expr{\vlnos j-1}{v} }{\vphantom{l_j}r_j}}{v}E'}
  {}{\range{\expri{n}{j}{v}}{\vphantom{l_j}r_j} }{}}{\ba\sUp\ba_j }}
\]

The height of this derivation is dominated by the application of Lemma \ref{lemma:mergemedial}, which has height $O(n)$.
\end{proof}

\begin{figure*}[!ht]
\[\tikz{\coordinate[label=left : {$A$}] (A) at (0,0);
  \coordinate[label=right: {$B$}] (B) at (2.5,0);
  \coordinate[label=above: {$C$}] (C) at (1.6,1.2);
  \coordinate[label=below: {$P$}] (P) at (1.6,0);

  \draw (A) -- (B)
        (A) -- (C)
        (B) -- (C)
  (C) -- (P);}
\hskip4em
\od{
\odi{\odi{\ods{\ods{\odh
{\gsbss{A}{x}{B}{y}{1}\gsbss{B}{x}{A}{y}{2}\esbs{\odframefalse\odn{\angle xCP = \frac{\pi}{2}\vlnos-\angle yxC = \angle Cyx }{}{\odframefalse\odn
    {\triangle Cyx \sim \triangle xCP}
  {}{\size{xC}^2=\size{xy}\size{xP}  }{} }{}}{z}(z\rng 1.z\rng 2) }}
{} {\gsbss{A}{x}{B}{y}{1}\gsbss{B}{x}{A}{y}{2}(\range{\size{xC}^2=\size{xy}\size{xP}}{1} .\range{\size{xC}^2=\size{xy}\size{xP}}{2} ) }{}}
{}{((\size{AC}^2=\size{AB}\size{AP}).(\size{BC}^2=\size{BA}\size{BP}))}{}}
{}{\size{AC}^2+\size{BC}^2=\size{AB}\size{AP}+\size{BA}\size{BP}}{}}
{}{\size{AC}^2+\size{BC}^2=\size{AB}^2}{}}
\] 
\caption{A fragment of a proof of Pythagoras' Theorem using guarded substitutions}
\label{fig:pythagoras}
\end{figure*}

\PhaseThree*
\begin{proof}
The only free variables in $\eta_h\mu_h\range{\expri{n}{j}{v} }{l_j} \ba_j 
                 \range{\expri{n}{j}{v} }{r_j} $
are either $x_j\rng{l_1,l_j}$, $x_j\rng{l_2,l_j}$, $x_j\rng{l_1,r_j}$, and $x_j\rng{r_1,r_j}$ or $\varneg x j \rng{l_1,l_j}$, $\varneg x j\rng{l_2,l_j}$, $\varneg x j\rng{l_1,r_j}$, and $\varneg x j\rng{r_1,r_j}$. The guarded substitutions $\eta_1,\dots,\eta_{j-1}$ do not apply to the variables $x_j$ or $\varneg x j$, and so they are applied vacuously. The substitution $\nu_j=\gsbss{\zer}{x_j}{\one}{\varneg x j}{j}\gsbss{\one}{x_j}{\zer}{\varneg x j}{j}$ can then be applied to $\eta_h\mu_h\range{\range{\expri{n}{j}{v} }{l_j} \ba_j \range{\expri{n}{j}{v} }{r_j} }{R_v}$, resulting in $\asbss{\zer}{x_j}{\one}{\varneg{x}{j} }\eta_h\mu_h\expri{n}{j}{v} \ba_j \asbss{\one}{x_j}{\zer}{\varneg{x}{j} }\eta_h\mu_h\expri{n}{j}{v}$. This formula has no free variables, so the guarded substitutions $\nu_{j+1},\dots,\nu_n$ are applied vacuously.

If $\eta_h v=x_j$ then $\intof{\asbss{\zer}{x_j}{\one}{\varneg{x}{j} }\eta_h\mu_h v }=\zer$, and it is immediate that $\intof{\asbs{\zer}{v}\expri n j v }=\zer$. Similarly, we can see that  $\intof{\asbss{\one}{x_j}{\zer}{\varneg{x}{j} }\eta_h\mu_h \expri n j v }=\one$; therefore $\intof{\cn\derc(v)}=\intof{\zer\ba_j\one}=a_j$. Analogously, if $\eta_h v=\varneg x j$ then $\intof{\cn\derc(v)}=\intof{\zer\ba_j\one}=\atmneg a j$.
\end{proof}

\MainFregezo*

\begin{proof}
Let $\dera=P_1,\dots,P_h$ ben an $\sFzo$ proof, with atomic enumeration $\atmenum{n}$, variable enumeration $\varenum{n}$, and factorised translation $(\eta_1,A_1),\dots,(\eta_h,A_h)$. 

Let $\derb$ and its constituent parts be as given in Constructions \ref{cons:phase-1-a} and \ref{cons:phase-1-b}, and let $\vec \tau$, $\vec \sigma$, and $\vec \nu$ abbreviate $\tau_1\dots\tau_h$, $\sigma_h\dots\sigma_1$, and $\nu_n\dots\nu_1$ respectively. We construct the proof in Figure \ref{figure:fregesim}, which we call $\dere$. 

First, we construct the ``superposition'' derivation $\derb$ given in Construction \ref{cons:phase-1-b}. We then extract the substitutions $\tau_1\dots\tau_h$, and use Lemmas \ref{lemma:phase2eversion} and \ref{lemma:phase2merge} to rearrange the formula. We distribute the substitutions $\sigma_h\dots\sigma_1$ and apply them. By Lemma \ref{lemma:sigmarhoeta}, these work as intended, applying the substitution inference steps in the $\sFzo$ proof $\dera$. In particular, for all $v\in \subpre {\eta_h}\Unts$, $\eta_h\mu_h \expr n v$ is a closed formula. Finally, we distribute the substitutions $\nu_n\dots\nu_1$ and apply them; by Lemma \ref{lemma:applynu} these work as intended, restoring the correct arguments to the atomic connectives. 

The height of this proof is dominated by the use of Lemma \ref{lemma:phase2eversion}, which is $O(nw)=O(hw^2)$.

The width of this proof is $O(hw(h+w))$. In the first phase, the width of the proof is dominated by the width of $\derb$, which is $O(h^2w)$. After the use of Lemma \ref{lemma:phase2eversion} and before the application of the substitutions $\sigma_h\dots\sigma_1$, the width of the proof is $O(hw^2)$.

By Lemma \ref{lemma:tautology}, $\asbs{B_1}{y_1}\dots\asbs{B_h}{y_h}$ is tautological with respect to $\varenum{n}$. By construction, applying the substitutions $\sigma_h\dots\sigma_1$ to $\asbs{B_1}{y_1}\dots\asbs{B_h}{y_h}$ does nothing but dualise some pairs of pseudo-dual variables, and so the resulting formula is also tautological with respect to $\varenum{n}$ . Then by Proposition \ref{prop:expprestaut}, $\intof{\nu_n\dots\nu_1\sigma_h\dots\sigma_1\esbs{B_1}{y_1}\dots\esbs{B_h}{y_h}\expr{n}{y} }=\one$.

By Lemma \ref{lemma:contradictory}, $\tau_1\dots\tau_h z_h$ is contradictory with respect to $\varenum{n}$. Then the flat formula obtained by applying all of the substitutions $\tau_1,\dots,\tau_h$ is also contradictory with respect to $\varenum{n}$. By construction, applying the substitutions $\sigma_h\dots\sigma_1$ to that formula does nothing but dualise some pairs of pseudo-dual variables, and so the resulting flat formula is also contradictory with respect to $\varenum{n}$. Then by Proposition \ref{prop:expprestaut}, $\intof{\nu_n\dots\nu_1\sigma_h\dots\sigma_1\tau_1\dots\tau_h\expr{n}{z_h} }=\zer$.

Furthermore, for all $v\in \subpre{\eta_h}{\Unts}$, \[\intof{\eta_h\mu_h\expr{n}{v} }=\intof{\eta_h v}\] for all $v\in \subpre{\eta_h}{\set{x_j}}$, 
\begin{align*}
    & \intof{\asbss{\zer}{x_j}{\one}{\varneg{x}{j} }\eta_h\mu_h\expri{n}{j}{v} \ba_j \asbss{\one}{x_j}{\zer}{\varneg{x}{j} }\eta_h\mu_h\expri{n}{j}{v}   } \\
= & \intof{\asbs{\zer\ba_j\one}{\eta_h} v} \\
= & a_j
\end{align*} and for all $v\in \subpre{\eta_h}{\set{\varneg x j}}$,
\begin{align*}
    & \intof{\asbss{\zer}{x_j}{\one}{\varneg{x}{j} }\eta_h\mu_h\expri{n}{j}{v} \ba_j \asbss{\one}{x_j}{\zer}{\varneg{x}{j} }\eta_h\mu_h\expri{n}{j}{v}   } \\
= & \intof{\asbs{\one\ba_j\zer}{\eta_h} v} \\
= & \atmneg{a}{j}
\end{align*}

Therefore, the interpretation of the conclusion is $P_h$ and by Proposition \ref{prop:intpoly}, this can be checked in polynomial time on the size of $\dera$..

\end{proof}

\MainThm*
\begin{proof}
  Immediately from Theorems~\ref{thm:sfzo} and~\ref{thm:main-sfzo}.
\end{proof}


\subsection{Examples for Section~\ref{sec:psim}}

\begin{example}\label{example:fregeaamp}
Let $\Phi$ be a $\sFzo$ proof given by
\[
\begin{array}{rlc}
P_1 &= \bar a \vlor ( b\vlor a) & (\frF_1) \\
P_2 &= (a\vlan (\bar b \vlan \bar a )) \vlor ((a \vlor b) \vlor (\bar a \vlor a) ) & (\frF_2) \\
P_3 &= (a \vlor  b) \vlor (\bar a \vlor a) & (\mp(P_1,P_2)) \\
\end{array}
\]

This has atomic enumeration $\atmenum{\Phi}=\{a,\bar a,b,\bar b\}$ (we use $a,b$ instead of $a_1,a_2$ to improve readability), and we take the variable enumeration $\varenum{\Phi}=\{w,\vneg w, x,\vneg x\}$ (likewise, we use $w,x$ instead of $x_1,x_2$). With respect to this variable enumeration, we obtain the open translations of the formulae of $\Phi$:

\[
\begin{array}{rl}
    Q_1 &= \vneg w{} \vlor (x\vlor w) \\
    Q_2 &= (w\vlan (\vneg x \vlan \vneg w )) \vlor ((w \vlor x) \vlor (\vneg w \vlor w) ) \\
    Q_3 &= (w \vlor x) \vlor (\vneg w \vlor w) \\
\end{array}
\]

We factorise these formulae to obtain the factorised open translations of the formulae of $\Phi$:

\[
\begin{array}{rl}
    A_1 &= v_1 \vlor (v_2\vlor v_3) \\
    \eta_1 &= \mathop{\left[\vneg w \middle| v_1, x \middle| v_2, w \middle| v_3 \right]} \\
    A_2 &= (v_4\vlan (v_5 \vlan v_6 )) \vlor ((v_7 \vlor v_8) \vlor (v_9 \vlor v_{10}) ) \\
    \eta_2 &= \mathop{\left[w\middle| v_4, \vneg x \middle| v_5, \vneg w \middle| v_6, w\middle| v_7,  x\middle| v_8, \vneg w\middle| v_9, w\middle| v_{10} \right]} \\
    A_3 &= (v_{11} \vlor v_{12}) \vlor (v_{13} \vlor v_{14}) \\
    \eta_3 &= \mathop{\left[w\middle| v_{11}, x\middle| v_{12}, \vneg w\middle| v_{13}, w\middle| v_{14} \right]} \\
\end{array}
\]

We obtain $\Psi_i, \tau_i, \mu_i, \rho_i, \sigma_i, B_i$ for $1\leq i\leq 3$ as follows:

\begin{itemize}
\item $\tau_1=\esbs \zer {z_1}$,
\item $\rho_1=\eta_1$,
\item $\mu_1=\asbs{x\vlor \zer}{v_2}$,
\item $\sigma_1=\epsilon$,
\item $B_1=\zer\vlor((x\vlan \zer)\vlor(\vneg w \vlor w))$
\item $\Psi_1$ is shown as part of Figure~\ref{figure:examplephase1}
\item $\tau_2=\esbs \zer {z_2}$,
\item $\rho_2=\eta_2$,
\item $\mu_2=\asbs{v_9 \vlor v_9}{v_9}$,
\item $\sigma_2=\epsilon$,
\item $B_2= \zer\vlor ((w \vlor\vneg w )\vlan ((w \vlor \vneg w )\vlan ((\vneg x\vlor x)\vlan (\vneg w\vlor w))))$
\item $\Psi_2$ is shown as part of Figure~\ref{figure:examplephase1}
\item $\tau_3=\esbs{ ((z_1\vlor z_2) \vlor ((\vneg w\vlan w)\vlor (((x\vlor\zer)\vlan\vneg x)\vlor(w\vlan\vneg w))))}{z_3}$,
\item $\rho_3=\eta_3$,
\item $\mu_3=\asbs{v_{13} \vlor v_{13}}{v_{13}}$,
\item $\sigma_3=\epsilon$,
\item $B_3= y_1\vlan y_2$,
\item $\Psi_3$ is shown as part of Figure~\ref{figure:examplephase1},
\end{itemize}

where $y_1,y_2$ are fresh variables. We compose these as described in Construction~\ref{cons:phase-1-b} to obtain the derivation $\derb$ in Figure~\ref{figure:examplephase1}

\begin{figure*}[]
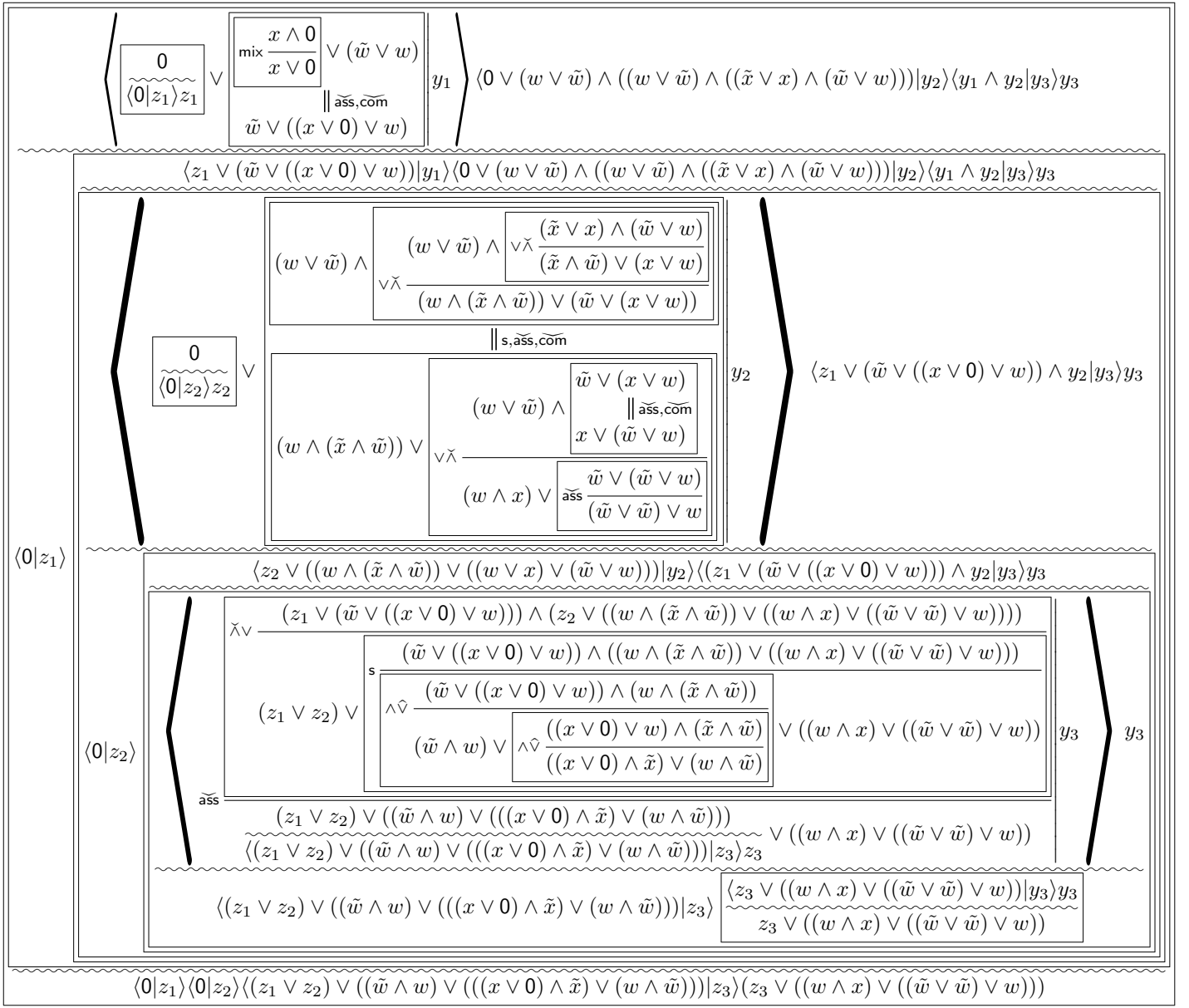

\[\od{\odq{\odq
  {\esbs{\odq{\zer}{}{\esbs{\zer}{z_1}z_1}{} \vlor \odv
  {[\odn{(x.\zer)}\mix{[x.\zer]}{}.
  (\vneg w\vlor w) ]}
{}{[\vneg w.[[x.\zer].w]]}{\assD,\comD}}{y_1}
\esbs{\zer\vlor ((w \vlor\vneg w )\vlan ((w \vlor \vneg w )\vlan ((\vneg x\vlor x)\vlan (\vneg w\vlor w))))}{y_2}
\esbs{y_1\vlan y_2}{y_3}y_3}
{}{\esbs{\zer}{z_1}\; \odq
  {\esbs{[z_1 .[\vneg w.[[x.\zer].w]]]}{y_1}
  \esbs{\zer\vlor ((w \vlor\vneg w )\vlan ((w \vlor \vneg w )\vlan ((\vneg x\vlor x)\vlan (\vneg w\vlor w))))}{y_2}
  \esbs{y_1\vlan y_2}{y_3}y_3}
{}{\odq
  {\esbs{\odq{\zer}{}{\esbs{\zer}{z_2}z_2}{} \vlor \,
\odv{\odbox{((w\vlor \vneg w).\odn{( (w\vlor\vneg w).\odn{( (\vneg x\vlor x).(\vneg w\vlor w) )}
{\vlor\sDn\vlan}{[(\vneg x.\vneg w).[ x.w]]}{})}
{\vlor\sDn\vlan}{[(w.(\vneg x.\vneg w)).[\vneg w.[ x.w]]]}{})}}
{}{\odbox{[(w.(\vneg x.\vneg w)).\odn
{([w.\vneg w].\odv
{[\vneg w.[ x.w]]}
{}{[ x.[\vneg w.w]]}{\assD,\comD})}
{\vlor\sDn\vlan}{[(w. x).\odn
{[\vneg w.[\vneg w.w]]}
{\assD}{[
{[\vneg w.\vneg w]}
.w]}{}]}{}]}}{\sw,\assD,\comD}}{y_2}
\esbs{[z_1 .[\vneg w.[[x.\zer].w]]]\vlan y_2}{y_3}y_3}
{}{\esbs{\zer}{z_2}\; \odq
  {\esbs{[z_2 .[(w . (\vneg x . \vneg w )) . [[w . x] . (\vneg w \vlor w)] ]]}{y_2}\esbs{([z_1 .[\vneg w.[[x.\zer].w]]].y_2)}{y_3}y_3}
{}{\odq
  {\esbs{\odframefalse\odn
{\odframetrue\odn
{ ([z_1.[\vneg w.[[x.\zer].w]] ]
  .[z_2.
    [(w.(\vneg x.\vneg w))
    .[(w.x) .[[\vneg w.\vneg w].w] ] ]])}
{\sDn\vlan\vlor}{[[z_1.z_2] 
   .\odn
   {([\vneg w.[[x.\zer].w]].[(w.(\vneg x.\vneg w)).[(w.x) .[[\vneg w.\vneg w].w] ] ] )}
   {\sw}{[\odn
        {([\vneg w.[[x.\zer].w]].(w.(\vneg x.\vneg w)) )}
        {\vlan\sUp\vlor}{[(\vneg w.w) .\odn
{([[x.\zer].w] .(\vneg x.\vneg w) ) }
{\vlan\sUp\vlor}{[([x.\zer] .\vneg x) . (w.\vneg w) ] }{} ] }{}. 
        [(w.x) .[[\vneg w.\vneg w].w] ] ]}{} ]}{}}
{\assD}{[\odq
{[[z_1.z_2] .[(\vneg w. w).[([x.\zer].\vneg x).(w.\vneg w)] ] ] }
{}{\esbs{[[z_1.z_2] .[(\vneg w. w).[([x.\zer].\vneg x).(w.\vneg w)] ] ]}{z_3}z_3 }{}
. [(w.x) .[[\vneg w.\vneg w].w] ] ] }{}}{y_3}y_3 }
{}{\esbs{ ((z_1\vlor z_2) \vlor ((\vneg w\vlan w)\vlor (((x\vlor\zer)\vlan\vneg x)\vlor(w\vlan\vneg w))))}{z_3}\; \odq
  {\esbs{[z_3 . [(w.x) .[[\vneg w.\vneg w].w] ]] }{y_3}y_3 }
{}{[z_3 . [(w.x) .[[\vneg w.\vneg w].w] ]] }{} }{} }{} }{}} {} }{}}
{}{\esbs{\zer}{z_1}\esbs \zer {z_2}\esbs{ ((z_1\vlor z_2) \vlor ((\vneg w\vlan w)\vlor (((x\vlor\zer)\vlan\vneg x)\vlor(w\vlan\vneg w))))}{z_3}  ([z_3 . [(w.x) .[[\vneg w.\vneg w].w] ]])}{}}
\]
\caption{An example of the derivation $\derb$ in Phase I of the p-simulation construction for the substitution Frege proof $\Phi$ shown in Example~\ref{example:fregeaamp} consisting of two axioms and an application of modus ponens.}
\label{figure:examplephase1}
\end{figure*}

Due to space limitations, we show in Figure~\ref{figure:examplephase2} an example of Phase II with the formula $z_3 \vlor (w \vlan x)$ instead of $\vls [z_3 . [(w.x) .[[\vneg w.\vneg w].w] ]]$. This construction ensures that when the guarded substitutions 
\[
\begin{array}{rl}
    \nu_a &=  \gsbss{\zer}{w}{\one}{\vneg w}{l_a}\gsbss{\one}{w}{\zer}{\vneg w}{r_a} \\
    \nu_b &= \gsbss{\zer}{x}{\one}{\vneg x}{l_b}\gsbss{\one}{x}{\zer}{\vneg x}{r_b}
\end{array}
\]
are applied then the resulting formula will be interpreted as $(a\vlan b)$.

\begin{sidewaysfigure}
\[\od{\odd{\odd{\ods{\odh
{\esbs{[z_3.(w.x)]}y
 \esbs{y\rng{l_a}\ba y\rng{r_a}}y
 \esbs{y\rng{l_b}\ba y\rng{r_b}}yy }}
{}{\esbs{\odframefalse\odn
     {([z_3\rng{l_a}.(w\rng{l_a}.x\rng{l_a})] \ba
         [z_3\rng{r_a}.(w\rng{r_a}.x\rng{r_a})])}
     {}{\odframetrue[(z_3\rng{l_a}\ba z_3\rng{r_a}).
         \odn{((w\rng{l_a}.x\rng{l_a}) \ba (w\rng{r_a}.x\rng{r_a}))}
           {}{[(w\rng{l_a}.x\rng{l_a})\ba (w\rng{r_a}.x\rng{r_a})] }{} ]}{}}y
   \esbs{y\rng{l_b}\ba y\rng{r_b}}yy
}{}}
{}{\esbs{z_3\rng{l_a}\ba z_3\rng{r_a}}{z_3}
   \esbs{w\rng{l_a}\ba w\rng{r_a} }{w}
   \esbs{x\rng{l_a}\ba x\rng{r_a}}{x}
   \esbs{\odframefalse\odn
     {([z_3\rng{l_b}.(w\rng{l_b}.x\rng{l_b})] \bb
         [z_3\rng{r_b}.(w\rng{r_b}.x\rng{r_b})])}
     {}{\odframetrue[(z_3\rng{l_b}\bb z_3\rng{r_b}).
         \odn{((w\rng{l_b}.x\rng{l_b}) \bb (w\rng{r_b}.x\rng{r_b}))}
           {}{[(w\rng{l_b}.x\rng{l_b})\bb (w\rng{r_b}.x\rng{r_b})] }{} ]}{}}yy
   }{}}
{}{[\esbs{z_3\rng{l_a}\ba z_3\rng{r_a}}{z_3}
    \esbs{z_3\rng{l_b}\bb z_3\rng{r_b}}{z_3}z_3. (
    \od{\ods{\ods{\odh
     {\esbs{w\rng{l_a}\ba w\rng{r_a}}{w}
    \esbs{w\rng{l_b}\bb x\rng{r_b}}{w}w}}
     {}{\esbs{\odframefalse\odn
       {((w\rng{l_a,l_b} \ba w\rng{r_a,l_b}))\bb 
        ((w\rng{l_a,r_b} \ba w\rng{r_a,r_b}))}
       {}{\odq{w\rng{l_a,l_b} \bb w\rng{l_a,r_b}}
       {}{\esbs{w\rng{l_a,l_b} \bb w\rng{l_a,r_b}}ww }{}\ba 
        \odq{w\rng{r_a,l_b} \bb w\rng{r_a,r_b}}
        {}{\esbs{w\rng{r_a,l_b} \bb w\rng{r_a,r_b}}ww }{}}{}}ww}{}}
     {}{( \esbs{w\rng{l_a,l_b} \bb w\rng{l_a,r_b}}ww
         \ba
          \esbs{w\rng{r_a,l_b} \bb w\rng{r_a,r_b}}ww) }{}}
    .
    \od{\ods{\ods{\odh
    {\esbs{x\rng{l_a}\ba x\rng{r_a} }x
     \esbs{x\rng{l_b}\ba x\rng{r_b} }xx }}
    {}{\esbs{
          \odq{x\rng{l_a,l_b}\ba x\rng{r_a,l_b} }
            {}{\esbs{x\rng{l_a,l_b}\ba x\rng{r_a,l_b}}xx }{}\bb
          \odq{x\rng{l_a,r_b}\ba x\rng{r_a,r_b}}
            {}{\esbs{x\rng{l_a,r_b}\ba x\rng{r_a,r_b}}xx}{} }xx}{}}
    {}{\esbs{x\rng{l_a,l_b}\ba x\rng{r_a,l_b}}xx \bb
       \esbs{x\rng{l_a,r_b}\ba x\rng{r_a,r_b}}xx}{}}   
    {}
    )] }{}}
\]
\caption{An example of phase II of the p-simulation construction on the formula $z_3 \vlor (w \vlan x)$. All unlabelled derivations are only refactorings of explicit substitutions. The first part realises Lemma~\ref{lemma:phase2eversion} and the second part Lemma~\ref{lemma:phase2merge}.  }
\label{figure:examplephase2}
\end{sidewaysfigure}

\end{example}

\subsection{Additional material for Section~\ref{sec:guarded}}

  In Figure~\ref{fig:pythagoras} we give a ``semi-formalization'' of the proof of the Pythagoras theorem discussed in Section~\ref{sec:guarded}, using guarded substitutions.


\end{document}